\newif\iffullversion
\newcolumntype{d}[1]{D{.}{.}{#1}}
\spnewtheorem{simplification}{Simplification}{\bfseries}{\rmfamily}
\spnewtheorem{fact}{Fact}{\bfseries}{\rmfamily}
\spnewtheorem*{fact*}{Fact}{\bfseries}{\rmfamily}
\spnewtheorem*{theorem*}{Theorem}{\bfseries}{\rmfamily}
\def\@citecolor{blue}\def\@urlcolor{blue}\def\@linkcolor{blue}
\def\orcidID#1{\smash{\href{http://orcid.org/#1}{\protect\raisebox{-1.25pt}{\protect\includegraphics{orcid_color.eps}}}}}
\crefname{appendix}{Appendix}{Appendix}
\Crefname{appendix}{Appendix}{Appendix}
\crefname{section}{Sect.\@}{Sect.\@}
\Crefname{section}{Section}{Sections}
\crefname{figure}{Fig.\@}{Fig.\@}
\Crefname{figure}{Figure}{Figures}
\crefname{example}{Example}{Examples}
\Crefname{example}{Example}{Examples}
\crefname{lemma}{Lemma}{Lemmas}
\crefname{corollary}{Corollary}{Corollaries}
\crefname{theorem}{Theorem}{Theorems}
\crefname{fact}{Fact}{Facts} 
\newlength{\blanklineskip}
\def\qedmath{\relax\ifmmode
      \@badmath
   \else
      \begin{trivlist}\@beginparpenalty\predisplaypenalty
      \@endparpenalty\postdisplaypenalty
      \item[]\leavevmode
      \hbox to\linewidth\bgroup\relax\hfil$\displaystyle
      \bgroup
   \fi
}
\def\endqedmath{\relax\ifmmode
      \egroup $\hfil \egroup\\[-\baselineskip]
      \hbox{}\hfill\llap{\phantom{0}\qed}
      \end{trivlist}
   \else
      \@badmath
   \fi
}
\def\qedflmath{\relax \ifmmode \@badmath
  \else\begin{trivlist}\@beginparpenalty\predisplaypenalty
     \@endparpenalty\postdisplaypenalty
    \item[]\leavevmode \hb@xt@\linewidth\bgroup $\m@th\displaystyle \hskip\mathindent\bgroup
  \fi}
\def\endqedflmath{\relax\ifmmode
    \egroup $\egroup \\[-\baselineskip]\hskip\linewidth\llap{\qed}
    \end{trivlist}\else \@badmath
  \fi}
\newcommand{\todo}[1]{\textcolor{red}{\textbf{[TODO: #1]}}}
\newcommand{\mi}[1]{\ensuremath{\mathit{#1}}}
\newcommand{\mr}[1]{\ensuremath{\mathrm{#1}}}
\newcommand{\ms}[1]{\ensuremath{\mathsf{#1}}}
\newcommand{\mc}[1]{\ensuremath{\mathcal{#1}}}
\newcommand{\foreignWord}[1]{#1}
\newcommand{\eg}{\foreignWord{e.g.}\xspace}
\newcommand{\ie}{\foreignWord{i.e.}\xspace}
\newcommand{\etal}{\foreignWord{et al.}\xspace}
\newcommand{\Case}[1]{\textbf{Case}: #1.}
\newcommand{\CaseP}[1]{\paragraph{\normalfont\textbf{Case}: #1.}}
\newcommand{\Gl}{\lambda}
\newcommand{\Ga}{\alpha}
\newcommand{\Gb}{\beta}
\newcommand{\Gs}{\sigma}
\newcommand{\Gt}{\tau}
\newcommand{\Gth}{\theta}
\newcommand{\Gg}{\gamma}
\newcommand{\GG}{\Gamma}
\newcommand{\GD}{\Delta}
\newcommand{\Gp}{\pi}
\newcommand{\Gz}{\zeta}
\newcommand{\Gu}{\upsilon}
\newcommand{\ox}{\otimes}
\newenvironment{tarray}[2][c]{\settowidth{\dimen1}{${} = {}$}\setlength{\arraycolsep}{0.125\dimen1}
\hspace{-0.125\dimen1}\array[#1]{#2}\relax
}
{\endarray\hspace{-0.125\dimen1}
}
\newcommand{\bb}{\tarray{lllll}}
\newcommand{\bbc}{\tarray{c}}
\newcommand{\bbt}{\tarray[t]{lllll}}
\newcommand{\ee}{\endtarray}
\newcommand{\pto}[1]{\to_{#1}}
\newcommand{\To}[0]{\Rightarrow}
\newcommand{\From}{\Leftarrow}
\newcommand{\DOLLAR}{\ensuremath{\mathbin{\texttt{\$}}}}
\newcommand{\BIND}{\ensuremath{\mathbin{\texttt{>\!\!>\!=}}}}
\newcommand{\key}[1]{\ensuremath{\mathbf{#1}}}
\newcommand{\CASE}{\key{case}}
\newcommand{\A}{\;}
\newcommand{\OF}{\key{of}}
\newcommand{\LET}{\key{let}}
\newcommand{\IN}{\key{in}}
\newcommand{\shortequals}{{\rlap{\rule[0.14em]{.25em}{0.5pt}}{\rule[0.34em]{.25em}{0.5pt}}}
}
\newcommand{\Lub}{\sqcup}
\newcommand{\doubleequals}{\ensuremath{\mathrel{\shortequals\,\shortequals}}}
\newcommand{\ndoubleequals}{\ensuremath{\not\expandafter\doubleequals}}
\newcommand{\Empty}{{\varepsilon}}
\newcounter{pe@nocounter}
\newcommand{\peano}[1]{\setcounter{pe@nocounter}{#1}
\pe@no}
\newcommand{\pe@no}{\ifnum\value{pe@nocounter}>1
\addtocounter{pe@nocounter}{-1}\con{S} \A (\expandafter\pe@no)\else \ifnum\value{pe@nocounter}>0\con{S} \A \con{Z}\else \con{Z}\fi \fi }
\newcommand{\set}{\@ifstar\mysetStar\mysetNoStar}
\newcommand{\mysetStar}[1]{\{#1\}}
\newcommand{\mysetNoStar}[1]{\left\{#1\right\}}
\newcommand{\sem}{\@ifstar\mySemStar\mySemNoStar}
\newcommand{\mySemStar}[1]{{\llbracket #1 \rrbracket}}
\newcommand{\mySemNoStar}[1]{{\left\llbracket #1 \right\rrbracket}}
\newcommand{\dom}{\mathsf{dom}}
\newcommand{\fv}{\mathsf{fv}}
\newcommand{\rname}[1]{\text{\textsc{#1}}}
\newcommand{\ninfer}[3]{\infer[\!\!\rname{#1}]{#2}{#3}}
\newcommand{\con}[1]{\ms{#1}}
\newcommand{\var}[1]{\mi{#1}}
\newcommand{\DEq}{\vcentcolon\vcentcolon=}
\newcommand{\V}[1]{\overline{#1}}
\newcommand{\PROMPT}{\texttt{>}}
\newcommand{\mynewpar}[1]{{\parskip=0pt\parindent=0pt\par\vskip #1\noindent}}
\newcommand{\mycodestart}{\(}
\newcommand{\mycodeend}{\)}
\newcommand{\mycolumn}[1]{\column{#1}{@{}>{\mycodestart}l<{\mycodeend}@{}}}
\newcommand{\myprompt}[1]{\column{#1}{@{\PROMPT~{}}>{\mycodestart~}l<{\mycodeend}@{}}}
\newcommand{\declarecolumns}{\mycolumn{a}\mycolumn{b}\mycolumn{c}\mycolumn{d}\mycolumn{e}\mycolumn{f}\mycolumn{g}\mycolumn{h}\mycolumn{i}\mycolumn{j}\mycolumn{k}\mycolumn{l}\mycolumn{m}\mycolumn{n}\mycolumn{o}\mycolumn{p}\mycolumn{q}\mycolumn{r}\mycolumn{s}\mycolumn{t}\mycolumn{u}\mycolumn{v}\mycolumn{w}\mycolumn{x}\mycolumn{y}\mycolumn{z}\mycolumn{A}\mycolumn{B}\mycolumn{C}\mycolumn{D}\mycolumn{E}\mycolumn{F}\mycolumn{G}\mycolumn{H}\mycolumn{I}\mycolumn{J}\mycolumn{K}\mycolumn{L}\mycolumn{M}\mycolumn{N}\mycolumn{O}\mycolumn{P}\mycolumn{Q}\mycolumn{R}\mycolumn{S}\mycolumn{T}\mycolumn{U}\mycolumn{V}\mycolumn{W}\mycolumn{X}\mycolumn{Y}\mycolumn{Z}\mycolumn{L1}\mycolumn{L2}\mycolumn{L3}\mycolumn{L4}\mycolumn{L5}\mycolumn{L6}\mycolumn{W1}\mycolumn{W2}\mycolumn{W3}\mycolumn{W4}\mycolumn{W5}\mycolumn{R1}\mycolumn{R2}\mycolumn{R3}\mycolumn{R4}\mycolumn{R5}\myprompt{CL}}
\NewDocumentEnvironment{minicode}{O{t}m}{\minipage[#1]{#2}\code}{\endcode\endminipage}
\definecolor{col-rev}{rgb}{0.55,0.05,0.08}
\NewDocumentCommand\revdecorate{m}{{\color{col-rev}#1}}
\NewDocumentCommand\revmark{m}{{#1}^{\bullet}}
\NewDocumentCommand\REV{m}{\ifthenelse{\equal{#1}{}}{\revmark{\revdecorate{(-)}}}{\revmark{\revdecorate{#1}}}}
\newcommand{\Subst}[2]{#1 \mapsto #2}
\newcommand{\VSubst}[2]{\V{\Subst{#1}{#2}}}
\newcommand{\eqty}{\sim}
\newcommand{\Disjoint}[2]{\ensuremath{#1 \cap #2 = \emptyset}}
\definecolor{myyellow}{rgb}{1,1,0.3}
\definecolor{mygray}{gray}{0.88}
\newcommand{\mathhighlight}[1]{\colorbox{mygray}{$\displaystyle#1$}}
\newcommand{\ftv}{\mathsf{ftv}}
\newcommand{\fuv}{\mathsf{fuv}}
\newlength\vdashHeight
\newlength\headHeight
\newcommand{\vdashI}{\settoheight{\vdashHeight}{\(\vdash\)}\settoheight{\headHeight}{\scalebox{0.7}{\(\blacktriangleright\)}}\advance \vdashHeight by -\headHeight\relax \divide  \vdashHeight by 2\relax \mathrel{{\vdash}\mkern-5mu\raisebox{\vdashHeight}{\scalebox{0.7}{\(\blacktriangleright\)}}}}
\newcommand{\OTy}[4]{ #1 ; #2 \vdash_\mathrm{o} #3 : #4}
\newcommand{\NTy}[5]{#5 ; #1 ; #2 \vdash #3 : #4}
\newcommand{\WTy}[5]{#5 ; #1 ; #2 \vdash_\mr{wp} #3 : #4}
\newcommand{\NTyP}[2]{ #1 \vdash #2 }
\newcommand{\ITy}[5]{#1 \vdashI #3 : #4 \leadsto #2 ; #5 }
\newcommand{\ITyP}[2]{#1 \vdashI #2 }
\newcommand{\ISimpl}[4]{ #1 \vdashI_\mathrm{simp} #2 \leadsto #3; #4 } \newcommand{\Simpl}[4]{#1 \vdashI_\mathrm{simp} #2 \leadsto #3; #4 }
\newcommand{\SimplE}[5]{#1 \vdashI_\mathrm{simp}^{#5} #2 \leadsto #3; #4 }
\newcommand{\ofUse}{\mapsto}
\newcommand{\Used}[2]{{#1}^{#2}}
\newcommand{\rVar}{\rname{Var}}
\newcommand{\rAbs}{\rname{Abs}}
\newcommand{\rApp}{\rname{App}}
\newcommand{\rCon}{\rname{Con}}
\newcommand{\rCase}{\rname{Case}}
\newcommand{\rEmpty}{\rname{Empty}}
\newcommand{\rBind}{\rname{Bind}}
\newcommand{\rBindA}{\rname{BindA}}
\newcommand{\rUnify}{\rname{S-Uni}\xspace}
\newcommand{\rEntailEq}{\rname{S-Eq}\xspace}
\newcommand{\rEntail}{\rname{S-Entail}\xspace}
\newcommand{\rRem}{\rname{S-Rem}\xspace}
\newcommand{\Elim}[2]{\ms{elim}(\exists #1.#2)}
\begin{document}
\title{Modular Inference of Linear Types for Multiplicity-Annotated Arrows}
\author{Kazutaka Matsuda\inst{1}\iffullversion\else\orcidID{0000-0002-9747-4899}\fi
}

\institute{Tohoku University, Sendai 980-8579, Japan\\
\email{kztk@ecei.tohoku.ac.jp}
}

\authorrunning{K. Matsuda}

\maketitle              \begin{abstract}
Bernardy \etal [2018] proposed a linear type system $\lambda^q_\to$ as a core type system of Linear Haskell. 
In the system, linearity is represented by annotated arrow types $A \to_m B$, where $m$ denotes 
the multiplicity of the argument. Thanks to this representation, existing non-linear code typechecks as it is,
and newly written linear code can be used with existing non-linear code in many cases. 
However, little is known about the type inference of $\lambda^q_\to$. Although the Linear Haskell implementation is equipped with type inference, its algorithm has not been formalized,
and the implementation often fails to infer principal types, especially for higher-order functions. 
In this paper, based on \textsc{OutsideIn(X)} [Vytiniotis \etal, 2011], we propose an inference system for 
a rank 1 qualified-typed variant of $\lambda^q_\to$, which infers principal types. 
A technical challenge in this new setting is to deal with ambiguous types inferred by naive qualified typing. 
We address this ambiguity issue through quantifier elimination and demonstrate the effectiveness of the approach with examples. 
\keywords{Linear Types  \and Type Inference \and Qualified Typing.}
\end{abstract}

\section{Introduction}
\label{sec:intro}
\label{sec:introduction}

Linearity is a fundamental concept in computation and has many applications. 
For example, if a variable is known to be used only once, it can be freely inlined 
 without any performance regression~\cite{TurnerWM95}. 
In a similar manner, destructive updates are safe for such values without the risk of breaking
referential transparency~\cite{Wadler90LT}.
Moreover, linearity is useful for writing transformation on data that cannot be copied or discarded for various reasons, 
including reversible computation~\cite{Janus,YoAG11} and quantum computation~\cite{AltenkirchG05,SelingerV06}.
Another interesting application of linearity is that it helps to bound the complexity of programs~\cite{DBLP:journals/tcs/AehligBHS04,BaillotT09,DBLP:journals/tcs/GirardSS92}

Linear type systems use types to enforce linearity. One way to design a linear type system is based on Curry-Howard isomorphism to linear logic.
For example, in Wadler~\cite{Wadler93}'s type system, 
functions are linear in the sense that their arguments are used exactly once, and 
any exception to this must be marked by the type operator ($!$). 
Such an approach is theoretically elegant but cumbersome in programming; a program usually contains both linear and unrestricted code, 
and many manipulations concerning ($!$) are required in the latter and around the interface between the two.
Thus, there have been several proposed approaches for more practical linear type systems~\cite{MaZZ10,Morris16,TovP11,BeBNJS18}. 

Among these approaches, a system called $\Gl^q_\to$, the core type system of Linear Haskell, 
stands out for its ability to have linear code in large unrestricted code bases~\cite{BeBNJS18}. 
With it, existing unrestricted code in Haskell typechecks in Linear Haskell without modification, and if one desires, 
some of the unrestricted code can be replaced with linear code, again without any special programming effort. 
For example, one can use the function $\var{append}$ in an unrestricted context as 
$\Gl x. \var{tail} \A (\var{append} \A x \A x)$, regardless of 
whether $\var{append}$ is a linear or unrestricted function. 
This is made possible by their representation of linearity. Specifically, they annotate function type
with its argument's multiplicity (``linearity via arrows''~\cite{BeBNJS18}) as $A \pto{m} B$, where $m = 1$ means that the function of the type uses its argument linearly, and 
$m = \omega$ means that there is no restriction in the use of the argument, which includes all non-linear standard Haskell code. 
In this system, linear functions can be used in an unrestricted context if 
their arguments are unrestricted. 
Thus, there is no problem in using 
$\var{append} : \con{List} \A A \pto{1} \con{List} \A A \pto{1} \con{List} \A A$ as above, provided that $x$ is unrestricted.
This {promotion} of linear expressions to unrestricted ones is difficult 
in other approaches~\cite{MaZZ10,Morris16,TovP11} (at least in the absence of bounded kind-polymorphism), where linearity is a property of a type (called ``linearity via kinds'' in \cite{BeBNJS18}).

However, as far as we are aware, little is known about \emph{type inference} for $\Gl^q_\to$. 
It is true that Linear Haskell is implemented as a fork\footnote{\url{https://github.com/tweag/ghc/tree/linear-types}} of the Glasgow Haskell Compiler (GHC), which of course comes with type inference.
However, the algorithm has not been formalized and has limitations due to a lack of proper handling of multiplicity constraints. 
Indeed, Linear Haskell gives up handling complex constraints on multiplicities such as those with multiplications $p \cdot q$; 
as a result, Linear Haskell sometimes fails to infer principal types, especially for higher-order functions.\footnote{Confirmed for commit \texttt{1c80dcb424e1401f32bf7436290dd698c739d906} at May 14, 2019.
\iffullversion See \cref{sec:limitation-LinearHaskell} for more details. \fi}
This limits the reusability of code. For example, 
Linear Haskell cannot infer an appropriate type for function composition to allow it to compose both linear and unrestricted functions.

A classical approach to have both separated constraint solving that works well with the usual unification-based typing 
and principal typing (for a rank 1 fragment) is qualified typing~\cite{QualifiedTypes}.
In qualified typing, constraints on multiplicities are collected, and then a type is qualified with it to obtain a principal type. 
Complex multiplicities are not a problem in unification as they are handled by a constraint solver.
For example, consider $\var{app} = \Gl f. \Gl x. f \A x$. 
Suppose that $f$ has type $a \pto{p} b$, and $x$ has type $a$ (here we focus only on multiplicities). 
Let us write the multiplicities of $f$ and $x$ as $p_f$ and $p_x$, respectively. 
Since $x$ is passed to $f$, there is a constraint that the multiplicity $p_x$ of $x$ must be $\omega$ if the multiplicity $p$ of the $f$'s argument also is.
In other words, $p_x$ must be no less than $p$, which is represented by inequality $p \le p_x$ under the ordering $1 \le \omega$. 
(We could represent the constraint as an equality $p_x = p \cdot p_x$, but using inequality is simpler here.)
For the multiplicity $p_f$ of $f$, there is no restriction because $f$ is used exactly once; linear use is always legitimate 
even when $p_f = \omega$.
As a result, we obtain the inferred type $\forall p\,p_f\,p_x\,a\,b.\: p \le p_x \To (a \pto{p} b) \pto{p_f} a \pto{p_x} b$ for $\var{app}$. 
This type is a principal one; it is intuitively because only the constraints that are needed for typing $\Gl f. \Gl x. f \A x$ are gathered.
Having separate constraint solving phases itself is rather common in the context of linear typing~\cite{TurnerWM95,WansbroughJ99,Mogensen97,IgarashiK00,GhicaS14,BaillotH10,BaillotT05,Morris16,GanTM15}.
Qualified typing makes the constraint solving phase local and gives the principal typing property that makes typing modular. 
In particular, in the context of linearity via kinds, 
qualified typing is proven to be effective~\cite{Morris16,GanTM15}.

As qualified typing is useful in the context of linearity via kinds, one may expect that it also works well 
for linearity via arrows such as $\Gl^q_\to$.
However, naive qualified typing turns out to be impractical for $\Gl^q_\to$ because it tends to infer ambiguous types~\cite{QualifiedTypes,StuckeyS05}.
As a demonstration, consider a slightly different version of $\var{app}$
defined as $\var{app}' = \Gl f. \Gl x. \var{app} \A f \A x$. Standard qualified typing~\cite{QualifiedTypes,VytiniotisJSS11} infers the type 
\[
\forall q\,q_f\,q_x\,p_f\,p_x\,a\,b.\: (q \le q_x \wedge q_f \le p_f \wedge q_x \le p_x) \To (a \pto{q} b) \pto{p_f} a \pto{p_x} b
\]
by the following steps:
\begin{itemize}
 \item The polymorphic type of $\var{app}$ is instantiated to $(a \pto{q} b) \pto{q_f} a \pto{q_x} b$ and 
       yields a constraint $q \le q_x$ (again we focus only on multiplicity constraints).
 \item Since $f$ is used as the first argument of $\var{app}$, $f$ must have type $a \pto{q} b$. 
   Also, since the multiplicity of $\var{app}$'s first argument is $q_f$, there is a restriction on the multiplicity of $f$, say $p_f$, 
   that $q_f \le p_f$. 
 \item Similarly, since $x$ is used as the second argument of $\var{app}$, $x$ must have type $a$,
   and there is a constraint on the multiplicity of $x$, say $p_x$, that $q_x \le p_x$. 
\end{itemize}
This inference is unsatisfactory, as 
the inferred type leaks internal details and is ambiguous~\cite{QualifiedTypes,StuckeyS05}
in the sense that one cannot determine $q_f$ and $q_x$ from an instantiation of $(a \pto{q} b) \pto{p_f} a \pto{p_x} b$.
Due to this ambiguity, the types of $\var{app}$ and $\var{app}$' are not judged as equivalent; 
in fact, the standard qualified typing algorithms~\cite{VytiniotisJSS11,QualifiedTypes} reject
$\var{app}' : \forall p\,p_f\,p_x\,a\,b.\: p \le p_x \To (a \pto{p} b) \pto{p_f} a \pto{p_x} b$. 
We conjecture that the issue of inferring ambiguous types is intrinsic to linearity via arrows 
because of the separation of multiplicities and types, unlike the case of linearity via kinds, where multiplicities are always associated with types. 
Simple solutions such as rejecting ambiguous types are not desirable as this case appears very often. Defaulting ambiguous variables (such as $q_f$ and $q_x$) to $1$ or $\omega$ is not a solution either because it loses principality in general.

In this paper, we propose a type inference method for a rank 1 qualified-typed variant of $\Gl^q_\to$, in which the ambiguity issue is 
addressed without compromising principality. Our type inference system is built on top of 
\textsc{OutsideIn(X)}~\cite{VytiniotisJSS11}, an inference system for qualified types used in GHC, which can handle 
local assumptions to support $\LET$, existential types, and GADTs. 
An advantage of using \textsc{OutsideIn(X)} is that it is parameterized over theory \textsc{X} of constraints.
Thus, applying it to linear typing boils down to choosing an appropriate \textsc{X}. 
We choose \textsc{X} carefully so that the representation of constraints is closed under quantifier elimination, which is the key to addressing the ambiguity issue.
Specifically, in this paper: 
\begin{itemize} 
 \item We present a qualified typing variant of a rank-1 fragment of $\Gl^q_\to$ without local definitions, 
   in which manipulation of multiplicities is separated from the standard unification-based typing (\cref{sec:lang}).
\item 
 We give an inference method for the system based on gathering constraints and solving them afterward~(\cref{sec:inference}).
 This step is mostly standard, except that we solve multiplicity constraints in time polynomial in their sizes.
\item We address the ambiguity issue by quantifier elimination 
       under the assumption that multiplicities do not affect runtime behavior
       (\cref{sec:defaulting}).
\item We extend our technique to local assumptions (\cref{sec:extensions}), 
   which enables $\LET$ and GADTs, by showing that the disambiguation in \cref{sec:defaulting} is compatible with \textsc{OutsideIn(X)}. 
\item We report experimental results using our proof-of-concept implementation (\cref{sec:evaluation}). 
The experiments show that the system can infer unambiguous principal types for selected functions from Haskell's \texttt{Prelude}, 
   and performs well with acceptable overhead.
\end{itemize}
Finally, we discuss related work~(\cref{sec:related}) and then conclude the paper (\cref{sec:conclusion}).
The prototype implementation is available as a part of a reversible programming system \textsc{Sparcl}, available from {{\url{https://bitbucket.org/kztk/partially-reversible-lang-impl/}}}.
Due to space limitation, we omit some proofs from this paper, which can be found in the full version~\cite{fullversion}.

 \section{Qualified-Typed Variant of $\Gl^q_\to$}
\label{sec:lang}
\label{sec:syntax}

In this section, we introduce a qualified-typed~\cite{QualifiedTypes} variant of $\Gl^q_\to$~\cite{BeBNJS18} for its rank~$1$ fragment, on which we base our type inference. 
Notable differences to the original $\Gl^q_\to$ include: (1) multiplicity abstractions and multiplicity applications
are implicit (as type abstractions and type applications), (2) this variant uses qualified typing~\cite{QualifiedTypes},
(3) conditions on multiplicities 
are inequality based~\cite{LinearMiniCore}, which gives better handling of multiplicity variables, and (4) local definitions are excluded 
as we postpone the discussions to \cref{sec:extensions} due to their issues in the handling of local assumptions in qualified typing~\cite{VytiniotisJSS11}.

\subsection{Syntax of Programs}
Programs and expressions, which will be typechecked, are given below. 
\[
\bb
\mi{prog} &\DEq& \mi{\mi{bind}}_1 ; \dots  ;\mi{bind}_n \\
\mi{bind}  &\DEq& f = e \mid f : A = e \\
e &\DEq& x \mid \lambda x.e \mid e_1 \A e_2  \mid \con{C} \A \V{e} \mid \CASE~e_0~\OF~\{ \con{C}_i \A \V{x_i} \to e_i \}_i 
\ee
\]
A program is a sequence of bindings with or without type annotations, where bound variables can appear in following bindings. As mentioned at the beginning of this section, we shall postpone the discussions of local bindings (\ie, $\LET$) to \cref{sec:extensions}.
Expressions consist of variables $x$, applications $e_1 \A e_2$, $\lambda$-abstractions $\Gl x.e$, 
constructor applications $\con{C} \A \V{e}$, and (shallow) pattern matching $\CASE~e_0~\OF~\{ \con{C}_i \A \V{x_i} \to e_i \}_i$.
For simplicity, we assume that constructors are fully-applied and patterns are shallow.
As usual, patterns $\con{C}_i \A \V{x_i}$ must be linear in the sense that each variable in $\V{x_i}$ is 
different. Programs are assumed to be appropriately $\Ga$-renamed so that newly introduced variables by $\Gl$ and patterns 
are always fresh. 
We do not require the patterns of a $\CASE$ expression to be exhaustive or no overlapping, following the original $\Gl^q_\to$~\cite{BeBNJS18};
the linearity in $\Gl^q_\to$ cares only for successful computations.
Unlike the original $\Gl^q_\to$, we do not annotate $\Gl$ and $\CASE$ with the multiplicity of the argument and the scrutinee, respectively.

Constructors play an important role in $\Gl^q_\to$. 
As we will see later, they can be used to witness unrestrictedness, similarly to $!$ of $!e$ in a linear type system~\cite{Wadler93}.

\subsection{Types}

\begin{figure}[t]
\[
\bbt
A,B    &\DEq& \forall \V{p}\V{a}. Q \To \tau  & \text{(polytypes)} \\
 \Gs, \tau &\DEq& a \mid \con{D} \A \V{\mu} \A\V{\Gt} \mid \Gs \to_\mu \tau  & \text{(monotypes)} \\
 \mu    &\DEq& p \mid 1 \mid \omega  & \text{(multiplicities)} 
\ee\quad
\bbt
Q    &\DEq& \bigwedge_i \phi_i & \text{(constraints)}\\
\phi  &\DEq& M \le M' & \text{(predicates)}\\
 M,N    &\DEq& \prod_i \mu_i & \text{(multiplications)}
\ee
\]
\caption{Types and related notions: $a$ and $p$ are type and multiplicity variables, respectively, and $\con{D}$ represents a type constructor.}
\label{fig:types}
\end{figure}

Types and related notations are defined in \cref{fig:types}. 
Types are separated into monotypes and polytypes (or, type schemes).
Monotypes consist of (rigid) type variables $a$, datatypes $\con{D} \A \V{\mu} \A \V{\Gt}$, and
multiplicity-annotated function types $\tau_1 \to_\mu \tau_2$. 
Here, a multiplicity $\mu$ is either $1$ (linear), $\omega$ (unrestricted), or a (rigid) multiplicity variable $p$. 
Polytypes have the form $\forall \V{p}\V{a}.Q \To \Gt$, where $Q$ is a constraint that is a conjunction of predicates. 
A predicate $\phi$ has the form of $M \le M'$, where $M'$ and $M$ are
multiplications of multiplicities. 
We shall sometimes treat $Q$ as a set of predicates, which means that we shall rewrite $Q$ according to contexts
by the idempotent commutative monoid laws of $\wedge$. 
We call both multiplicity ($p$) and type ($a$) variables \emph{type-level variables}, and 
write $\ftv(\V{t})$ for the set of free type-level variables in syntactic objects (such as types and constraints) $\V{t}$.

The relation $(\le)$ and operator $(\cdot)$ in predicates denote the corresponding relation and operator on $\set{1, \omega}$, respectively.
On $\set{1,\omega}$, $(\le)$ is defined as the reflexive closure of $1 \le \omega$; 
note that $(\set{1,\omega},\le)$ forms a total order. 
Multiplication $(\cdot)$ on $\set{1, \omega}$ is defined by
\[
  1 \cdot m = m \cdot 1 = m \qquad \omega \cdot m = m \cdot \omega = \omega\text{.}
\]
For simplicity, we shall sometimes omit $(\cdot)$ and write $m_1 m_2$ for $m_1 \cdot m_2$. 
Note that, for $m_1,m_2 \in \set{1,\omega}$, $m_1 \cdot m_2$ is the least upper bound of $m_1$ and $m_2$ with respect to $\le$. 
As a result, $m_1 \cdot m_2 \le m$ holds if and only if $(m_1 \le m) \wedge (m_2 \le m)$ holds;
we will use this property for efficient handling of constraints~(\cref{sec:entailment-checking}).

We assume a fixed set of constructors given beforehand. Each constructor is assigned a type of the form 
\(
 \forall \V{p}\V{a}. \: \Gt_1 \pto{\mu_1} \dots \pto{\mu_{n_1}} \Gt_n \pto{\mu_n} \con{D} \A \V{p} \A \V{a}
\)
where each $\Gt_i$ and $\mu_i$ do not contain free type-level variables other than $\{\V{p}\V{a}\}$, \ie, $\bigcup_i \ftv(\Gt_i,\mu_i) \subseteq \set{\V{p}\V{a}}$. For simplicity, 
we write the above type as $\forall \V{p}\V{a}.\: \V{\Gt} \pto{\V{\mu}} \con{D} \A \V{p} \A \V{a}$. 
We assume that types are {well-kinded}, which effectively means that $\con{D}$ is 
applied to the same numbers of multiplicity arguments and type arguments among the constructor types. 
Usually, it suffices to use constructors of linear function types as below
because they can be used in both linear and unrestricted code. 
\begin{gather*}
(-,-) : \forall a\,b.\: a \pto{1} b \pto{1} a \ox b \\
\con{Nil} : \forall a.\:\con{List} \A a \qquad \con{Cons} : \forall a.\:a \pto{1} \con{List} \A a \pto{1} \con{List} \A a 
\end{gather*}

In general, constructors can encapsulate arguments' multiplicities as below, 
which is useful when a function returns both linear and unrestricted results. 
\[
  \con{MkUn} : \forall a.\: a \pto{\omega} \con{Un} \A a \qquad 
  \con{MkMany} : \forall p\,a.\: a \pto{p} \con{Many} \A p \A a
\]
For example, a function that reads a value from a mutable array at a given index can be given as a primitive of type
$\var{readMArray} : \forall a.\: \con{MArray} \A a \pto{1} \con{Int} \pto{\omega} (\con{MArray} \A a \ox \con{Un} \A a)$~\cite{BeBNJS18}.
Multiplicity-parameterized constructors become useful when the multiplicity of contents can vary. For example, the type $\con{IO}_\mr{L} \A p \A a$ with the constructor
$\con{MkIO}_\mr{L} : (\con{World} \pto{1} (\con{World} \ox \con{Many} \A p \A a)) \pto{1} \con{IO}_\mr{L} \A p \A a$ can represent the IO monad~\cite{BeBNJS18} with methods $\var{return} : \forall p\,a.\: \var{a} \pto{p} \con{IO}_\mr{L} \A p \A a$ and $(\BIND) : \forall p\,q\,a\,b.\: \con{IO}_\mr{L} \A p \A a \pto{1} (a \pto{p} \con{IO}_\mr{L} \A q \A b) \pto{1} \con{IO}_\mr{L} \A q \A b$.

\subsection{Typing Rules}

Our type system uses two sorts of environments
A \emph{typing environment} maps variables into polytypes (as usual in non-linear calculi), 
and a \emph{multiplicity environment} maps variables into multiplications of multiplicities.
This separation of the two will be convenient when we discuss type inference. 
As usual, we write $x_1:A_1,\dots,x_n:A_n$ instead of $\set{x_1 \mapsto A_1,\dots, x_n \mapsto A_n}$ for typing environments.
For multiplicity environments, we use multiset-like notation as $\Used{x_1}{M_1},\dots, \Used{x_n}{M_n}$.

We use the following operations on multiplicity environments:\footnote{
In these definitions, we implicitly consider multiplicity $0$ and regard $\GD(x) = 0$ if $x \not\in \dom(\GD)$. 
It is natural that $0 + m = m + 0$. With $0$, multiplication $\cdot$, which is extended as $0 \cdot m = m \cdot 0 = 0$, 
no longer computes the least upper bound. Therefore, we use $\Lub$ for the last definition; in fact, the definition
corresponds to the pointwise computation of $\GD_1(x) \Lub \GD_2(x)$, where $\le$ is extended as $0 \le \omega$ but not $0 \le 1$.
This treatment of $0$ coincides with that in the Linear Haskell proposal~\cite{LinearHaskellProposal}.
}
\begin{align*}
(\GD_1 + \GD_2)(x) &= 
\begin{cases}
 \omega   & \text{if}~x \in \dom(\GD_1) \cap \dom(\GD_2) \\
 \GD_i(x) & \text{if}~x \in \dom(\GD_i) \setminus \dom(\GD_j) ~(i \ne j \in \{1,2\}) \\
\end{cases}
\\
 (\mu \GD)(x) &= \mu \cdot \GD(x)
\\
(\GD_1 \Lub \GD_2)(x) &= 
 \begin{cases}
  \GD_1(x) \cdot \GD_2(x) & \text{if}~x \in \dom(\GD_1) \cap \dom(\GD_2) \\ 
  \omega                 & \text{if}~x \in \dom(\GD_i) \setminus \dom(\GD_j)~(i \ne j \in \{1,2\})
 \end{cases}
\end{align*}
Intuitively, $\GD(x)$ represents the number of uses of $x$.
So, in the definition of $\GD_1 + \GD_2$, we have $(\GD_1 + \GD_2)(x) = \omega$ if 
$x \in \dom(\GD_1) \cap \dom(\GD_2)$ because this condition means that $x$ is used in two places. 
Operation $\GD_1 \Lub \GD_2$ is used for $\CASE$ branches. 
Suppose that a branch $e_1$ uses variables as $\GD_1$ and another branch $e_2$ uses variables as $\GD_2$.
Then, putting the branches together, variables are used as $\GD_1 \Lub \GD_2$. 
The definition says that $x$ is considered to be used linearly in the two branches put together if and only if 
both branches use $x$ linearly, where non-linear use includes unrestricted use ($\GD_i(x) = \omega$) and non-use ($x \not\in \dom(\GD)$).

We write $Q \models Q'$ if $Q$ logically entails $Q'$. That is, for any valuation of multiplicity variables $\Gth(p) \in \{1, \omega\}$, $Q'\theta$ holds if $Q \theta$ does. For example, we have $p \le r \wedge r \le q \models p \le q$. 
We extend the notation to multiplicity environments and write $Q \models \GD_1 \le \GD_2$ if $\dom(\GD_1) \subseteq \dom(\GD_2)$ and $Q \models \bigwedge_{x \in \dom(\GD)} \GD_1(x) \le \GD_2(x) \wedge \bigwedge_{x \in\dom(\GD_2)\setminus\dom(\GD_1)} \omega \le \GD_2(x)$ hold. 
We also write $Q \models \GD_1 = \GD_2$ if both $Q \models \GD_1 \le \GD_2$ and $Q \models \GD_2 \le \GD_1$ hold. 
We then have the following properties.
\begin{lemma}\rm
Suppose $Q \models \GD \le \GD'$ and $Q \models \GD = \GD_1+\GD_2$.
Then, there are some $\GD'_1$ and $\GD'_2$ such that $Q \models \GD' = \GD'_1 + \GD'_2$, $Q \models \GD_1 \le \GD'_1$ and $Q \models \GD_2 \le \GD'_2$. \qed 
\end{lemma}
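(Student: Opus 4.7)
The plan is to construct the desired split pointwise over $\dom(\Delta')$. I would first unfold the hypotheses: the definition of $Q \models \Delta = \Delta_1 + \Delta_2$ together with the absorbing behavior of $\omega$ yields $\dom(\Delta) = \dom(\Delta_1) \cup \dom(\Delta_2)$, and $Q \models \Delta \le \Delta'$ gives both $\dom(\Delta) \subseteq \dom(\Delta')$ and, for any $x \in \dom(\Delta') \setminus \dom(\Delta)$, $Q \models \omega \le \Delta'(x)$.

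For each $x \in \dom(\Delta')$ I would then split into four cases according to which of $\dom(\Delta_1), \dom(\Delta_2)$ contain $x$. If $x$ lies in both, then $(\Delta_1 + \Delta_2)(x) = \omega$ forces $Q \models \Delta'(x) = \omega$, so I set $\Delta_1'(x) = \Delta_2'(x) = \omega$. If $x$ lies only in $\dom(\Delta_1)$, I set $\Delta_1'(x) = \Delta'(x)$ and omit $x$ from $\Delta_2'$; the case of $x$ only in $\dom(\Delta_2)$ is symmetric. Finally, if $x$ lies in neither (so $x \in \dom(\Delta') \setminus \dom(\Delta)$), I put $x$ only in $\Delta_1'$ with $\Delta_1'(x) = \Delta'(x)$, which is already forced to $\omega$ under $Q$ by the previous paragraph.

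Verifying the three conclusions is then a mechanical case split. For $Q \models \Delta' = \Delta_1' + \Delta_2'$, the double-domain case relies on $\omega + \omega = \omega$ and the others are immediate from the construction. For $Q \models \Delta_i \le \Delta_i'$, the domain inclusions $\dom(\Delta_i) \subseteq \dom(\Delta_i')$ hold by construction; the pointwise bound reduces in the non-trivial case to $\Delta_i(x) = \Delta(x) \le \Delta'(x) = \Delta_i'(x)$; and the residual condition $Q \models \omega \le \Delta_i'(x)$ for $x \in \dom(\Delta_i') \setminus \dom(\Delta_i)$ holds because in every such instance $\Delta_i'(x)$ is either definitionally $\omega$ or forced to $\omega$ by $\Delta'$.

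The only subtlety --- and arguably the main obstacle --- is keeping the asymmetric clause of the definition of $\le$ in mind: whenever a variable belongs to the right-hand domain but not the left, the right-hand value must be $\omega$ under $Q$. The case analysis above is designed so that every new membership introduced into $\Delta_i'$ comes with value $\omega$; beyond this, the proof is pure bookkeeping in the two-element multiplicity semilattice $\{1,\omega\}$.
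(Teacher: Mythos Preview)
Your construction is correct: the four-way case split on membership in $\dom(\Delta_1)$ and $\dom(\Delta_2)$ does the job, and you have correctly identified the one subtle point, namely that every variable newly introduced into $\dom(\Delta_i')$ must carry a value that is $\omega$ under $Q$. The verification goes through exactly as you describe.

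As for comparison: the paper does not actually supply a proof of this lemma---the \qed{} is placed immediately after the statement---so there is nothing to compare against. Your argument is the natural direct construction and is what the authors presumably had in mind.
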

\begin{lemma}\rm 
$Q \models \mu \GD \le \GD'$ implies $Q \models \GD \le \GD'$. \qed
\end{lemma}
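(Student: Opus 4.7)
The plan is to unpack the definition of $Q \models \GD_1 \le \GD_2$ and reduce the claim to a pointwise fact about multiplications, leveraging the characterization of $(\cdot)$ as the least upper bound on $\set{1,\omega\}$.

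First I would observe that $\dom(\mu\GD) = \dom(\GD)$: by the definition of $\mu\GD$, we have $(\mu\GD)(x) = \mu \cdot \GD(x)$, which is well-defined exactly when $\GD(x)$ is, since $\mu \in \set{p,1,\omega}$ is never the absent/zero value. So from $Q \models \mu\GD \le \GD'$ the domain condition $\dom(\mu\GD) \subseteq \dom(\GD')$ immediately gives $\dom(\GD) \subseteq \dom(\GD')$. Likewise, for every $x \in \dom(\GD') \setminus \dom(\GD) = \dom(\GD') \setminus \dom(\mu\GD)$, the hypothesis already supplies $Q \models \omega \le \GD'(x)$, which is exactly what $Q \models \GD \le \GD'$ demands on this part of the domain.

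The core of the proof is the pointwise inequality for $x \in \dom(\GD)$. The hypothesis gives $Q \models \mu \cdot \GD(x) \le \GD'(x)$, and I want $Q \models \GD(x) \le \GD'(x)$. The key observation, noted just before the lemma, is that for any $m_1, m_2 \in \set{1,\omega}$, the product $m_1 \cdot m_2$ is the least upper bound of $m_1$ and $m_2$ with respect to $\le$. As a consequence, for any multiplication $M$ and any multiplicity $\mu$, the inequality $M \le \mu \cdot M$ holds under every valuation of multiplicity variables, i.e.\ is a tautology, so $\emptyset \models M \le \mu \cdot M$ and hence $Q \models M \le \mu \cdot M$. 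Applying this with $M = \GD(x)$ and composing with the hypothesis by transitivity of $\le$ (which $\models$ clearly respects) yields $Q \models \GD(x) \le \GD'(x)$.

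Combining the domain inclusion with the two pointwise statements gives the conclusion $Q \models \GD \le \GD'$. There is no real obstacle here; the only subtle point is the one about the domain of $\mu\GD$, which hinges on the fact that the syntax of multiplicities does not include $0$, so scaling by $\mu$ cannot erase a variable from the environment. Once this is flagged, the rest is a direct application of the LUB characterization and transitivity.
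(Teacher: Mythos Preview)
Your argument is correct: the domain observation, the handling of $x \in \dom(\GD') \setminus \dom(\GD)$, and the pointwise use of $\GD(x) \le \mu \cdot \GD(x)$ (via the least-upper-bound characterization of $\cdot$) followed by transitivity are exactly what is needed. The paper gives no proof at all for this lemma---it is stated with an immediate \qed---so your write-up is simply the natural fleshing-out of what the authors left implicit.
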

\begin{lemma}\rm
$Q \models \GD_1 \Lub \GD_2 \le \GD'$ implies $Q \models \GD_1 \le \GD'$ and $Q \models \GD_2 \le \GD'$.\qed
\end{lemma}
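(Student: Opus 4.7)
The plan is to unfold the definition of $Q \models \GD \le \GD'$ and argue pointwise, using the key algebraic fact highlighted earlier in the text that $(\cdot)$ computes the least upper bound on $\{1,\omega\}$, so that $Q \models m_1 \cdot m_2 \le m$ is equivalent to $Q \models m_1 \le m \wedge m_2 \le m$. By symmetry of $\Lub$ it suffices to establish $Q \models \GD_1 \le \GD'$; the argument for $\GD_2$ is identical.

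First I would check the domain condition. Since $\dom(\GD_1 \Lub \GD_2) = \dom(\GD_1) \cup \dom(\GD_2)$ by definition, the hypothesis gives $\dom(\GD_1) \cup \dom(\GD_2) \subseteq \dom(\GD')$, and in particular $\dom(\GD_1) \subseteq \dom(\GD')$.

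Next I would verify the pointwise inequalities required by the definition. For $x \in \dom(\GD_1)$ I split two subcases. If $x \in \dom(\GD_2)$ as well, then $(\GD_1 \Lub \GD_2)(x) = \GD_1(x) \cdot \GD_2(x)$, and the hypothesis gives $Q \models \GD_1(x) \cdot \GD_2(x) \le \GD'(x)$; by the lub property this immediately yields $Q \models \GD_1(x) \le \GD'(x)$. If instead $x \in \dom(\GD_1) \setminus \dom(\GD_2)$, then $(\GD_1 \Lub \GD_2)(x) = \omega$, so the hypothesis gives $Q \models \omega \le \GD'(x)$, forcing $\GD'(x)$ to evaluate to $\omega$ under every $Q$-valuation, whence trivially $Q \models \GD_1(x) \le \GD'(x)$. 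Finally, for $x \in \dom(\GD') \setminus \dom(\GD_1)$ I need $Q \models \omega \le \GD'(x)$: if $x \in \dom(\GD_2)$ then $(\GD_1 \Lub \GD_2)(x) = \omega$ and the hypothesis gives this directly; if $x \notin \dom(\GD_2)$ either, then $x \in \dom(\GD') \setminus \dom(\GD_1 \Lub \GD_2)$ and the second conjunct in the definition of the hypothesis gives it immediately.

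The argument is entirely routine and I do not anticipate a real obstacle; the only point that requires some care is the asymmetric third clause in the definition of $Q \models \GD \le \GD'$, where one must justify $Q \models \omega \le \GD'(x)$ for variables $x \in \dom(\GD') \setminus \dom(\GD_1)$ that happen to lie in $\dom(\GD_2)$. Handling that clause is what makes explicit use of the ``$\omega$ fallback'' in the second clause of the definition of $\Lub$ on unmatched variables.
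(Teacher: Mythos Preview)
Your proof is correct and complete; the case analysis is exactly the routine definitional unfolding one would expect, and the paper itself offers no explicit proof (the lemma is stated with an inline \qed), so there is nothing to compare against beyond confirming that your argument is the intended one.
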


Constraints $Q$ affect type equality; for example, under $Q = p \le q \wedge q \le p$, $\Gs \pto{p} \Gt$ and $\Gs \pto{q} \Gt$ become equivalent. 
Formally, we write $Q \models \Gt \eqty \Gt'$ if $\Gt\Gth = \Gt'\Gth$ for any valuation $\Gth$ of multiplicity variables that makes $Q\Gth$ true.

\begin{figure}[t]\small
\setlength{\jot}{1.4ex}
\setlength{\abovedisplayskip}{0pt}
\setlength{\belowdisplayskip}{0pt}
\begin{gather*}
\ninfer{Eq}
{
\NTy{\GG}{\GD}{e}{\Gt}{Q}
}
{
\bbc
\NTy{\GG}{\GD'}{e}{\Gt'}{Q} 
\\
Q \models \GD = \GD' \quad Q \models \Gt \eqty \Gt'
\ee
}
\qquad 
\ninfer{\rVar}
{
\NTy{\GG}{\GD}{x}{\Gt[\V{\Subst{p}{\mu}}, \V{\Subst{a}{\Gt}}]}{Q}
}
{
\bbc
\GG(x) = \forall \V{p} \V{a}. Q' \To \Gt \\
Q \models Q'[\V{\Subst{p}{\mu}}]
\quad
Q \models \Used{x}{1} \le \GD 
\ee
}
\\
\ninfer{\rAbs}
{
\NTy{\GG}{\GD}{\Gl x. e}{\Gs \pto{\mu} \Gt}{Q}
}
{
\NTy{\GG,x:\Gs}{\GD, \Used{x}{\mu}}{e}{\Gt}{Q}
}
\qquad
\ninfer{\rApp}
{
\NTy{\GG}{\GD_1 + \mu\GD_2}{e_1 \A e_2}{\Gt}{Q} 
}
{
\NTy{\GG}{\GD_1}{e_1}{\Gs \pto{\mu} \Gt}{Q}
&
\NTy{\GG}{\GD_2}{e_2}{\Gs}{Q}
}
\\
\ninfer{\rCon}
{
\NTy{\GG}{\omega\GD_0 + \sum_i \nu_i[\V{\Subst{p}{\mu}}] \GD_i}{\con{C} \A \V{e}}{\con{D} \A \V{\mu} \A \V{\Gs}}{Q}
}
{
 \con{C} : \forall \V{p} \V{a}.\: \V{\Gt} \pto{\V{\nu}} \con{D} \A \V{p} \A \V{a} &
\{ \NTy{\GG}{\GD_i}{e_i}{\Gt_i[ \V{\Subst{p}{\mu}}, \V{\Subst{a}{\Gs}}] }{Q} \}_i
}
\\
\ninfer{\rCase}
{
  \NTy{\GG}{\mu_0 \GD_0 + \bigsqcup_i \GD_i}{\CASE~e_0~\OF~\{ \con{C}_i \A \V{x_i} \to e_i \}_i}{\Gt'}{Q}
}
{
  \bb
  \NTy{\GG}{\GD_0}{e_0}{\con{D} \A \V{\mu} \A \V{\Gs}}{Q} \\\left\{
  \bb
   \con{C}_i : \forall \V{p}\V{a}. \: \V{\Gt_i} \pto{\V{\nu_i}} \con{D} \A \V{p} \A \V{a} \\
  \NTy{\GG,\V{x_i:\Gt_i[\V{\Subst{p}{\mu}},\V{\Subst{a}{\Gs}}]}}{\GD_i, \V{ \Used{x_i}{\mu_0 \nu_i[\V{\Subst{p}{\mu}}]} }}{e_i}{\Gt'}{Q}
\ee
  \right\}_i
\ee
}
\end{gather*}
\caption{Typing relation for expressions}
\label{fig:typing}
\end{figure}

\begin{figure}[t]
\small
\setlength{\jot}{1.4ex}
\setlength{\abovedisplayskip}{0pt}
\setlength{\belowdisplayskip}{0pt}
\begin{gather*}
\ninfer{\rEmpty}
{
\NTyP{ \GG }{ \Empty }
}
{}
\quad
\ninfer{\rBind}
{
\NTyP{ \GG }{ f = e; \mi{prog} } 
}
{
\NTy{\GG}{\GD}{e}{\Gt}{ Q } & \V{p}\V{a} = \ftv(Q,\Gt) &
\NTyP{ \GG, f : \forall \V{p}\V{a}. Q \To \Gt }{ \mi{prog} } 
}
\\
\ninfer{\rBindA}
{
\NTyP{\GG}{ f : (\forall \V{p}\V{a}. Q \To \Gt) = e; \mi{prog} }
}
{
\NTy{\GG}{\GD}{e}{\Gt}{Q} & \V{p}\V{a} = \ftv(Q,\Gt) & 
\NTyP{ \GG, f : \forall \V{p}\V{a}. Q \To \Gt }{ \mi{prog} } 
}
\end{gather*}
\caption{Typing rules for programs}
\label{fig:typing-prog}
\end{figure}

Now, we are ready to define the \emph{typing judgment for expressions}, $\NTy{\GG}{\GD}{e}{\Gt}{Q}$, which reads 
that under assumption $Q$, typing environment $\GG$, and multiplicity environment $\GD$, expression $e$ has monotype $\Gt$,
by the typing rules in \cref{fig:typing}. Here, we assume $\dom(\GD) \subseteq \dom(\GG)$. 
Having $x \in \dom(\GG) \setminus \dom(\GD)$ means that the multiplicity of $x$ is essentially $0$ in $e$.

Rule \rname{Eq} says that we can replace $\Gt$ and $\GD$ with equivalent ones in typing. 

Rule {\rVar} says that $x$ is used once in a variable expression $x$, but 
it is safe to regard that the expression uses $x$ more than once and uses other variables $\omega$ times. 
At the same time, the type $\forall \V{p}\V{a}. Q' \To \Gt$ of $x$ instantiated to $\Gt[\V{p \mapsto \mu}, \V{a \mapsto \Gs}]$ 
with yielding constraints $Q'[\V{p \mapsto \mu}]$, which must be entailed from $Q$.

Rule {\rAbs} says that $\Gl x.e$ has type $\Gs \pto{\mu} \Gt$
if $e$ has type $\Gt$, assuming that the use of $x$ in $e$ is $\mu$. 
Unlike the original $\Gl^q_\to$~\cite{BeBNJS18}, in our system, multiplicity annotations on arrows must be $\mu$, \ie, $1$, $\omega$, or a multiplicity variable, instead of $M$.
This does not limit the expressiveness because such general arrow types can be represented by type $\Gs \pto{p} \Gt$ with constraints $p \le M \wedge M \le p$.

Rule {\rApp} sketches an important principle in $\Gl^q_\to$; when an expression with variable use $\GD$ is used $\mu$-many times, 
the variable use in the expression becomes $\mu \GD$. 
Thus, since we pass $e_2$ (with variable use $\GD_2$) to $e_1$, where $e_1$ uses the argument $\mu$-many times as described in its type $\Gs \pto{\mu} \Gt$, 
the use of variables in $e_2$ of $e_1 \A e_2$ becomes $\mu\GD_2$. For example, for $(\Gl y. 42) \A x$, $x$ is considered to be used $\omega$
times because $(\Gl y. 42)$ has type $\Gs \pto{\omega} \con{Int}$ for any $\Gs$.

Rule {\rCon} is nothing but a combination of {\rVar} and {\rApp}.
The $\omega \GD_0$ part is only useful when $\con{C}$ is nullary; 
otherwise, we can weaken $\GD$ at leaves.

Rule {\rCase} is the most complicated rule in this type system. 
In this rule, $\mu_0$ represents how many times the scrutinee $e_0$ is used in the $\CASE$.
If $\mu_0 = \omega$, the pattern bound variables can be used unrestrictedly, 
and if $\mu_0 = 1$, the pattern bound variables can be used according to the multiplicities of the arguments of the constructor.\footnote{
This behavior, inherited from $\Gl^q_\to$~\cite{BeBNJS18}, 
implies the isomorphism $!(A \ox B) \equiv {!A} \ox {!B}$, 
which is not a theorem in the standard linear logic.
The isomorphism intuitively means that unrestricted products can (only) be constructed from 
unrestricted components, as commonly adopted in linearity-via-kind approaches \cite{GanTM15,Morris16,TovP11,MaZZ10,TurnerWM95}.
}
Thus, in the $i$th branch, variables in $\V{x_i}$ can be used as $\V{\mu_0 \nu_i[\VSubst{p}{\mu}]}$, where $\mu_i[\VSubst{p}{\mu}]$ represents 
the multiplicities of the arguments of the constructor $\con{C}_i$. 
Other than $\V{x_i}$, 
each branch body $e_i$ can contain free variables used as $\GD_i$. Thus, the uses of free variables in the whole branch bodies are summarized as $\bigsqcup_i \GD_i$. 
Recall that the $\CASE$ uses the scrutinee $\mu_0$ times; thus,
the whole uses of variables are estimated as $\mu_0 \GD_0 + \bigsqcup_i \GD_i$.

Then, we define the \emph{typing judgment for programs}, $\NTyP{\GG}{ \mi{prog} }$, which reads that program $\mi{prog}$ is well-typed under $\GG$, 
by the typing rules in \cref{fig:typing-prog}. 
At this place, the rules {\rBind} and {\rBindA} have no significant differences; their difference
will be clear when we discuss type inference. 
In the rules {\rBind} and {\rBindA}, we assumed that $\GG$ contains no free type-level variables. 
Therefore, we can safely generalize all free type-level variables in $Q$ and $\Gt$. 
We do not check the use $\GD$ in both rules as bound variables are assumed to be used 
arbitrarily many times in the rest of the program; that is, 
the multiplicity of a bound variable is $\omega$ and its body uses variable as $\omega \GD$,
which maps $x \in \dom(\GD)$ to $\omega$ and has no free type-level variables.

\subsection{Metatheories}
\label{sec:meta}

\Cref{lemma:weakening} is the standard weakening property.
\Cref{lemma:weakening-constraint} says that we can replace $Q$ with a stronger one, 
\cref{lemma:weakening-multiplicity} says that we can replace $\GD$ with a greater one, 
and \cref{lemma:substitution-type} says that we can substitute type-level variables in a term-in-context
without violating typeability. 
These lemmas state some sort of weakening, and the last three lemmas 
clarify the goal of our inference system discussed in \cref{sec:inference}. 

\begin{lemma}\rm
$\NTy{\GG}{\GD}{e}{\Gt}{Q}$ implies $\NTy{\GG,x:A}{\GD}{e}{\Gt}{Q}$. \qed
\label{lemma:weakening}
\end{lemma}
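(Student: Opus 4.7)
The plan is to prove the lemma by straightforward structural induction on the derivation of $\NTy{\GG}{\GD}{e}{\Gt}{Q}$. The key observation is that no rule in \cref{fig:typing} inspects $\GG$ in a way that a fresh extension could disturb: \rVar{} only reads $\GG(y)$ for the variable $y$ being used, and the other rules either do not touch $\GG$ at all (\rname{Eq}) or merely extend $\GG$ with pattern- or $\Gl$-bound variables before appealing to subderivations. I will rely on the paper's standing $\Ga$-renaming convention so that $x$ is fresh with respect to every variable bound inside $e$, which keeps the extended environments $\GG, x:A, \dots$ well-formed and keeps the implicit side condition $\dom(\GD) \subseteq \dom(\GG,x:A)$ trivially satisfied.

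For each rule, I would argue as follows. For \rname{Eq}, the premises $Q \models \GD = \GD'$ and $Q \models \Gt \eqty \Gt'$ do not involve $\GG$, so the induction hypothesis on the premise judgment immediately yields the conclusion under $\GG, x:A$. For \rVar, the hypothesis $\GG(y) = \forall \V{p}\V{a}. Q' \To \Gt$ still holds under $\GG, x:A$ because $x$ is fresh (in particular distinct from $y$), and the remaining side conditions $Q \models Q'[\V{\Subst{p}{\mu}}]$ and $Q \models \Used{y}{1} \le \GD$ are untouched. For \rAbs, the subderivation is $\NTy{\GG, y:\Gs}{\GD, \Used{y}{\mu}}{e}{\Gt}{Q}$ with $y$ fresh; applying the induction hypothesis with the fresh $x$ gives $\NTy{\GG, y:\Gs, x:A}{\GD, \Used{y}{\mu}}{e}{\Gt}{Q}$, and reordering the environment (which is just a finite map) yields the required $\NTy{\GG, x:A, y:\Gs}{\GD, \Used{y}{\mu}}{e}{\Gt}{Q}$. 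The cases \rApp, \rCon, and \rCase reduce to applying the induction hypothesis to each subderivation; for \rCase the pattern variables $\V{x_i}$ in each branch are fresh, so the same reordering argument as in \rAbs{} applies.

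There is essentially no serious obstacle here; the only mildly subtle point is handling the binder cases \rAbs{} and \rCase{}, where one must invoke the $\Ga$-renaming convention to conclude that $x$ is disjoint from the freshly introduced bound variables so that the extended environment remains a well-defined function. Since the multiplicity environment $\GD$ is unchanged throughout and only the typing environment grows, none of the constraint-entailment side conditions involving $Q$ or $\GD$ need any adjustment, and the result follows.
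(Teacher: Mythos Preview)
Your proposal is correct and is exactly the routine structural induction the paper has in mind; the paper itself omits the proof entirely (the statement is closed with a bare \qed), treating weakening as standard.
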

\begin{lemma}\rm
$\NTy{\GG}{\GD}{e}{\Gt}{Q}$ and $Q' \models Q$ implies $\NTy{\GG}{\GD}{e}{\Gt}{Q'}$.\qed
\label{lemma:weakening-constraint}
\end{lemma}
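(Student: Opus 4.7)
The plan is to prove \cref{lemma:weakening-constraint} by a straightforward induction on the derivation of $\NTy{\GG}{\GD}{e}{\Gt}{Q}$. The key observation is that $Q$ occurs in the typing rules in exactly two kinds of positions: as side conditions of the form $Q \models \cdots$ (entailments about equalities, multiplicity bounds, or instantiated constraints), and as the ``constraint slot'' that is threaded unchanged through premises and conclusion. Since entailment is transitive, any side condition $Q \models \psi$ together with $Q' \models Q$ immediately yields $Q' \models \psi$; and the threaded occurrences are handled by invoking the induction hypothesis on each premise.

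Concretely, I would proceed case by case on the last rule used. For \rname{Eq}, the hypotheses $Q \models \GD = \GD'$ and $Q \models \Gt \eqty \Gt'$ transfer to $Q'$ by transitivity of $\models$, and the induction hypothesis rewrites $\NTy{\GG}{\GD'}{e}{\Gt'}{Q}$ into $\NTy{\GG}{\GD'}{e}{\Gt'}{Q'}$. For \rVar, both side conditions $Q \models Q''[\V{\Subst{p}{\mu}}]$ and $Q \models \Used{x}{1} \le \GD$ transfer by transitivity, and the rule can be re-applied with $Q'$ in place of $Q$. For \rAbs, \rApp, \rCon, and \rCase, the constraint $Q$ appears only as the threaded annotation on the subderivations; applying the induction hypothesis to every premise and reassembling with the same rule yields the derivation with $Q'$.

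I expect no substantive obstacle: the two relevant facts are (i) entailment is transitive, so every $Q \models \psi$ side condition survives strengthening, and (ii) the typing rules are ``uniform'' in the constraint slot, so the rule instances themselves are not sensitive to which $Q$ is used, only to the entailments it supports. The only mild care is in \rCase and \rCon, where several premises share the same constraint $Q$; one must apply the induction hypothesis to each premise independently with the common $Q'$, which is immediate because $Q' \models Q$ is a single global assumption and applies uniformly.
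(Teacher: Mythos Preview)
Your proposal is correct and is exactly the standard argument one would expect; the paper itself omits the proof entirely (the statement is marked with \qed\ and no proof is given), treating it as a routine induction on the typing derivation. Your case analysis and the two observations you highlight---transitivity of $\models$ for the side conditions and uniformity of the constraint slot for the inductive premises---are precisely what is needed, and nothing further is required.
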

\begin{lemma}\rm
$\NTy{\GG}{\GD}{e}{\Gt}{Q}$ and $Q \models \GD \le \GD'$ implies $\NTy{\GG}{\GD'}{e}{\Gt}{Q}$.\qed
\label{lemma:weakening-multiplicity}
\end{lemma}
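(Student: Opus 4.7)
The approach is structural induction on the derivation of $\NTy{\GG}{\GD}{e}{\Gt}{Q}$, with the hypothesis $Q \models \GD \le \GD'$ threaded through each case. The base rule \rVar{} is immediate from transitivity of $\le$: combining $Q \models \Used{x}{1} \le \GD$ with $Q \models \GD \le \GD'$ yields $Q \models \Used{x}{1} \le \GD'$, so the rule re-applies. The \rname{Eq} case is equally direct: its equality premise lets us promote $\le$ to $\GD'$ across the equivalence before invoking the IH. The \rAbs{} case reduces to the IH on the body, extending both environments with the common entry $\Used{x}{\mu}$ and noting that $\le$ is preserved by adding the same binding on both sides.

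The real work is in \rApp, \rCon, and \rCase, whose conclusion environments are built using $+$, scaling by a multiplicity, and $\sqcup$. For \rApp{} with conclusion environment $\GD_1 + \mu\GD_2$ and hypothesis $Q \models \GD_1 + \mu\GD_2 \le \GD'$, I apply the first preceding lemma (the splitting lemma for $+$) with the decomposition $\GD_1 + (\mu\GD_2)$, obtaining $\GD'_1$ and $\GD^*$ such that $Q \models \GD' = \GD'_1 + \GD^*$, $Q \models \GD_1 \le \GD'_1$, and $Q \models \mu\GD_2 \le \GD^*$. The second lemma (scaling) then yields $Q \models \GD_2 \le \GD^*$. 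Applying the IH to each premise derivation and reassembling with \rApp{} gives a derivation under $\GD'_1 + \mu\GD^*$, after which a single \rname{Eq} step closes the case once I verify $Q \models \GD'_1 + \mu\GD^* = \GD'_1 + \GD^*$. The \rCon{} case is an $n$-ary variant of the same argument, and \rCase{} is analogous but additionally invokes the third lemma (for $\sqcup$) to distribute the upper bound across branches after peeling off the scrutinee's contribution $\mu_0\GD_0$ via the splitting lemma.

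The main obstacle is verifying the equation $Q \models \GD'_1 + \mu\GD^* = \GD'_1 + \GD^*$ in the \rApp{} reassembly (and its analogues in \rCon{} and \rCase). I argue coordinate-wise. For $x \in \dom(\GD'_1) \cap \dom(\GD^*)$ both sides are $\omega$, and for $x \in \dom(\GD'_1) \setminus \dom(\GD^*)$ both sides are $\GD'_1(x)$. For $x \in \dom(\GD^*) \setminus \dom(\GD'_1)$ I split on whether $x \in \dom(\GD_2)$: if so, then $\mu\GD_2(x) \le \GD^*(x)$ forces $\mu \le \GD^*(x)$, whence $\mu \cdot \GD^*(x) = \GD^*(x)$ by the LUB identity $m \cdot n = n$ whenever $m \le n$ on $\set{1,\omega}$; if not, then $\dom(\GD_1) \subseteq \dom(\GD'_1)$ gives $x \notin \dom(\GD_1)$, so the original inequality $\GD_1 + \mu\GD_2 \le \GD'$ forces $\omega \le \GD'(x) = \GD^*(x)$, making $\GD^*(x) = \omega$ and hence $\mu \cdot \omega = \omega = \GD^*(x)$. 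The same style of coordinate-wise reasoning, combined with the $\sqcup$-lemma, dispatches the analogous equation for \rCase.
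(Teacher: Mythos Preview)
Your proof is correct and is essentially the approach the paper intends: the paper gives no explicit proof (the lemma carries an immediate \qed), but the three auxiliary lemmas placed directly before it---the splitting lemma for $+$, the scaling lemma $Q \models \mu\GD \le \GD' \Rightarrow Q \models \GD \le \GD'$, and the $\sqcup$-distribution lemma---are precisely the tools for the inductive cases \rApp, \rCon, and \rCase, and you invoke each of them where expected. Your coordinate-wise verification of the reassembly equation $Q \models \GD'_1 + \mu\GD^* = \GD'_1 + \GD^*$ is the one nontrivial detail the paper suppresses; the case analysis is sound, resting on the fact that $\cdot$ is the join on $\{1,\omega\}$ so that $\mu \le \GD^*(x)$ forces $\mu\cdot\GD^*(x) = \GD^*(x)$.
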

\begin{lemma}\rm
$\NTy{\GG}{\GD}{e}{\Gt}{Q}$ implies $\NTy{\GG\Gth}{\GD\Gth}{e}{\Gt\Gth}{Q\Gth}$.\qed
\label{lemma:substitution-type}
\end{lemma}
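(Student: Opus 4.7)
The plan is to proceed by induction on the derivation of $\NTy{\GG}{\GD}{e}{\Gt}{Q}$, applying $\Gth$ to every component of the judgment and reassembling the derivation with the same rule. I would assume, WLOG by $\Ga$-renaming, that the bound type-level variables of every polytype in $\GG$ and of every constructor signature used in the derivation are fresh for both $\dom(\Gth)$ and $\ftv(\Gth)$, so that substitution commutes with instantiation.

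Two preservation facts will be invoked in essentially every case. First, entailment is stable under substitution: $Q \models Q'$ implies $Q\Gth \models Q'\Gth$, since any valuation of $\ftv(Q\Gth, Q'\Gth)$ extends to a valuation of $\ftv(Q,Q')$ via precomposition with $\Gth$. Specializing this gives $Q\Gth \models \GD\Gth \le \GD'\Gth$ from $Q \models \GD \le \GD'$ and $Q\Gth \models \Gt\Gth \eqty \Gt'\Gth$ from $Q \models \Gt \eqty \Gt'$. Second, the multiplicity-environment operations commute with substitution pointwise: $(\GD_1+\GD_2)\Gth = \GD_1\Gth + \GD_2\Gth$, $(\mu\GD)\Gth = (\mu\Gth)(\GD\Gth)$, and $(\GD_1 \Lub \GD_2)\Gth = \GD_1\Gth \Lub \GD_2\Gth$, directly from the definitions (which act per term variable and do not read the type-level structure).

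Most cases are then mechanical. \rname{Eq} transports by the first fact. \rAbs uses $(\GD, \Used{x}{\mu})\Gth = \GD\Gth, \Used{x}{\mu\Gth}$ and the induction hypothesis. \rApp uses $(\GD_1 + \mu\GD_2)\Gth = \GD_1\Gth + (\mu\Gth)(\GD_2\Gth)$. In \rVar, the existentially chosen witnesses $\V{\mu}$ and $\V{\Gs}$ in the premise become $\V{\mu\Gth}$ and $\V{\Gs\Gth}$ after substitution; by the freshness convention $(\Gt[\VSubst{p}{\mu},\VSubst{a}{\Gs}])\Gth = \Gt[\VSubst{p}{\mu\Gth},\VSubst{a}{\Gs\Gth}]$ and similarly for the instantiated side-condition $Q'[\VSubst{p}{\mu}]$, which is then entailed by $Q\Gth$ via the first fact. \rCon combines the bookkeeping of \rVar with the arithmetic of \rApp.

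The only mildly delicate case is \rCase, where the branch multiplicities $\mu_0\nu_i[\VSubst{p}{\mu}]$, the pattern-variable types $\Gt_i[\VSubst{p}{\mu},\VSubst{a}{\Gs}]$, and the combined use $\mu_0\GD_0 + \bigsqcup_i \GD_i$ all mention the shared instantiation witnesses $\V{\mu}$, $\V{\Gs}$, and the scrutinee multiplicity $\mu_0$; one checks once that $\Gth$ distributes through each of them consistently, after which the rule reapplies verbatim with $\mu_0$ replaced by $\mu_0\Gth$. I do not expect any genuinely hard step: the lemma is structural, and the real content is fixing a freshness convention and packaging the two preservation facts above.
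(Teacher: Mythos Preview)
Your proposal is correct and is exactly the standard structural argument one expects for this lemma. The paper itself does not give a proof: the statement is marked with a bare \qed, signalling that the authors regard it as routine. Your induction on the typing derivation, together with the two preservation facts (stability of $\models$ under substitution and commutation of the $\GD$-operations with $\Gth$) and the freshness convention for bound type-level variables, is precisely the content that this \qed suppresses; nothing further is needed.
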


We have the following form of the substitution lemma:
\begin{lemma}[Substitution]\rm 
Suppose $\NTy{\GG,\V{x:\Gs}}{\GD_0,\V{\Used{x}{\mu}}}{e}{\Gt}{Q_0}$, and 
$\NTy{\GG}{\GD_i}{e_i'}{\Gs_i}{Q_i}$ for each $i$.
Then, 
$\NTy{\GG}{\GD_0 + \sum_i \mu_i \GD_i}{e[\V{\Subst{x}{e'}}]}{\Gt}{Q_1 \wedge \bigwedge_i Q_i}$. \qed 
\label{lemma:substitution}
\end{lemma}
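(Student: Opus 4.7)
The plan is to prove the substitution lemma by induction on the derivation of $\NTy{\GG,\V{x:\Gs}}{\GD_0,\V{\Used{x}{\mu}}}{e}{\Gt}{Q_0}$, after a preliminary strengthening step. First, I would apply \cref{lemma:weakening-constraint} to each hypothesis $\NTy{\GG}{\GD_i}{e_i'}{\Gs_i}{Q_i}$ so that every subderivation is re-typed under the common constraint $Q = Q_0 \wedge \bigwedge_i Q_i$; then all equivalences $Q \models \Gt \eqty \Gt'$ and $Q \models \GD = \GD'$ and all instances of the \rname{Eq} rule can be used uniformly.

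\textbf{Key cases.} The base case is \rVar. If $e = y$ with $y \notin \V{x}$, the substitution acts as the identity and the existing derivation suffices after applying \cref{lemma:weakening,lemma:weakening-multiplicity} to absorb the extra $\sum_i \mu_i \GD_i$ contribution (note that \rVar is already tolerant of an enlarged multiplicity environment through its side condition $Q \models \Used{y}{1} \le \GD$). If $e = x_j$, then the \rVar rule forces $Q_0 \models \Used{x_j}{1} \le \GD_0, \V{\Used{x}{\mu}}$, which implies in particular $1 \le \mu_j$, and we invoke the hypothesis for $e_j'$ to get $\NTy{\GG}{\GD_j}{e_j'}{\Gs_j}{Q}$; \cref{lemma:weakening-multiplicity} then promotes $\GD_j$ to $\GD_0 + \sum_i \mu_i \GD_i$, using $1 \le \mu_j$ and the fact that the other $x$-entries in $\GD_0$ are already $\omega$-bounded or absent. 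The inductive cases \rAbs, \rApp, \rCon and \rCase all proceed by splitting the multiplicity environment along the occurrences of each $x_j$: for \rApp, if the premises give us $\GD_0^{(1)}, \V{\Used{x}{\mu^{(1)}}}$ and $\GD_0^{(2)}, \V{\Used{x}{\mu^{(2)}}}$ with $\GD_0, \V{\Used{x}{\mu}} = \GD_0^{(1)} + \nu (\GD_0^{(2)}, \V{\Used{x}{\mu^{(2)}}})$, the induction hypothesis on each premise produces the substituted derivation with environment $\GD_0^{(k)} + \sum_i \mu_i^{(k)} \GD_i$, and one reassembles them via \rApp. The \rCase case is the same pattern, but using $\sqcup$ for the branches, and one must track $\mu_0$ carefully on the scrutinee.

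\textbf{Main obstacle.} The bookkeeping for the multiplicity environments will be the principal difficulty, because the operator $+$ on multiplicity environments is not ordinary addition: overlapping variables collapse to $\omega$ rather than summing. Consequently, after substituting $e_j'$ for $x_j$, the free variables of the resulting term can draw contributions from several premises, and it is not in general true that $\GD_0 + \sum_i \mu_i \GD_i$ equals the environment obtained by naive reassembly. The remedy, which I would invoke systematically, is to use \cref{lemma:weakening-multiplicity} together with the algebraic facts about $+$, $\sqcup$ and scalar multiplication (in particular that $\mu_1 \cdot \mu_2$ is the least upper bound in $\{1,\omega\}$) to show that the naive reassembly is dominated by $\GD_0 + \sum_i \mu_i \GD_i$ modulo $Q$. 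A dedicated arithmetic lemma stating $Q \models (\GD + \nu \GD') + \sum_i (\mu_i + \nu \mu_i') \GD_i \ge (\GD + \sum_i \mu_i \GD_i) + \nu (\GD' + \sum_i \mu_i' \GD_i)$ (and its $\sqcup$ analogue) would be factored out to keep the main induction clean. Once that is in place, every inductive case closes mechanically by applying the corresponding typing rule after the \rname{Eq} adjustment.
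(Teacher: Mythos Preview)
The paper does not actually give a proof of this lemma: the statement is followed immediately by \qed\ and no argument appears in the main text or in the appendix of selected proofs. So there is nothing to compare your proposal against; I can only comment on the proposal itself.

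Your overall strategy---first weaken all hypotheses to the common constraint $Q_0 \wedge \bigwedge_i Q_i$ via \cref{lemma:weakening-constraint}, then induct on the typing derivation and handle the \rname{Eq} rule as a separate case---is the standard one and is correct. Your identification of the \rVar\ base case split and of the multiplicity-environment bookkeeping as the real work is on target, and the idea of factoring out a monotonicity/distributivity fact about $+$, $\sqcup$ and scaling so that \cref{lemma:weakening-multiplicity} closes each inductive step is exactly right.

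Two points of care. First, in your ``dedicated arithmetic lemma'' you write $\sum_i (\mu_i + \nu\mu_i')\GD_i$, but there is no $+$ on individual multiplicities in this calculus; the environment operation $+$ collapses overlaps to $\omega$ rather than adding. What you really need is a statement purely at the level of environments, namely that from $Q \models (\GD_0^{(1)},\V{\Used{x}{\mu^{(1)}}}) + \nu(\GD_0^{(2)},\V{\Used{x}{\mu^{(2)}}}) = \GD_0,\V{\Used{x}{\mu}}$ one can conclude $Q \models \bigl(\GD_0^{(1)} + \textstyle\sum_i \mu_i^{(1)}\GD_i\bigr) + \nu\bigl(\GD_0^{(2)} + \sum_i \mu_i^{(2)}\GD_i\bigr) \le \GD_0 + \sum_i \mu_i \GD_i$; this does follow from the distributivity $\nu(\GD+\GD') = \nu\GD + \nu\GD'$ and the monotonicity of $+$ in each argument, both of which hold for the paper's definitions. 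Second, in the \rApp\ (and \rCase) premises some $x_j$ may be absent from one sub-environment; in that branch $x_j$ is not free in the corresponding subexpression, so no $\GD_j$ contribution is needed there, and the case analysis on whether $x_j$ lies in $\dom(\GD_1)$, $\dom(\GD_2)$, or both is what makes the arithmetic close. With those clarifications your outline goes through.
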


\newcommand{\Red}[2]{#1 \longrightarrow #2}

\paragraph{Subject Reduction}
We show the subject reduction property for a simple call-by-name semantics. 
Consider the standard small-step call-by-name relation $\Red{e}{e'}$ with the following $\beta$-reduction rules (we omit the congruence rules):
\begin{gather*}
\Red{(\Gl x.e_1) \A e_2}{e_1[\Subst{x}{e_2}]}
\quad\qquad
\Red{ \CASE~\con{C}_j \A \V{e}_j~\OF~\{ \con{C}_i \A \V{x_i} \to e_i' \}_i }{ e'_j[\VSubst{x_j}{e_j}] } 
\end{gather*}

Then, 
\iffullversion\else by \cref{lemma:substitution}, \fi 
we have the following subjection reduction property:
\begin{lemma}[Subject Reduction]\rm 
$\NTy{\GG}{\GD}{e}{\Gt}{Q}$ and $\Red{e}{e'}$
implies $\NTy{\GG}{\GD}{e'}{\Gt}{Q}$. 
\iffullversion\else\qed\fi 
\label{lemma:subject-reduction}
\end{lemma}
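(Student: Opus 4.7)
The plan is to proceed by induction on the derivation of $\Red{e}{e'}$. The congruence cases reduce directly via the inductive hypothesis: the sub-derivation responsible for typing the contracted sub-expression is replaced by the IH, and the surrounding rule (\rApp, \rCon, or \rCase) is reapplied with the same multiplicity environment, since substitution inside the sub-derivation does not change the shape of the typing context. So the real work is in the two $\beta$-cases, where the Substitution Lemma (\cref{lemma:substitution}) is the main tool.

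For $\Red{(\Gl x.e_1) \A e_2}{e_1[\Subst{x}{e_2}]}$, I would invert the typing derivation of the redex. Inversion on \rApp gives $\NTy{\GG}{\GD_1}{\Gl x.e_1}{\Gs \pto{\mu} \Gt}{Q}$ and $\NTy{\GG}{\GD_2}{e_2}{\Gs}{Q}$ with $Q \models \GD = \GD_1 + \mu\GD_2$, and then inversion on \rAbs gives $\NTy{\GG, x:\Gs}{\GD_1, \Used{x}{\mu}}{e_1}{\Gt}{Q}$. The Substitution Lemma, specialised to a single variable $x$ of multiplicity $\mu$ with redex $e_2$, yields $\NTy{\GG}{\GD_1 + \mu\GD_2}{e_1[\Subst{x}{e_2}]}{\Gt}{Q}$, and rule \rname{Eq} delivers the original $\GD$.

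For $\Red{\CASE~\con{C}_j \A \V{e_j}~\OF~\{\con{C}_i \A \V{x_i} \to e_i'\}_i}{e'_j[\VSubst{x_j}{e_j}]}$, inversion on \rCase produces a typing of the scrutinee $\NTy{\GG}{\GD_0}{\con{C}_j \A \V{e_j}}{\con{D} \A \V{\mu} \A \V{\Gs}}{Q}$ with usage coefficient $\mu_0$, and for the $j$-th branch a derivation $\NTy{\GG, \V{x_j : \Gt_{j,k}[\VSubst{p}{\mu},\VSubst{a}{\Gs}]}}{\GD_j, \V{\Used{x_{j,k}}{\mu_0 \nu_{j,k}[\VSubst{p}{\mu}]}}}{e'_j}{\Gt'}{Q}$. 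Inversion on \rCon further decomposes $\GD_0$ (up to $Q$-equivalence) into $\omega\GD_0' + \sum_k \nu_{j,k}[\VSubst{p}{\mu}] \GD_{j,k}$ with $\NTy{\GG}{\GD_{j,k}}{e_{j,k}}{\Gt_{j,k}[\VSubst{p}{\mu},\VSubst{a}{\Gs}]}{Q}$. Applying the Substitution Lemma simultaneously (or iteratively) to the branch typing substitutes each $x_{j,k}$ by $e_{j,k}$, yielding $\NTy{\GG}{\GD_j + \sum_k \mu_0 \nu_{j,k}[\VSubst{p}{\mu}] \GD_{j,k}}{e'_j[\VSubst{x_j}{e_j}]}{\Gt'}{Q}$, and \cref{lemma:weakening-multiplicity} together with \rname{Eq} brings us back to $\mu_0 \GD_0 + \bigsqcup_i \GD_i$, which $Q$-dominates $\mu_0 \GD_0 + \GD_j$.

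The main obstacle is the inversion step, because rule \rname{Eq} lets one replace $\GD$ and $\Gt$ with $Q$-equivalent ones anywhere, so strict syntactic inversion fails. I would accordingly prove generation lemmas phrased up to $Q$-equivalence, for instance: any derivation $\NTy{\GG}{\GD}{\Gl x.e_1}{\Gt}{Q}$ provides $\Gs, \Gt', \mu, \GD_1$ with $Q \models \Gt \eqty \Gs \pto{\mu} \Gt'$, $Q \models \GD_1 \le \GD$, and $\NTy{\GG, x:\Gs}{\GD_1, \Used{x}{\mu}}{e_1}{\Gt'}{Q}$, and analogously for constructor applications. These are established by a routine induction on derivations in which every \rname{Eq} step is absorbed into the stated $Q$-equivalences, after which the multiplicity arithmetic (appeals to \cref{lemma:weakening-multiplicity} and the splitting lemma for $\GD = \GD_1 + \GD_2$) is straightforward.
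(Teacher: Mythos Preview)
Your proposal is correct and takes essentially the same approach as the paper: induction on the reduction derivation, using the Substitution Lemma (\cref{lemma:substitution}) for the two $\beta$-cases and \cref{lemma:weakening-multiplicity} to absorb $\GD_j$ into $\bigsqcup_i \GD_i$ in the \rCase case. The paper's proof is a one-line sketch, so your explicit treatment of inversion via generation lemmas modulo \rname{Eq} is a welcome elaboration rather than a deviation.
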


\iffullversion
\begin{proof}
Simple induction on the derivation, appealing to \cref{lemma:substitution}
for $\beta$-reduction rules above; we also use \cref{lemma:weakening-multiplicity} to deal with $\bigsqcup_i \GD_i$ in $\CASE$.
\qed
\end{proof}
\fi

\Cref{lemma:subject-reduction} holds even for the call-by-value reduction, though with a caveat. 
For a program $f_1 = e_1;\dots;f_n = e_n$, 
it can happen that some $e_i$ is typed only under unsatisfiable (\ie, conflicting) $Q_i$. 
As conflicting $Q_i$ means that $e_i$ is essentially ill-typed, evaluating $e_i$ may not be safe.
However, the standard call-by-value strategy evaluates $e_i$, even when $f_i$ is not used at all and 
thus the type system does not reject this unsatisfiability. 
This issue can be addressed by the standard witness-passing transformation~\cite{QualifiedTypes}
that converts programs so that $Q \To \Gt$ becomes $W_Q \to \Gt$, where $W_Q$ represents a set of witnesses of $Q$.
Nevertheless, it would be reasonable to reject conflicting constraints locally.

We then state the correspondence with the original system~\cite{BeBNJS18} (assuming the 
modification~\cite{LinearMiniCore} for the variable case\footnote{
In the premise of \rVar, the original~\cite{BeBNJS18} uses $\exists \GD'.\: \GD = \Used{x}{1} + \omega \GD'$, 
which is modified to $\Used{x}{1} \le \GD$ in \cite{LinearMiniCore}.
The difference between the two becomes clear when $\GD(x) = p$, for which the former one does not hold as we are not able to choose $\GD'$ 
depending on $p$.}) to show 
that the qualified-typed version captures the linearity as the original. While 
the original system assumes the call-by-need evaluation, \cref{lemma:subject-reduction} could be lifted to that case.
\begin{theorem}\rm
If $\NTy{\GG}{\GD}{e}{\Gt}{\top}$ where $\GG$ contains only monotypes, $e$ is also well-typed in the original $\Gl^q_\to$ under some environment. \qed 
\label{thm:correspondence-with-original}
\end{theorem}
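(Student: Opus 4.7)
The plan is to proceed by induction on the derivation of $\NTy{\GG}{\GD}{e}{\Gt}{\top}$ and construct a corresponding derivation in the (modified) original $\Gl^q_\to$ of~\cite{BeBNJS18,LinearMiniCore}. The key enabling observation is that since $Q=\top$, every multiplicity variable that occurs in the derivation is effectively unconstrained, so I would begin by fixing an arbitrary ground substitution $\Gth$ sending each free multiplicity variable to, say, $\omega$. By \cref{lemma:substitution-type}, applying $\Gth$ produces a derivation of $\NTy{\GG\Gth}{\GD\Gth}{e}{\Gt\Gth}{\top}$; the monotype assumption on $\GG$ ensures $\GG\Gth = \GG$, so only $\GD$ and intermediate multiplicities are affected. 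After grounding, the premises $Q \models \GD = \GD'$ and $Q \models \Gt \eqty \Gt'$ in rule \rname{Eq} degenerate to syntactic equalities, which lets me normalize away every use of \rname{Eq} and obtain a derivation consisting only of \rVar, \rAbs, \rApp, \rCon, and \rCase with ground multiplicities.

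Next, I would check each of the remaining rules against its counterpart in the modified original system. The rule \rVar matches the form in \cite{LinearMiniCore} directly, since the premise $\Used{x}{1} \le \GD$ is the same condition used there; instantiation is trivial because $\GG$ contains only monotypes. The rule \rApp transcribes verbatim once $\mu$ is ground. The rule \rCon is a composition of \rVar with repeated \rApp, and the $\omega\GD_0$ summand only matters for nullary constructors, where it corresponds to weakening a leaf by $\omega$, which is admissible in the original. For \rAbs and \rCase, the translation must supply the multiplicity annotations that the original system puts on $\Gl$ and $\CASE$; we simply read off these annotations from the derivation (the $\mu$ in \rAbs and the $\mu_0$ in \rCase), and this is precisely the `under some environment' clause of the statement: the translated expression/context carries the annotations implicit in our typing.

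The main obstacle, as I see it, is the elimination of \rname{Eq}. A naive pass cannot just delete it, because an intervening \rname{Eq} may be essential for a later rule (for instance, \rApp requires the function type to literally be $\Gs \pto{\mu} \Gt$). The treatment is to push each \rname{Eq} upwards using \cref{lemma:weakening-constraint,lemma:weakening-multiplicity,lemma:substitution-type}, so that after the initial grounding by $\Gth$ the equivalences become pointwise equalities of multiplicity environments and syntactic equalities of types; only then can \rname{Eq} be removed. A minor subtlety that I would also handle up front is that $\GD$ may assign multiplicity $0$ (i.e., omit a variable) to some $x \in \dom(\GG)$: these free but unused variables must be weakened in the translated derivation, using the footnote-level convention that $\GD(x)=0$ and $\omega$-weakening commute. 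Once these bookkeeping steps are discharged, the remaining cases are routine transcriptions of our rules into the rules of the modified original system, completing the induction.
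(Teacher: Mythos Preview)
Your proposal misses the one genuine technical step in the paper's proof. The real mismatch between the two systems is not \rname{Eq} but \rCase: in the present system each branch $e_i$ is typed under its own $\GD_i$ and the conclusion uses $\bigsqcup_i \GD_i$, whereas in the original $\Gl^q_\to$ all branches must be typed under a \emph{single} common $\GD$. To bridge this, the paper uses $\models \GD_i \le \bigsqcup_j \GD_j$ together with a \emph{height-preserving} strengthening of \cref{lemma:weakening-multiplicity}, so that each branch's subderivation can be rewritten under the common environment $\bigsqcup_j \GD_j$ without growing; the outer induction is then on derivation height. Your treatment of \rCase only mentions reading off the annotation $\mu_0$ and says nothing about reconciling the per-branch $\GD_i$, so as written the inductive step for \rCase does not go through. (It is salvageable: after your grounding you could invoke the admissible weakening of the \emph{original} system on each branch --- but that is precisely the non-routine step you have not identified.)

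Conversely, the part you flag as the ``main obstacle'' --- eliminating \rname{Eq} --- is essentially free. With $Q = \top$ the premises $\top \models \GD = \GD'$ and $\top \models \Gt \eqty \Gt'$ already force agreement under every valuation, so the \rname{Eq} case of the induction is immediate; the paper's height induction simply steps through it without any normalization pass. Your grounding manoeuvre via \cref{lemma:substitution-type} is therefore unnecessary, and the paper does not ground at all. As a side remark, your claim that ``the monotype assumption on $\GG$ ensures $\GG\Gth = \GG$'' is false --- monotypes may contain multiplicity variables in arrow annotations --- though this does not actually damage the argument, since the theorem only asks for \emph{some} environment.
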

\iffullversion
\begin{proof}
See \cref{proof:correspondence-with-original}. \qed
\end{proof}
\fi
The main reason for the monotype restriction is that our 
polytypes are strictly more expressive than their (rank-1) polytypes.
This extra expressiveness comes from predicates of the form $\dots \le M \cdot M'$. 
Indeed, 
$f = \Gl x.\CASE~x~\OF~\{ \con{MkMany} \A y \to (y,y) \}$ has type $\forall p\,q\,a.\: \omega \le p \cdot q \To \con{MkMany} \A p \A a \pto{q} a \ox a$ in our system, while it has three incomparable types in the original $\Gl^q_\to$.

\if0

\newcommand{\Red}[4]{#1 \vdash #2 \to_{#4} #3}

Although we could prove the preservation and progress property by using \cref{lemma:substitution}, 
the properties do not say much about linearity. 
Here, inspired by the existing approaches~\cite{MaZZ10,Morris16}, we consider a finer grained evaluation that tracks over resources to ensure that linear variables can only be used once.
Here, resources in our case are values substituted to linear variables in an open term; notice that, since we can copy or discard any closed terms in the system, and 
we cannot say a value is linear or unrestricted unlike these existing approaches~\cite{MaZZ10,Morris16}, we need to reason about open terms. 

\begin{figure}[t]
\setlength{\jot}{1.8ex}
\setlength{\abovedisplayskip}{0pt}
\setlength{\belowdisplayskip}{0pt}
\begin{gather*}
\ninfer{\rVar}
{
\WTy{\GG}{\GD}{x}{(Q' \To \Gt)[\VSubst{p}{\mu},\VSubst{a}{\Gt}]}{Q}
}
{
\GG(x) = \forall \V{p}\V{a}.\: Q' \To \Gt 
& Q \models \Used{x}{1} \le \GD
}
\\
\ninfer{\rname{WApp}}
{
\WTy{\GG}{\GD}{e \A \bullet}{\Gt}{Q}
}
{
\WTy{\GG}{\GD}{e}{Q \To \Gt}{Q}
& 
Q \vdash \bullet : Q 
}
\quad
\ninfer{\rname{WAbs}}
{
\WTy{\GG}{\GD}{\Gl e}{Q \To \Gt}{Q'}
}
{
\WTy{\GG}{\GD}{e}{\Gt}{Q \wedge Q'}
}
\end{gather*}
\caption{Excerpt typing rules for explicit witness passing}
\label{fig:qf-typing}
\end{figure}

We first transform terms in which witnesses of constraints are passed explicitly~\cite[Section 5]{QualifiedTypes}. This step is necessary in the call-by-value semantics 
because otherwise $e$ of $f : Q \To \Gt = e$ is evaluated at this definition site before $Q$ is enforced at a use site. 
The basic idea of the translation is to view a qualification $Q \To \Gt$ a function that takes a witness. 
Since witnesses do not affect runtime behavior, considering one witness value $\bullet$ is sufficient for us, which has the following typing rule. 
\[
\infer
{ Q \vdash \bullet : Q' }
{
  Q \models Q' 
}
\]
Accordingly, we extend $\Gt$ and $e$ as $\Gt \DEq \dots \mid Q \To \Gt$ and $e \DEq \dots \mid \Gl e \mid e \A \bullet$.
Then, we define the judgment $\WTy{\GG}{\GD}{e}{\Gt}{Q}$ for typing expressions with explicit witness passing. 
Its definition is almost the same as $\NTy{\GG}{\GD}{e}{\Gt}{Q}$, so we only show the differences in \cref{fig:qf-typing}.

\newcommand{\Trans}[1]{#1^{*}}

The conversion from $e$ of $\NTy{\GG}{\GD}{e}{\Gt}{Q}$ to $e'$ satisfying $\WTy{\GG}{\GD}{e'}{\Gt}{Q}$ is rather straightforward; 
we just replace variable expressions $x$ in $e$ with $x \A \bullet$. We sometimes write the transformed expression as $\Trans{e}$. 
Translation of programs is also straightforward; 
we just replace right-hand sides $e$ of bindings with $f = \Gl(\Trans{e})$.

\begin{figure}[t]\small
\setlength{\jot}{1.8ex}
\setlength{\abovedisplayskip}{0pt}
\setlength{\belowdisplayskip}{0pt}
\begin{gather*}
\infer
{
\Red{\eta}{x}{\eta(x)}{x}
}
{}
\quad
\infer
{
\Red{\eta}{(\Gl x.e) \A v}{e[\Subst{x}{v}]}{\Empty}
}
{
}
\quad
\infer
{
\Red{\eta}{(\Gl e) \A \bullet}{e}{\Empty}
}
{
}
\\
\infer
{
\Red{\eta}{\CASE~\con{C}_j \A \V{v}~\OF~\{ \con{C}_i \A \V{x_i} \to e_i \}_i}{e_j[\V{\Subst{x_j}{v}}]}{\Empty}
}
{}
\\
\infer
{
\Red{\eta}{e_1 \A e_2}{e_1' \A e_2}{R}
}
{
\Red{\eta}{e_1}{e'_1}{R}
}
\quad
\infer
{
\Red{\eta}{v \A e}{v \A e'}{R}
}
{
\Red{\eta}{e}{e'}{R}
}
\quad
\infer
{
\Red{\eta}{e_1 \A \bullet}{e_1' \A \bullet }{R}
}
{
\Red{\eta}{e_1}{e'_1}{R}
}
\quad 
\infer
{
\Red{\eta}
   {\con{C} \A \V{v} \A e_i  \A \V{e}}
   {\con{C} \A \V{v} \A e_i' \A \V{e}}
   {R}
}
{
\Red{\eta}
    {e_i}{e'_i}{R}
}
\\
\infer
{
\Red{\eta}
    { \CASE~e_0~\OF~\{ \con{C}_i \A \V{x_i} \to e_i \}_i }
    { \CASE~e_0'~\OF~\{ \con{C}_i \A \V{x_i} \to e_i \}_i }
    {R}
}
{
\Red{\eta}{e_0}{e_0'}{R}
}
\end{gather*}
\caption{Call-by-value small-step reduction with monitoring of resource use.}
\label{fig:reduction}
\end{figure}

Now we are ready to state our linearity. 
Intuitively, the linearity here means that, for an expression $e$ of 
$\WTy{\GG}{\GD}{e}{\Gt}{Q}$
uses value of each variable $x$ in $\GG$ exactly once if $Q \models \GD(x) = 1$. 
To state this formally, we first define the set of values as usual. 
\[
v \DEq \Gl x.e \mid \con{C} \A v_1 \A \dots \A v_n \mid \Gl e \mid \bullet 
\]
Then, we define the small-step evaluation relation 
$\Red{\eta}{e}{e'}{R}$ as \cref{fig:reduction}, which 
$R$ is an at-most-singleton sequence of resources used in the reduction; 
here, we treat free variables in $e$ as resources. 
Intuitively, $\Red{\eta}{e}{e'}{x}$ means that the reduction uses the value of $x$, 
and $\Red{\eta}{e}{e'}{\Empty}$ do not touch any variable. 
Notice that we have a different treatment for global variables (the free variables in $e$) and 
local variables (bound variables in $e$); $\eta$ only concerns the former variables
because values that come only from $e$ do not contain any resources.

The progress property is not so special. 
\begin{lemma}[Progress]\rm 
Suppose $\WTy{\GG}{\GD}{e}{\Gt}{Q}$. 
Then, $e$ is a value or $\Red{\eta}{e}{e'}{R}$ for some $e'$ for any $\eta$ with $\fv(e) \subseteq \dom(\eta)$. \qed
\end{lemma}

Our preservation concerns the linearity; it states that if a linear variable is evaluated in $e$, 
then the residual expression can be typed without $x$. 
\begin{lemma}[Preservation]\rm
Suppose $\WTy{\GG}{\GD}{e}{\Gt}{Q}$ and $\Red{\eta}{e}{e'}{R}$ for some $\eta$ such that $\WTy{\Empty}{\Empty}{\eta(x)}{Q_x \To \Gt_x}{Q}$ for any 
instance $Q_x \To \Gt_x$ of $\GG(x)$. Let $\GD_R = \emptyset$ if $R = \Empty$ and otherwise $\GD_R = \Used{x}{\GD(x)}$.
Then, there exists $\GD'$ such that $Q \models \GD = \GD' + \GD_R$ 
and $\WTy{\GG}{\GD'}{e'}{\Gt}{Q}$.
\iffullversion
\else
\todo{Check the proof again}
\qed
\fi
\end{lemma}

\iffullversion
\begin{proof}
By the induction on the typing derivation. 

\CaseP{\rname{Eq}} In this case we have 
$\WTy{\GG}{\GD'}{e}{\Gt}{Q}$ for some $\GD''$ such that $Q \models \GD = \GD''$. 
Then, by the induction hypothesis, we have $e'$ such that 
$\WTy{\GG}{\GD'}{e'}{\Gt}{Q}$ where $Q \models \GD'' = \GD' + \GD_R$. 
Then, by transitivity of $=$, we have $Q \models \GD = \GD' + \GD_R$. 

\CaseP{\rVar} In this case, we have $e = x$, $Q \models \set{x \ofUse 1} \le \GD$, $e' = \eta(x)$
and $R = x$. 
Take $\GD_\omega = \set{ z \ofUse \omega \mid Q \models \GD(z) = \omega }$.
Then, we have $Q \models \GD = \set{x \ofUse \GD(x)} + \GD_\omega$. 
By $Q \models \emptyset \le \GD_\omega$. 
and \cref{lemma:weakening-multiplicity}, we have $\NTy{\GG}{\GD_\omega}{\eta(x)}{\Gt}{Q}$ because $\Gt$ is an instance of $\GG(x)$. 

\CaseP{\rAbs} In this case, we have $e = \Gl x.e_1$. 
Thus, there is no $e'$ because $e$ is a value. 

\CaseP{\rApp} In this case, we have $e = e_1 \A e_2$,  
$\NTy{\GG}{\GD_1}{e_1}{\Gs \pto{\mu} \Gt}{Q}$, 
$\NTy{\GG}{\GD_2}{e_2}{\Gs}{Q}$ and $\GD = \GD_1 + \mu \GD_2$. 
Then, we have three possibilities of how $\Red{\eta}{e}{e'}{R}$ is derived. 
\begin{itemize}
 \item \Case{$\Red{\eta}{e_1}{e_1'}{R}$} By induction hypothesis we have 
   $\NTy{\GD}{\GD_1'}{e'_1}{\Gs \pto{\mu} \Gt}{Q}$ where $Q \models \GD_1 = \GD'_1 + \GD_R$.
   Then, just take $\GD' = \GD'_1 + \mu\GD_2$. Then, since $Q\models \GD = \GD_1 + \GD_2$
   and thus $Q \models \GD = \GD'_1 + \GD_R + \GD_2$, we have $\GD = \GD'_1 + \GD_R$. 
  \item \Case{$e_1 = v_1$ and $\Red{\eta}{e_2}{e'_2}{R}$}
    By induction hypothesis we have 
    $\NTy{\GG}{\GD'_2}{e'_2}{\Gs}{Q}$ where $Q \models \GD_2 = \GD_2' + \GD_R$. 
    If $R = \Empty$, the rest of the discussion is straightforward because $\GD_R = \emptyset$. 
    Consider the case where $R = x$ and $\GD_R = \set{x \mapsto \GD(x)}$. 
    Then, it must be the case that $Q \models \mu \GD_R = \GD_R$. 
    The rest of the discussion is similar to the above. 
  \item \Case{$e_1 = \Gl x.e_3$ and $e_2 = v_2$} 
    In this case, we have $e' = e_3[\Subst{x}{v_2}]$ and $R = \Empty$. 
    Also, by inversion, we
    have $\NTy{\GG, x:\Gs}{\GD_1,x \ofUse \mu}{e_3}{\Gt}{Q}$.
    Then, by \cref{lemma:substitution} and \cref{lemma:weakening-constraint}, 
    we have $\NTy{\GG}{\GD_1 + \mu\GD_2}{e'}{\Gt}{Q}$.
\end{itemize}

\CaseP{\rCon}
In this case, we have $e = \con{C} \A \V{e}$, $\Gt = \con{D} \A \V{\mu} \A \V{\Gs}$,
$\NTy{\GG}{\GD_i}{e_i}{\Gt_i[\V{\Subst{p}{\mu}},\V{\Subst{a}{\Gs}}]}{Q}$ for each $i$ 
and $\GD = \omega \GD_0 + \sum_i \nu_i[\V{\Subst{p}{\mu}}] \GD_i$, where
$\con{C} : \forall \V{p}\V{a}.\: \V{\Gs} \pto{\V{\nu}} \A \con{D} \A \V{p} \A \V{a}$.
Let $\hat{\Gt_i} = \Gs_i[\V{\Subst{p}{\mu}},\V{\Subst{a}{\Gs}}]$ and $\hat{\nu_i} = \nu_i[\V{\Subst{p}{\mu}}]$.
Since $\Red{\eta}{e}{e'}{R}$, there exists $j$ such that $e_i = v_i$ for any $i < j$, 
and $\Red{\eta}{e_j}{e'_j}{R}$. Then, by the induction hypothesis, 
we have $\NTy{\GG}{\GD_j'}{e'_j}{\hat{\Gt_j}}{Q}$ for some $\GD_j'$ such that $Q \models \GD_j = \GD_j' + \GD_R$. 
If $R = \Empty$, then the rest of discussion is straightforward. 
Consider the case where $R = x$ and $\GD_R = \set{x \mapsto \GD(x)}$. 
Then, it must be the case that $Q \models \hat{\nu_j} \GD_R = \GD_R$, 
and thus it suffices to take $\GD' = \omega\GD_0 + \sum_{i \ne j} \hat{\nu_i} \GD_i + \GD'_j$.

\CaseP{\rCase}
In this case, we have $e = \CASE~e_0~\OF~\{ \con{C}_i \A \V{x_i} \to e_i \}_i$, 
$\NTy{\GG}{\GD_0}{e_0}{\con{D} \A \V{\mu} \A \V{\Gs}}{Q}$, 
and, for each $i$, 
\(\NTy{\GG,\V{x_i:\Gt_i[\V{\Subst{p}{\mu}},\V{\Subst{a}{\Gs}}]}}{\GD_i, \V{x_i \ofUse \mu_0 \nu_i[\V{\Subst{p}{\mu}}]}}{e_i}{\Gt'}{Q}\)
where $\con{C}_i : \forall \V{p}\V{a}. \: \V{\Gt_i} \pto{\V{\nu_i}} \con{D} \A \V{p} \A \V{a}$.
Let $\hat{\Gt_i} = \Gs_i[\V{\Subst{p}{\mu}},\V{\Subst{a}{\Gs}}]$ and $\hat{\nu_i} = \nu_i[\V{\Subst{p}{\mu}}]$.
If $\Red{\eta}{e_0}{e'_0}{R}$, then the proof is similar to the above cases. 
So, consider the case where $e_0 = \con{C}_j \A \V{v}$. 
Then, the rest of the proof is done by \cref{lemma:substitution,lemma:weakening-constraint,lemma:weakening-multiplicity}.
\end{proof}
\fi

\begin{corollary}\rm 
Suppose $\WTy{\GG}{\GD}{e_1}{\Gt}{Q}$ and ${\eta} \vdash e_1 \to_{R_1} e_2 \to_{R_2} \dots \to_{R_n} v$ for some $\eta$ such that $\WTy{\Empty}{\Empty}{\eta(x)}{Q_x \To \Gt_x}{Q}$ for any 
instance $Q_x \To \Gt_x$ of $\GG(x)$.
Then, for each $x \in \dom(\GD)$, unless $Q \models \GD(x) = \omega$, $x$ appears exactly once in $R_1,\dots,R_n$. \qed
\end{corollary}

\fi
  \section{Type Inference}
\label{sec:inference}

In this section, we give a type inference method for the type system in
the previous section. Following \cite[Section 3]{VytiniotisJSS11}, we adopt the standard two-phase 
approach; we first gather constraints on types and then solve them. 
As mentioned in \cref{sec:intro}, the inference system described here has the issue of ambiguity, 
which will be addressed in \cref{sec:defaulting}.

\subsection{Inference Algorithm}

We first extend types $\Gt$ and multiplicities $\mu$ to include \emph{unification variables}. 
\[
 \Gt \DEq \dots \mid \Ga \qquad \mu \DEq \dots \mid \pi 
\]
We call $\Ga$/$\Gp$ a unification type/multiplicity variable, which will be substituted by a concrete type/multiplicity (including rigid variables) during the inference.  
Similarly to $\ftv(\V{t})$, we write $\fuv(\V{t})$ for the unification variables (of both sorts) in $\V{t}$, where each ${t_i}$ ranges over any syntactic element (such as $\Gt$, $Q$, $\GG$, and $\GD$).

Besides $Q$, the algorithm will generate equality constraints $\Gt \eqty \Gt'$.
Formally, the sets of \emph{generated constraints} $C$ and \emph{generated predicates} $\psi$ are given by
\[
  C \DEq \bigwedge_i \psi_i \qquad \psi \DEq \phi \mid \Gt \eqty \Gt' \]

\begin{figure}[t]\small
\setlength{\jot}{1.4ex}
\setlength{\abovedisplayskip}{0pt}
\setlength{\belowdisplayskip}{0pt}
\begin{gather*}
\infer
{
 \ITy{\GG}{\Used{x}{1}}{x}{\Gt[\V{\Subst{p}{\pi}}, \V{\Subst{a}{\Ga}}]}{Q[\V{\Subst{p}{\pi}}]}
}
{
\GG(x) = \forall \V{p} \V{a}. Q \To \Gt & \V{\Ga},\V{\pi}: \text{fresh}
}
\quad
\infer
{
 \ITy{\GG}{\GD}{\Gl x.e}{\Ga \pto{\pi} \Gt}{C \wedge  M \le \pi}
}
{
 \ITy{\GG,x:\Ga}{\GD,\Used{x}{M}}{e}{\Gt}{C} & \Ga,\pi: \text{fresh}
}
\\
\infer
{
  \ITy{\GG}{\GD_1 + \pi \GD_2}{e_1 \A e_2}{\beta}{C_1 \wedge C_2 \wedge \Gt_1 \eqty (\Gt_2 \pto{\pi} \beta)}
}
{
  \ITy{\GG}{\GD_1}{e_1}{\Gt_1}{C_2}
  &
  \ITy{\GG}{\GD_2}{e_2}{\Gt_2}{C_1}
  &
  \Gb, \pi: \text{fresh}
}
\\
\infer
{
  \ITy{\GG}{\sum_i\nu_i[\V{\Subst{p}{\pi}}]\GD_i}{\con{C} \A \V{e}}{\con{D} \A \V{\pi} \A \V{\Ga}}{\bigwedge_i C_i \wedge \Gt_i \eqty \Gs_i[\VSubst{p}{\pi},\VSubst{a}{\Ga}]}
}
{
  \con{C} : \forall \V{p} \V{a}.\: \V{\Gs} \pto{\V{\nu}} \con{D} \A \V{p} \A \V{a} 
  &
  \{ \ITy{\GG}{\GD_i}{e_i}{\Gt_i}{C_i} \}_i 
  & \V{\Ga},\V{\pi}: \text{fresh}
}
\\
\infer
{
  \ITy{\GG}{\pi_0 \GD_0 + \bigsqcup_i \GD_i}{\CASE~e_0~\OF~\{ \con{C}_i \A \V{x_i} \to e_i \}_i}{\Gb}{C'}
}
{
  \bb
  \ITy{\GG}{\GD_0}{e_0}{\Gt_0}{C_0} \quad \pi_0,\V{\pi_i}, \V{\Ga_i}, \Gb: \text{fresh}
  \\
  \left\{
  \bb
  \con{C}_i : \forall \V{p}\V{a}. \: \V{\Gt_i} \pto{\V{\nu_i}} \con{D} \A \V{p} \A \V{a} 
  \\
  \ITy{\GG,\V{x_i:\Gt_i[\Subst{p}{\pi_i},\Subst{a}{\Ga_i}]}}{\GD_i, \V{\Used{x_i}{M_i}}}{e_i}{\Gt'_i}{C_i}
  \ee
  \right\}_i
  \\
  C' = C_0 \wedge \bigwedge_i \left( C_i \wedge \Gb \eqty \Gt_i' \wedge (\Gt_0 \eqty \con{D} \A \V{\pi_i} \A \V{\Ga_i}) \wedge \bigwedge_j M_{ij} \le \pi_0 \nu_{ij}[\Subst{p}{\pi_i}]  \right)
  \ee
}
\end{gather*}
\caption{Type inference rules for expressions}
\label{fig:inference-exp}
\end{figure}

\begin{figure}[t]\small
\setlength{\jot}{1.4ex}
\setlength{\abovedisplayskip}{0pt}
\setlength{\belowdisplayskip}{0pt}
\begin{gather*}
\infer
{
 \ITyP{\GG}{\Empty}
}
{
}
\qquad
\infer
{
  \ITyP{\GG}{f = e; \mi{prog}}
}
{
  \bbc
  \ITy{\GG}{\GD}{e}{\Gt}{C}
  \quad
  \ISimpl{\top}{C}{Q}{\Gth}
  \quad
  \set{\V{\pi}\V{\Ga}} = \fuv(Q, \Gt\Gth)
  \\
  \V{p},\V{a}: \text{fresh}
  \quad
  \ITyP{\GG, f : \forall \V{p}\V{a}. (Q \To \Gt\Gth)[\V{\Subst{\Ga}{a},\Subst{\pi}{p}}]}{\mi{prog}}
  \ee
}
\\
\infer
{
  \ITyP{\GG}{f : (\forall\V{p}\V{a}. Q \To \Gt) = e; \mi{prog}} 
}
{
  \bbc
  \ITy{\GG}{\GD}{e}{\Gs}{C}
  \quad
  \ISimpl{Q}{C \wedge \Gt \eqty \Gs}{\top}{\Gth}
  \quad
  \ITyP{\GG, f : \forall\V{p}\V{a}. Q \To \Gt}{\mi{prog}}
  \ee
}
\end{gather*}
\caption{Type inference rules for programs}
\label{fig:inference-prog}
\end{figure}

Then, we define \emph{type inference judgment for expressions}, $\ITy{\GG}{\GD}{e}{\Gt}{C}$, 
which reads that, given $\GG$ and $e$, type $\Gt$
is inferred together with variable use $\GD$ and constraints $C$, 
by the rules in \cref{fig:inference-exp}.
Note that $\GD$ is
also synthesized as well as $\Gt$ and $C$ in this step.
This difference in the treatment of $\GG$ and $\GD$ is why we separate multiplicity environments $\GD$ from typing environments $\GG$.

\begin{figure}[t]\small
\setlength{\jot}{1.4ex}
\setlength{\abovedisplayskip}{0pt}
\setlength{\belowdisplayskip}{0pt}
\begin{gather*}
\ninfer{\rname{S-Fun}}
{
\Simpl{Q}{(\Gs \pto{\mu} \Gt) \eqty (\Gs' \pto{\mu'} \Gt') \wedge C}{Q'}{\Gth}
}
{
\Simpl{Q}{\Gs \eqty \Gs' \wedge \mu \le \mu' \wedge \mu' \le \mu \wedge \Gt \eqty \Gt'}{Q'}{\Gth}
}
\\
\ninfer{\rname{S-Data}}
{
\Simpl{Q}{(\con{D} \A \V{\mu} \A \V{\Gs}) \eqty (\con{D} \A \V{\mu'} \A \V{\Gs'}) \wedge C}{Q'}{\Gth}
}
{
\Simpl{Q}{\V{\mu \le \mu'} \wedge \V{\mu' \le \mu} \wedge \V{\Gs \eqty \Gs'} \wedge C}{Q'}{\Gth}
}
\\
\ninfer{\rUnify}
{
\Simpl{Q}{\Ga \eqty \Gt \wedge C}{Q'}{\Gth \circ [\Subst{\Ga}{\Gt}]}
}
{
\Ga \not\in \fuv(\Gt) & \Simpl{Q}{C[\Subst{\Ga}{\Gt}]}{Q'}{\Gth}
}
\quad
\ninfer{\rname{S-Triv}}
{
\Simpl{Q}{\Gt \eqty \Gt \wedge C}{Q'}{\Gth}
}
{
\Simpl{Q}{C}{Q'}{\Gth}
}
\\
\ninfer{\rEntail}
{
\ISimpl{Q}{\phi \wedge Q_\mr{w} \wedge C}{Q'}{\Gth}
}
{
Q \wedge Q_\mr{w} \models \phi
&
\ISimpl{Q}{Q_\mr{w} \wedge C}{Q'}{\Gth}
}
\qquad
\ninfer{\rRem}
{
\ISimpl{Q}{Q'}{Q'}{\emptyset}
}
{
\text{no other rules can apply}
}
\end{gather*}
\caption{Simplification rules (modulo commutativity and associativity of $\wedge$ and commutativity of $\eqty$)}
\label{fig:simplification}
\end{figure}

Gathered constraints are solved when we process top-level bindings.
\Cref{fig:inference-prog} defines \emph{type inference judgment for programs}, $\ITyP{\GG}{\var{prog}}$, which reads that the inference finds $\var{prog}$ well-typed under $\GG$.
In the rules, manipulation of constraints is done by the \emph{simplification judgment} 
$\ISimpl{Q}{C}{Q'}{\Gth}$, 
which simplifies $C$ under the assumption $Q$ into the pair $(Q',\Gth)$ of residual constraints $Q'$ and substitution $\Gth$ for unification variables, 
where $(Q',\Gth)$ is expected to be equivalent in some sense to $C$ under the assumption $Q$.
The idea underlying our simplification is to solve type equality constraints in $C$ as much as possible
and then remove predicates that are implied by $Q$. 
Rules \rname{s-Fun}, \rname{s-Data}, \rname{s-Uni}, and \rname{S-Triv} are responsible for the former, 
which decompose type equality constraints and yield substitutions once either of the sides becomes a unification variable.
Rules {\rEntail} and {\rRem} are responsible for the latter, which 
remove predicates implied by $Q$ and then return the residual constraints.  
Rule {\rEntail} checks $Q \models \phi$; a concrete method for this check will be discussed in \cref{sec:entailment-checking}.

\begin{example}[$\var{app}$]
\label{example:app}
Let us illustrate how the system infers a type for $\var{app} = \Gl f.\Gl x. f \A x$. 
We have the following derivation for its body $\Gl f. \Gl x. f \A x$:
\[
\infer
{
 \ITy{}{\emptyset}{\Gl f.\Gl x.f \A x}{\Ga_f \pto{\pi_f} \Ga_x \pto{\pi_x} \Gb}{\Ga_f \eqty (\Ga_x \pto{\pi} \Gb) \wedge \pi_x \le \pi \wedge 1 \le \pi_f}
}
{
  \infer
  { \ITy{f : \Ga_f}{\Used{f}{1}}{\Gl x.f \A x}{ \Ga_x \pto{\pi_x} \Gb }{\Ga_f \eqty (\Ga_x \pto{\pi} \Gb) \wedge \pi_x \le \pi} }
  {
    \infer
    { \ITy{f : \Ga_f, x : \Ga_x}{\Used{f}{1},\Used{x}{\pi}}{f \A x}{ \Gb }{ \Ga_f \eqty (\Ga_x \pto{\pi} \Gb) } }
    {
      \infer{\ITy{f : \Ga_f}{\Used{f}{1}}{f}{\Ga_f}{\top}}{}
      & 
      \infer{\ITy{x : \Ga_x}{\Used{x}{1}}{x}{\Ga_x}{\top}}{}
    }
  }
}
\]
The highlights in the above derivation are: 
\begin{itemize}
 \item In the last two steps, $f$ is assigned to type $\Ga_f$ and multiplicity $\pi_f$, 
       and $x$ is assigned to type $\Ga_x$ and multiplicity $\pi_x$. 
 \item Then, in the third last step, for $f \A x$, the system infers type $\Gb$ with constraint $\Ga_f \eqty (\Ga_x \pto{\pi} \Gb)$.
       At the same time, the variable use in $f \A x$ is also inferred as $\Used{f}{1},\Used{x}{\pi}$.
       Note that the use of $x$ is $\pi$ because it is passed to $f : \Ga_x \pto{\pi} \Gb$. 
 \item After that, in the last two steps again, the system yields constraints $\pi_x \le \pi$ and $1 \le \pi_f$. 
\end{itemize}
As a result, the type $\Gt = {\Ga_f \pto{\pi_f} \Ga_x \pto{\pi_x} \Gb}$ is inferred with the constraint $C = {\Ga_f \eqty (\Ga_x \pto{\pi} \Gb) \wedge \pi_x \le \pi \wedge 1 \le \pi_f}$.

Then, we try to assign a polytype to $\var{app}$ by the rules in \cref{fig:inference-exp}.
By simplification, we have $\Simpl{\top}{C}{\pi_x \le \pi}{[\Subst{\Ga_f}{(\Ga_x \pto{\pi} \Gb)}]}$.
Thus, by generalizing $\Gt[\Subst{\Ga_f}{(\Ga_x \pto{\pi} \Gb)}] = (\Ga_x \pto{\pi} \Gb) \pto{\pi_f} \Ga_x \pto{\pi_x} \Gb$ with $\pi_x \le \pi$, 
we obtain the following type for $\var{app}$:
\begin{qedmath}
 \var{app} :
  \forall p\,p_f\,p_x\,a\,b.\: p \le p_x \To (a \pto{p} b) \pto{p_f} a \pto{p_x} b
\end{qedmath}
\end{example}

\paragraph{Correctness}
We first prepare some definitions for the correctness discussions. 
First, we allow substitutions 
$\Gth$ to replace unification multiplicity variables as well as unification type variables. 
Then, we extend the notion of $\models$ and write $C \models C'$ 
if $C' \Gth$ holds when $C \Gth$ holds. From now on, we require that substitutions are idempotent, \ie, $\Gt \Gth \Gth = \Gt\Gth$ for any $\Gt$, which 
excludes substitutions $[\Subst{\Ga}{\con{List} \A \Ga}]$ and $[\Subst{\Ga}{\Gb},\Subst{\Gb}{\con{Int}}]$ for example.
Let us write $Q \models \Gth = \Gth'$ if $Q \models \Gt\Gth \eqty \Gt\Gth'$ for any $\Gt$. 
The restriction of a substitution $\Gth$ to a domain $X$ is written by $\Gth|_X$.

Consider a pair $(Q_\mr{g}, C_\mr{w})$, where we call $Q_\mr{g}$ and $C_\mr{w}$ given and wanted constraints, respectively. 
Then, a pair $(Q,\Gth)$ is called a (sound) \emph{solution}~\cite{VytiniotisJSS11} for the pair $(Q_\mr{g},C_\mr{w})$ if 
$Q_\mr{g} \wedge Q \models C_\mr{w}\Gth$, $\Disjoint{\dom(\Gth)}{\fuv(Q_\mr{g})}$, and 
$\Disjoint{\dom(\Gth)}{\fuv(Q)}$.
A solution is called \emph{guess-free}~\cite{VytiniotisJSS11} if it satisfies $Q_\mr{g} \wedge C_\mr{w} \models Q \wedge \bigwedge_{\pi \in \dom(\Gth)} (\pi = \Gth(\pi)) \wedge \bigwedge_{\Ga \in \dom(\Gth)} (\Ga \eqty \Gth(\Ga))$ in addition. 
Intuitively, a guess-free solution consists of necessary conditions required 
for a wanted constraint $C_\mr{w}$ to hold, assuming a given constraint $Q_\mr{g}$.
For example, for $(\top, \Ga \eqty (\Gb \pto{1} \Gb))$, $(\top, [\Subst{\Ga}{(\con{Int} \pto{1} \con{Int})},\Subst{\Gb}{\con{Int}}])$ is a solution but not guess-free. 
Very roughly speaking, being for $(Q,\Gth)$ a guess-free solution of $(Q_\mr{g}, C_\mr{w})$ means that 
$(Q,\Gth)$ is equivalent to $C_\mr{w}$ under the assumption $Q_\mr{g}$. 
There can be multiple guess-free solutions; for example, for $(\top, \pi \le 1)$, 
both $(\pi \le 1, \emptyset)$ and $(\top, [\Subst{\pi}{1}])$ are guess-free solutions.

\begin{lemma}[Soundness and Principality of Simplification]\rm
If $\Simpl{Q}{C}{Q'}{\Gth}$, $(Q',\Gth)$ is a guess-free solution for $(Q, C)$. 
\label{lemma:simplification-sound}
\iffullversion\else\qed\fi 
\end{lemma}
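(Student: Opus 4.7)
The plan is to proceed by induction on the derivation of $\Simpl{Q}{C}{Q'}{\Gth}$ in \cref{fig:simplification}, verifying the two conditions of a guess-free solution for each rule: the \emph{soundness} direction $Q \wedge Q' \models C\Gth$ together with the disjointness requirements $\Disjoint{\dom(\Gth)}{\fuv(Q)}$ and $\Disjoint{\dom(\Gth)}{\fuv(Q')}$, and the \emph{principality} direction $Q \wedge C \models Q' \wedge \bigwedge_{\pi \in \dom(\Gth)}(\pi = \Gth(\pi)) \wedge \bigwedge_{\Ga \in \dom(\Gth)}(\Ga \eqty \Gth(\Ga))$. The base case is \rRem, where $\Gth = \emptyset$ and $Q' = C$, so both directions collapse to reflexivity of entailment.

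For the decomposition rules \rname{S-Fun} and \rname{S-Data}, I would invoke the induction hypothesis on the decomposed premise and then argue that the original equation $(\Gs \pto{\mu} \Gt) \eqty (\Gs' \pto{\mu'} \Gt')$ (respectively $(\con{D}\A\V{\mu}\A\V{\Gs}) \eqty (\con{D}\A\V{\mu'}\A\V{\Gs'})$) is logically equivalent to the conjunction of its component equalities. The two-sided multiplicity predicates $\mu \le \mu' \wedge \mu' \le \mu$ are exactly what is required to realise equivalence of annotations modulo $Q$, matching the definition of $\eqty$. For the trivial rule \rname{S-Triv}, the removed conjunct $\Gt \eqty \Gt$ is valid under any $Q$, so both entailments transfer directly from the induction hypothesis.

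The interesting case is \rUnify, where the returned substitution is $\Gth \circ [\Subst{\Ga}{\Gt}]$. I would first use the side condition $\Ga \not\in \fuv(\Gt)$ together with the standing assumption that substitutions are idempotent to conclude that the composed substitution is well-formed and idempotent, and that $\Ga \not\in \dom(\Gth)$ (otherwise applying $\Gth$ to $C[\Subst{\Ga}{\Gt}]$ from the induction hypothesis would re-introduce $\Ga$). Soundness then follows by applying $\Gth$ to the equation $\Ga \eqty \Gt[\Subst{\Ga}{\Gt}] = \Gt$ and combining with the induction hypothesis on $C[\Subst{\Ga}{\Gt}]$. For the guess-free direction, the key observation is that $\Ga \eqty \Gt$ already forces $\Ga$ to take the value $\Gt$ in any model, so $Q \wedge C$ entails $\Ga \eqty \Gt$ and hence implies the equations obtained by composing with $[\Subst{\Ga}{\Gt}]$. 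The disjointness conditions then propagate because the induction hypothesis works on a system where $\Ga$ has been eliminated. I expect this case, specifically the bookkeeping about $\dom(\Gth)$, idempotency, and the interaction of $[\Subst{\Ga}{\Gt}]$ with constraints already present in $Q$ (which must not mention $\Ga$), to be the main technical obstacle.

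Finally, for \rEntail, the premise $Q \wedge Q_\mr{w} \models \phi$ justifies dropping $\phi$: applying the induction hypothesis to $\ISimpl{Q}{Q_\mr{w} \wedge C}{Q'}{\Gth}$ yields soundness and principality for the smaller system, and adding $\phi$ back on the left of the wanted entailment is harmless because $\phi\Gth$ is still entailed by $Q \wedge Q_\mr{w}\Gth \wedge \dots$ once the induction hypothesis provides the necessary facts about $Q_\mr{w}\Gth$. With these cases dispatched, the lemma follows by a straightforward combination of induction hypotheses.
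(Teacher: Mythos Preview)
Your proposal is correct and follows essentially the same route as the paper: induction on the simplification derivation, with \rRem{} as base case, logical-equivalence arguments for \rname{S-Fun}/\rname{S-Data}/\rname{S-Triv}, the substitution bookkeeping for \rUnify, and the entailment-closure argument for \rEntail. One minor point: the disjointness obligation you flag as the ``main technical obstacle'' in the \rUnify{} case is actually cheaper than you expect, because $Q$ and $Q'$ are syntactically conjunctions of predicates $M \le M'$ over multiplicities and therefore cannot mention any unification \emph{type} variable $\Ga$ at all; the paper simply notes this and moves on.
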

\iffullversion
\begin{proof} See Appendix~\ref{proof:simplification-sound}. \end{proof}
\fi

\begin{lemma}[Completeness of Simplification]\rm
If $(Q',\Gth)$ is a solution for $(Q,C)$ where $Q$ is satisfiable,
then $\Simpl{Q}{C}{Q''}{\Gth'}$ for some $Q''$ and $\Gth'$. 
\label{lemma:simplification-complete}
\iffullversion\else\qed\fi 
\end{lemma}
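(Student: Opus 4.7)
The plan is to show by well-founded induction on an appropriate measure of $C$ that, whenever a solution $(Q',\Gth)$ for $(Q,C)$ exists, the simplification procedure can always take a step, eventually terminating via \rRem. I would use the lexicographic measure
\[
\mu(C) = (\#\fuv(C),\; \mathit{size}_{\eqty}(C),\; \#\mathit{predicates}(C)),
\]
where $\mathit{size}_{\eqty}(C)$ is the total syntactic size of the equality constraints in $C$. Each simplification rule strictly decreases this measure: \rname{S-Uni} removes a unification variable (possibly inflating the second component, but dominated by the first); \rname{S-Fun}, \rname{S-Data}, and \rname{S-Triv} decrease the equality-size without touching unification variables; and \rEntail strictly decreases the third component without affecting the first two.

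The heart of the proof is to show that the procedure never gets stuck, i.e.\ that whenever $C$ still contains an equality constraint $\Gt_1 \eqty \Gt_2$, at least one of \rname{S-Fun}, \rname{S-Data}, \rUnify, or \rname{S-Triv} applies. I would proceed by case analysis on the head form of the two sides (modulo the assumed commutativity of $\eqty$):
\begin{itemize}
\item If both sides are equal, \rname{S-Triv} applies.
\item If one side is a unification variable $\Ga$ and the other is a type $\Gt\neq\Ga$, then \rUnify\ applies provided $\Ga\notin\fuv(\Gt)$.
\item If both sides are arrows or both sides are applications of the same $\con{D}$, \rname{S-Fun} or \rname{S-Data} applies.
\end{itemize}
The remaining \emph{stuck} configurations are: (a) a rigid-vs-rigid or rigid-vs-compound clash such as $a \eqty \Gt$ with $\Gt\neq a$; (b) a head-constructor clash such as $\con{D}\,\V{\Gs}\eqty\con{D}'\,\V{\Gs'}$ or $\con{D}\,\V{\Gs}\eqty(\Gs'\pto{\mu}\Gt')$; and (c) an occurs-check failure $\Ga\eqty\Gt$ with $\Ga\in\fuv(\Gt)$ and $\Gt\neq\Ga$. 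I would rule each of these out using the soundness condition $Q_\mr{g}\wedge Q \models C\Gth$ of the assumed solution together with the hypothesis that $Q$ is satisfiable: pick a ground valuation $\theta_0$ with $Q_\mr{g}\wedge Q$ holding, then $\Gt_1\Gth\theta_0$ and $\Gt_2\Gth\theta_0$ must be syntactically equal monotypes, which is impossible in cases (a) and (b); for (c), idempotence of substitutions combined with the standard cardinality argument on type sizes yields a contradiction. Thus the stuck cases cannot occur.

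Once every equality has been resolved, $C$ reduces to a pure conjunction of predicates $\phi_1\wedge\cdots\wedge\phi_k$. Here one applies \rEntail\ greedily to discard all $\phi_i$ entailed by $Q$ together with the predicates still to be kept; whatever remains is returned via \rRem. The choice of $Q''$ and $\Gth'$ produced by the procedure is not required to coincide with the hypothesised $(Q',\Gth)$ — we only need \emph{some} derivation to exist, which the above exhaustive case analysis provides.

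The main obstacle I anticipate is the subtle interaction between the measure and \rUnify: substituting $[\Subst{\Ga}{\Gt}]$ into the remaining constraints can blow up their size if $\Gt$ is large and $\Ga$ appears many times, so the measure must genuinely be lexicographic with the unification-variable count taking priority. A secondary subtlety is the treatment of multiplicity predicates that arise from \rname{S-Fun} and \rname{S-Data} (such as $\mu\le\mu'\wedge\mu'\le\mu$): these enlarge the predicate count, so it is important that the second component dominates the third in the measure, and that \rEntail\ is only invoked \emph{after} equality-driven decomposition has ceased to produce new predicates.
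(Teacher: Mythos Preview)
Your proposal is correct and rests on the same core observation as the paper, but the presentations differ markedly in style and level of detail.

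The paper's own proof is a five-line contrapositive: it simply asserts that if no simplification derivation exists, then $C$ must contain a stuck equality of one of three shapes (arrow versus datatype, mismatched datatype heads, or an occurs-check failure $\Ga \eqty \Gt$ with $\Ga \in \fuv(\Gt)$), and each of these is incompatible with the existence of a solution when $Q$ is satisfiable. Termination of the rewrite system is left implicit.

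Your version unfolds the same argument in the forward direction: you give an explicit lexicographic measure to justify termination, then perform the identical stuck-case analysis to establish progress. This buys you a genuinely self-contained proof rather than one that leans on the reader's familiarity with unification termination. You also list the rigid-variable clash (your case~(a), e.g.\ $a \eqty b$ or $a \eqty \con{D}\,\V{\Gs}$), which the paper's enumeration appears to omit; that case is real in this grammar and is ruled out by exactly the argument you give.

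One small remark: in ruling out the stuck cases you pick a valuation making $Q_\mr{g}\wedge Q$ hold (in this lemma's notation, $Q \wedge Q'$). Strictly, the hypothesis only gives satisfiability of the \emph{given} constraint $Q$, not of $Q \wedge Q'$; if the residual $Q'$ were unsatisfiable the entailment $Q \wedge Q' \models C\Gth$ is vacuous and says nothing about $C$. The paper's proof has the same tacit gap, so this is not a defect of your approach relative to the original, but it is worth flagging if you want full rigour.
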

\iffullversion
\begin{proof} See \cref{proof:simplification-complete}. \end{proof}
\fi

\begin{theorem}[Soundness of Inference]\rm
Suppose $\ITy{\GG}{\GD}{e}{\Gt}{C}$ and there is a solution $(Q,\Gth)$ for $(\top, C)$.
Then, we have $\NTy{\GG\Gth}{\GD\Gth}{e}{\Gt\Gth}{Q}$.
\label{theorem:inference-sound}
\iffullversion\else\qed\fi 
\end{theorem}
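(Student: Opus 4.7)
The plan is to prove the theorem by structural induction on the derivation of $\ITy{\GG}{\GD}{e}{\Gt}{C}$, using one case per inference rule in \cref{fig:inference-exp}. In every case the strategy is the same: restrict the solution $(Q,\Gth)$ to the sub-derivations (noting that if $(Q,\Gth)$ solves $(\top, C_1 \wedge C_2)$ then it also solves $(\top, C_i)$), apply the induction hypothesis to each premise, and then rebuild the typing derivation with the matching rule from \cref{fig:typing}, using rule \rname{Eq} together with \cref{lemma:weakening-constraint,lemma:weakening-multiplicity} to reconcile what the inference synthesised with what the declarative rule requires.

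For the base case \rname{Var}, the assumption that $(Q,\Gth)$ solves $(\top, Q_0[\V{\Subst{p}{\pi}}])$ directly provides the premise $Q \models Q_0[\V{\Subst{p}{\pi}}]\Gth$ needed by the typing rule \rVar; the multiplicity premise $Q \models \Used{x}{1} \le \Used{x}{1}$ is trivial. For \rname{Abs}, the solution additionally makes $M\Gth \le \pi\Gth$ true under $Q$, so after invoking the IH on the premise we use \cref{lemma:weakening-multiplicity} to replace $\Used{x}{M\Gth}$ by $\Used{x}{\pi\Gth}$ before applying \rAbs. For \rname{App} and \rname{Con}, the equality constraints such as $\Gt_1 \eqty \Gt_2 \pto{\pi} \Gb$ and $\Gt_i \eqty \Gs_i[\V{\Subst{p}{\pi}},\V{\Subst{a}{\Ga}}]$ hold after applying $\Gth$, so rule \rname{Eq} lets us rewrite each sub-expression's inferred type into the shape demanded by \rApp or \rCon; the multiplicity environments line up literally because the operations $(+)$ and $\pi (-)$ in the inference rule are the same ones used in \cref{fig:typing} and they commute with substitution.

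The main obstacle is the \rname{Case} rule, whose constraint $C'$ mixes several kinds of conditions at once: the shape $\Gt_0 \eqty \con{D} \A \V{\pi_i} \A \V{\Ga_i}$ of the scrutinee, the agreement $\Gb \eqty \Gt_i'$ between the branches, and the per-variable inequalities $M_{ij} \le \pi_0 \nu_{ij}[\Subst{p}{\pi_i}]$. I would first apply the IH to the scrutinee derivation and to each branch derivation with $(Q,\Gth)$, then use \rname{Eq} to rewrite the scrutinee's type into $\con{D} \A \V{\pi_i\Gth} \A \V{\Ga_i\Gth}$ and each branch's type into $\Gb\Gth$. Next, \cref{lemma:weakening-multiplicity} promotes each $\Used{x_i}{M_{ij}\Gth}$ to the use $\Used{x_i}{\pi_0\Gth \cdot \nu_{ij}[\Subst{p}{\pi_i}]\Gth}$ that \rCase expects. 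After this alignment, \rCase applies and yields the multiplicity environment $\pi_0\Gth \cdot \GD_0\Gth + \bigsqcup_i \GD_i\Gth$, which is precisely $(\pi_0 \GD_0 + \bigsqcup_i \GD_i)\Gth$ because substitution distributes over $(+)$, $(\cdot)$, and $\Lub$.

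The only subtlety worth isolating is that \cref{lemma:weakening-multiplicity} is phrased for extending $\GD$ pointwise, and some cases (notably \rApp and \rCase) produce multiplicities of the form $\pi \GD_2$ where we want to substitute through the scalar; this is fine because $\pi\Gth \cdot (\GD_2\Gth)(x) = (\pi\GD_2)\Gth(x)$ by definition, so no extra lemma is needed. Thus the induction closes, giving $\NTy{\GG\Gth}{\GD\Gth}{e}{\Gt\Gth}{Q}$ as required.
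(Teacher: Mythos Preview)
Your proposal is correct and follows essentially the same route as the paper's own proof: structural induction on the inference derivation, restricting the solution to each conjunct, applying the induction hypothesis, and then rebuilding with the corresponding declarative rule plus \rname{Eq} and \cref{lemma:weakening-multiplicity}. The only superfluous ingredient is the appeal to \cref{lemma:weakening-constraint}, which is never actually needed here since $Q$ is fixed throughout; otherwise the case analysis, including the handling of \rCase, matches the paper's argument.
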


\iffullversion
\begin{proof}
See \cref{proof:inference-sound}. \qed
\end{proof}
\fi

\begin{theorem}[Completeness and Principality of Inference]\rm
Suppose $\ITy{\GG}{\GD}{e}{\Gt}{C}$. 
Suppose also that $\NTy{\GG\Gth'}{\GD'}{e}{\Gt'}{Q'}$ for some substitution $\Gth'$ on unification variables such that 
$\dom(\Gth') \subseteq \fuv(\GG)$ and $\Disjoint{\dom(\Gth')}{\fuv(Q')}$.
Then, there exists $\Gth$ such that 
$\dom(\Gth) \setminus \dom(\Gth') \subseteq X$, 
$(Q', \Gth)$ is a solution for $(\top,C)$,
$Q' \models \Gth|_{\dom(\Gth')} = \Gth'$, 
$Q' \models \Gt \Gth \eqty \Gt'$, 
and $Q' \models \GD \Gth \le \GD'$, 
where $X$ is the set of unification variables introduced in the derivation. 
\label{thm:inference-complete}
\iffullversion\else\qed\fi 
\end{theorem}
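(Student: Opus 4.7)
The plan is to proceed by structural induction on the derivation of the inference judgment $\ITy{\GG}{\GD}{e}{\Gt}{C}$. For each rule, I would invert the given typing derivation $\NTy{\GG\Gth'}{\GD'}{e}{\Gt'}{Q'}$, using \rname{Eq} freely to match the syntactic shape dictated by the expression, to expose the concrete choices that the typing derivation made, and then construct the substitution $\Gth$ by extending $\Gth'$ on precisely the fresh unification variables $X$ introduced by the inference rule, mapping each to the corresponding multiplicities or types read off from the typing derivation.

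For the base case \rVar, the inference instantiates $\GG(x) = \forall \V{p}\V{a}. Q \To \Gt$ with fresh $\V{\pi},\V{\Ga}$, while the typing derivation (modulo \rname{Eq}) instantiates the same polytype with some concrete $\V{\mu},\V{\Gs}$ with $Q' \models Q[\VSubst{p}{\mu}]$ and $Q' \models \Used{x}{1} \le \GD'$; set $\Gth = \Gth' \cup [\VSubst{\pi}{\mu},\VSubst{\Ga}{\Gs}]$, and the required solution and equivalence properties follow immediately. The inductive cases \rAbs, \rApp, \rCon, and \rCase all follow the same scheme: invert the corresponding typing rule (up to \rname{Eq}), apply the induction hypothesis to each subderivation to obtain substitutions extending $\Gth'$, then merge them with fresh-variable assignments dictated by the typing derivation---for \rAbs the fresh $\pi$ is set to the multiplicity annotating the inferred arrow; for \rApp the fresh $\Gb,\pi$ are read off the arrow type of $e_1$; for \rCase the fresh $\pi_0$, $\V{\pi_i},\V{\Ga_i}$, and $\Gb$ record the scrutinee multiplicity, the constructor's multiplicity/type parameters, and the common branch result type. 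Since each inference rule introduces fresh unification variables disjoint from everything else, the merges are consistent and the side condition $\dom(\Gth)\setminus\dom(\Gth') \subseteq X$ is automatic.

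The generated constraints in $C$ are discharged from the premises of the inverted typing rule: equality constraints such as $\Gt_1 \eqty (\Gt_2 \pto{\pi} \Gb)$ in \rApp become trivial once $\Gth$ is extended as above, while predicate constraints such as $M \le \pi$ in \rAbs or $M_{ij} \le \pi_0 \nu_{ij}[\VSubst{p}{\pi_i}]$ in \rCase follow from $Q'$ via the multiplicity-environment premises of the typing rule. The weakening to inequality $Q'\models \GD\Gth \le \GD'$ (rather than equality) provides exactly the slack needed to absorb uses of \rname{Eq} and of \cref{lemma:weakening-multiplicity} performed implicitly during inversion: when $\GD$ is rebuilt from the subderivations via $+$, $\mu(-)$, and $\bigsqcup$, the typing rule only guarantees the result is bounded above by $\GD'$ under $Q'$, and the structural lemmas on $+$, $\mu(-)$, and $\Lub$ with respect to $\le$ from \cref{sec:lang} let us propagate this bound compositionally.

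The main obstacle is the \rCase rule, which simultaneously involves the scrutinee multiplicity $\mu_0$ chosen by the typing, the pattern-bound variable uses $\mu_0 \nu_i[\VSubst{p}{\mu}]$ in each branch, the join $\bigsqcup_i$ versus the pointwise sum $+$, and the per-branch fresh variables $\V{\pi_i},\V{\Ga_i}$ coexisting with the shared $\pi_0,\Gb$; one must also reconcile that different branches' induction hypotheses produce substitutions that must agree on the shared data-type parameters $\V{\pi_i},\V{\Ga_i}$ forced by the equality $\Gt_0 \eqty \con{D}\,\V{\pi_i}\,\V{\Ga_i}$. Once the freshness invariant and the propagation of $\le$ through $+$, $\mu(-)$, and $\Lub$ are tracked carefully, this case closes like the simpler ones.
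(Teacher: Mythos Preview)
Your proposal is correct and follows essentially the same approach as the paper: the paper's proof proceeds by structural induction on $e$ (equivalently, on the inference derivation), in each case inverting the typing derivation modulo \rname{Eq}, applying the induction hypothesis to the subderivations, and constructing $\Gth$ by extending $\Gth'$ on exactly the fresh unification variables with the concrete multiplicities and types read off from the typing derivation. The paper likewise identifies the bookkeeping in the \rApp\ and \rCase\ cases---restricting the substitutions from different subderivations to their own fresh-variable sets $X_i$ before merging, and exploiting $\fuv(C_i,\GD_i)\subseteq X_i\cup\fuv(\GG)$ so that $C_i\Gth = C_i\Gth_i$---as the only nontrivial work, matching your description of the main obstacle.
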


\iffullversion
\begin{proof}
See \cref{proof:inference-complete}. \qed
\end{proof}
\fi

Note that the constraint generation $\ITy{\GG}{\GD}{e}{\Gt}{C}$ always succeeds, 
whereas the generated constraints may possibly be conflicting. 
\cref{thm:inference-complete} states that such a case cannot happen when $e$ is well-typed under the rules 
in \cref{fig:typing}.

\paragraph{Incompleteness in Typing Programs.}
It may sound contradictory to \cref{thm:inference-complete}, but the type inference is indeed incomplete for checking type-annotated bindings. Recall that the typing rule for type-annotated bindings requires that the resulting constraint after simplification must be $\top$. 
However, even when there exists a solution of the form $(\top, \Gth)$ for $(Q, C)$, 
there can be no guess-free solution of this form.
For example, $(\top, \pi \le \pi')$ has a solution $(\top, [\pi \mapsto \pi'])$, but there
are no guess-free solutions of the required form. 
Also, even though there exists a guess-free solution of the form $(\top, \Gth)$, 
the simplification may not return the solution, as guess-free solutions are not always unique. 
For example, for $(\top, \pi \le \pi' \wedge \pi' \le \pi)$, $(\top, [\pi \mapsto \pi'])$ is 
a guess-free solution, whereas we have 
$\ISimpl{\top}{\pi \le \pi' \wedge \pi'\le \pi}{\pi \le \pi' \wedge \pi' \le \pi}{\emptyset}$.
The source of the issue is that constraints on multiplicities can (also) be solved by substitutions. 

Fortunately, this issue disappears when we consider disambiguation in 
\cref{sec:defaulting}.
By disambiguation, we can eliminate constraints for internally-introduced multiplicity unification variables that are invisible from the outside. 
As a result, after processing equality constraints, 
we essentially need only consider rigid multiplicity variables when checking entailment for annotated top-level bindings.

\paragraph{Promoting Equalities to Substituions.}
The inference can infer polytypes 
$\forall p.\; p \le 1 \To \con{Int} \pto{p} \con{Int}$ and  $\forall p_1\, p_2.\; (p_1 \le p_2 \wedge p_2 \le p_1) \To \con{Int} \pto{p_1} \con{Int} \pto{p_2} \con{Int}$, 
while programmers would prefer more simpler types 
$\con{Int} \pto{1} \con{Int}$ and 
$\forall p.\; \con{Int} \pto{p} \con{Int} \pto{p} \con{Int}$;
the simplification so far does not yield substitutions on multiplicity unification variables. 
Adding the following rule remedies the situation:
\[
\ninfer{\rEntailEq}
{
\ISimpl{Q}{Q_\mathrm{w} \wedge C}{Q'}{\Gth \circ [\Subst{\pi}{\mu}]}
}
{
\bbc
\pi \not\in \fuv(Q) \quad
\pi \ne \mu \\
Q \wedge Q_\mathrm{w} \models \pi \le \mu \wedge \mu \le \pi \quad
\ISimpl{Q}{(Q_\mathrm{w} \wedge C)[\Subst{\pi}{\mu}]}{Q'}{\Gth}
\ee
}
\]
This rule says that if $\pi = \mu$ must hold for $Q_\mathrm{w} \wedge C$ to hold, the simplification yields the substitution $[\Subst{\pi}{\mu}]$.
The condition $\pi \not\in \fuv(Q)$ is required for \cref{lemma:simplification-sound}; a solution cannot substitute variables in $Q$.
Note that this rule essentially finds an improving substitution~\cite{Jones95}.

Using the rule is optional. 
Our prototype implementation actually uses {\rEntailEq} only for $Q_\mr{w}$ for which we can find $\mu$ easily: $M \le 1$, $\omega \le \mu$, 
and looping chains $\mu_1 \le \mu_2 \wedge \dots \wedge \mu_{n-1} \le \mu_n \wedge \mu_n \le \mu_1$.

\subsection{Entailment Checking by Horn SAT Solving}
\label{sec:entailment-checking}
The simplification rules rely on the check of entailment $Q \models \phi$. 
For the constraints in this system, we can perform this check in quadratic time at worst but in linear time for most cases.
Specifically, we reduce the checking $Q \models \phi$ to satisfiability of propositional Horn formulas (Horn SAT), which is 
known to be solved in linear time in the number of occurrences of literals~\cite{DowlingG84}, 
where the reduction (precisely, the preprocessing of the reduction) may increase the problem size quadratically. 
The idea of using Horn SAT for constraint solving in linear typing can be found in Mogensen \cite{Mogensen97}.

First, as a preprocess, we normalize both given and wanted constraints by the following rules:
\begin{itemize}
 \item Replace $M_1 \cdot M_2 \le M$ with $M_1 \le M \wedge M_2 \le M$. 
 \item Replace $M \cdot 1$ and $1 \cdot M$ with $M$, and $M \cdot \omega$ and $\omega \cdot M$ with $\omega$.
 \item Remove trivial predicates $1 \le M$ and $M \le \omega$.
\end{itemize}
After this, each predicate $\phi$ has the form 
\(
 \mu \le \prod_i \nu_i\text{.}
\)

After the normalization above, 
we can reduce the entailment checking to satisfiability.
Specifically, we use the following property:
\[
Q \models \mu \le \prod_i \nu_i 
\quad \text{iff} \quad 
Q \wedge \bigwedge_i (\nu_i \le 1) \wedge (\omega \le \mu) \text{~is unsatisfiable}
\]
Here, the constraint $Q \wedge \bigwedge_i (\nu_i \le 1) \wedge (\omega \le \mu)$ intuitively asserts
that there exists a counterexample of $Q \models \mu \le \prod_i \nu_i$.

Then, it is straightforward to reduce the satisfiability of $Q$ to Horn SAT;
we just map $1$ to true and $\omega$ to false
and accordingly map $\le$ and $\cdot$ to $\From$ and $\wedge$, respectively.
Since Horn SAT can be solved in linear time in the number
of occurrences of literals~\cite{DowlingG84}, the reduction also shows
that the satisfiability of $Q$ is checked in linear time in the size of $Q$ if $Q$ is normalized.

\begin{corollary}\rm
Checking $Q \models \phi$ is in linear time if $Q$ and $\phi$ are normalized. \qed
\end{corollary}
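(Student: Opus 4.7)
The plan is to chain together the two reductions established immediately before the corollary and then invoke the Dowling--Gallier linear-time Horn SAT algorithm. Concretely, given normalized $Q$ and $\phi = \mu \le \prod_i \nu_i$, I would first construct the augmented constraint
\[
 Q' \;=\; Q \wedge \bigwedge_i (\nu_i \le 1) \wedge (\omega \le \mu)
\]
and check its unsatisfiability, using the equivalence stated just above the corollary; by that equivalence, $Q \models \phi$ iff $Q'$ is unsatisfiable.

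Next I would verify that $Q'$ is itself normalized, so that the linear-time Horn-SAT reduction applies directly without re-running the (quadratic) preprocessing. Since $\phi$ is assumed normalized, each $\nu_i$ is a single multiplicity (either $1$, $\omega$, or a variable) and $\mu$ is a single multiplicity; consequently the new predicates $\nu_i \le 1$ and $\omega \le \mu$ already satisfy the normal form $\mu' \le \prod_j \nu'_j$ with at most one factor on the right. Hence $|Q'| = O(|Q| + |\phi|)$ and no further normalization is required.

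I would then apply the translation described in the paragraph preceding the corollary: map $1 \mapsto \mathrm{true}$, $\omega \mapsto \mathrm{false}$, $(\le) \mapsto ({\From})$, and $(\cdot) \mapsto (\wedge)$. Because $Q'$ is normalized, every predicate has the form $\mu \le \prod_i \nu_i$, which becomes a Horn clause (a single positive literal implied by a conjunction of literals). The resulting Horn formula has size linear in $|Q'|$, hence linear in $|Q|+|\phi|$, and by Dowling and Gallier~\cite{DowlingG84} its satisfiability is decided in time linear in the number of literal occurrences.

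The only subtle point, and the step I would double-check, is that augmenting $Q$ with the negation-witness clauses does not require any of the (potentially size-squaring) distributivity rewrites from the normalization procedure; this is precisely why the hypothesis that $\phi$ is already normalized is crucial, as it ensures the right-hand side of $\phi$ is a product of atomic multiplicities and so the added clauses $\nu_i \le 1$ and $\omega \le \mu$ are atomic predicates that slot directly into the Horn instance.
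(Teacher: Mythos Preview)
Your proposal is correct and follows essentially the same approach as the paper, which treats the corollary as an immediate consequence of the two preceding reductions (entailment to unsatisfiability, and unsatisfiability to Horn SAT) and provides no separate proof. Your explicit verification that the augmented constraint $Q'$ remains normalized and has size $O(|Q|+|\phi|)$ simply spells out what the paper leaves implicit.
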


The normalization of constraints can duplicate $M$ of $\dots \le M$, 
and thus increases the size quadratically in the worst case. 
Fortunately, 
the quadratic increase is not common because the size of $M$ is bounded in practice, in many cases by one.
Among the rules in \cref{fig:typing}, only the rule that introduces non-singleton $M$ in the right-hand side of $\le$ is {\rCase} for a constructor 
whose arguments' multiplicities are non-constants, such as 
$\con{MkMany} : \forall p\,a.\: a \pto{p} \con{Many} \A p \A a$. 
However, it often suffices to use non-multiplicity-parameterized constructors, such as $\con{Cons} : \forall a.\: a \pto{1} \con{List} \A a \pto{1} \con{List} \A a$,
because such constructors can be used to construct or deconstruct both linear and unrestricted data.

\subsection{Issue: Inference of Ambiguous Types}
\label{sec:ambiguity}
The inference system so far looks nice; the system is sound and complete, and infers principal types. 
However, there still exists an issue to overcome for the system to be useful: 
it often infers ambiguous types~\cite{QualifiedTypes,StuckeyS05} in which internal multiplicity variables leak out to reveal 
internal implementation details. 

Consider $\var{app}' = \Gl f.\Gl x. \var{app} \A f \A x$ for $\var{app} = \Gl f. \Gl x. f \A x$ from \cref{example:app}.
We would expect that equivalent types are inferred for $\var{app}'$ and $\var{app}$. However, this is not the case for the inference system.
In fact, the system infers the following type for $\var{app}'$ (here we reproduce the inferred type of $\var{app}$ for comparison):
\[
\begin{tarray}{lcrll}
 \var{app}  &:& \forall p\, p_f \, p_x \, a \, b.\: & (p \le p_x) & \To (a \pto{p} b) \pto{p_f} a \pto{p_x} b \\
 \var{app}' &:& \forall q\,q_f\,q_x\,p_f\,p_x\,a\,b.\: & (q \le q_x \wedge q_f \le p_f \wedge q_x \le p_x) &\To (a \pto{q} b) \pto{p_f} a \pto{p_x} b
\end{tarray}
\]
We highlight why this type is inferred as follows. \begin{itemize}
 \item By abstractions, $f$ is assigned to type $\Ga_f$ and multiplicity $\pi_f$, 
       and $x$ is assigned to type $\Ga_x$ and multiplicity $\pi_x$. 
 \item By its use, $\var{app}$ is instantiated to type $(\Ga' \pto{\pi'} \Gb') \pto{\pi'_f} \Ga' \pto{\pi'_x} \Gb'$
       with constraint $\pi' \le \pi'_x$. 
 \item For $\var{app} \A f$, the system infers type $\Gb$ with constraint
       $((\Ga' \pto{\pi'} \Gb') \pto{\pi'_f} \Ga' \pto{\pi'_x} \Gb') \eqty (\Ga_f \pto{\pi_1} \Gb)$.
       At the same time, the variable use in the expression is inferred as $\Used{\var{app}}{1},\Used{f}{\pi_1}$.
 \item For $(\var{app} \A f \A x)$, the system infers type $\Gg$ with constraint 
       $\Gb \eqty (\Ga' \pto{\pi_2} \Gg)$.
       At the same time, the variable use in the expression is inferred as $\Used{\var{app}}{1},\Used{f}{\pi_1},\Used{x}{\pi_2}$.
 \item As a result, $\Gl f. \Gl x. \var{app} \A f \A x$ has type $\Ga_f \pto{\pi_f} \Ga_x \pto{\pi_x} \Gg$,
       yielding constraints $\pi_1 \le \pi_f \wedge \pi_2 \le \pi_x$.
\end{itemize}
Then, for the gathered constraints, by simplification (including \rEntailEq), 
we obtain a (guess-free) solution $(Q,\Gth)$ such that 
\(Q = (\pi'_f \le \pi_f \wedge \pi' \le \pi'_x \wedge \pi'_x \le \pi_x)\) and
\(
\Gth = [\Subst{\Ga_f}{(\Ga' \pto{\pi'} \Gb')},\Subst{\pi'_1}{\pi'_f},\Subst{\Gb}{(\Ga_f \pto{\pi'_x} \Gb')},\Subst{\pi_2}{\pi'_x}, \Subst{\Gg}{\Gb'}])
\). Then, after generalizing $(\Ga_f \pto{\pi_f} \Ga_x \pto{\pi_x} \Gg)\Gth = (\Ga' \pto{\pi'} \Gb') \pto{\pi_f} \Ga' \pto{\pi_x} \Gb$, 
we obtain the inferred type above.

There are two problems with this inference result:
\begin{itemize} 
 \item The type of $\var{app}'$ is \emph{ambiguous} in the sense that the type-level variables in the constraint cannot be determined only by those that appear in the type~\cite{QualifiedTypes,StuckeyS05}. Usually, ambiguous types are undesirable, especially when their instantiation affects runtime behavior~\cite{QualifiedTypes,VytiniotisJSS11,StuckeyS05}. 
\item Due to this ambiguity, the types of $\var{app}$ and $\var{app}'$  are not judged equivalent by the inference system. 
   For example, the inference rejects 
   the binding $\var{app}'' : \forall p\, p_f \, p_x \, a \, b.\: (p \le p_x) \To (a \pto{p} b) \pto{p_f} a \pto{p_x} b = \var{app}'$ 
   because the system does not know how to instantiate the ambiguous type-level variables $q_f$ and $q_x$, 
   while the binding is valid in the type system in \cref{sec:lang}. 
\end{itemize}
Inference of ambiguous types is common in the system; it is easily caused by using defined variables.
Rejecting ambiguous types is not a solution for our case because it rejects many programs. 
Defaulting such ambiguous type-level variables to $1$ or $\omega$ is not a solution either because it loses principality in general.
However, we have no other choices than to reject ambiguous types, \emph{as long as multiplicities are relevant in runtime behavior}.

In the next section, we will show how we address the ambiguity issue under the assumption that 
multiplicities are irrelevant at runtime. 
Under this assumption, it is no problem to have multiplicity-monomorphic primitives such as array processing primitives (\eg, $\var{readMArray} : \forall a.\: \con{MArray} \A a \pto{1} \con{Int} \pto{\omega} (\con{MArray} \A a \ox \con{Un} \A a)$)~\cite{VytiniotisJSS11}. 
Note that this assumption does not rule out all multiplicity-polymorphic primitives; 
it just prohibits the primitives from inspecting multiplicities at runtime.

 \section{Disambiguation by Quantifier Elimination}
\label{sec:defaulting}

In this section, we address the issue of ambiguous and leaky types by using quantifier elimination. 
The basic idea is simple; we just view the type of $\var{app}'$ as
\[
\var{app}' : \forall q\,p_f\,p_x\,a\,b.\: (\exists q_x\,q_f.\: q \le q_x \wedge q_f \le p_f \wedge q_x \le p_x) \To (a \pto{q} b) \pto{p_f} a \pto{p_x} b
\]
In this case, the constraint $(\exists q_x\,q_f.\: q \le q_x \wedge q_f \le p_f \wedge q_x \le p_x)$ is logically equivalent to $q \le p_x$, 
and thus we can infer the equivalent types for both $\var{app}$ and $\var{app}'$.
Fortunately, such quantifier elimination is always possible for our representation of constraints; that is, 
for $\exists p.Q$, there always exists $Q'$ that is logically equivalent to $\exists p.Q$.
A technical subtlety is that, although we perform quantifier elimination after generalization in the above explanation, we actually 
perform quantifier elimination just before generalization, or more precisely, as a final step of simplification, 
for compatibility with the simplification in \textsc{OutsideIn(X)}~\cite{VytiniotisJSS11}, especially in the treatment of local assumptions.

\subsection{Elimination of Existential Quantifiers}

The elimination of existential quantifiers is rather easy;
we simply use the well-known fact that a disjunction of a Horn clause and a definite clause can also be represented as a Horn clause. 
Regarding our encoding of normalized predicates (\cref{sec:entailment-checking}) that maps $\mu \le M$ to a Horn clause, the fact can be rephrased as:
\begin{lemma}\rm
$(\mu \le M \vee \omega \le M') \equiv \mu \le M \cdot M'$. \label{lemma:or}\qed
\end{lemma}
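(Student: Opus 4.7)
The plan is to prove the equivalence by a direct semantic argument, working valuation by valuation over the two-element totally ordered domain $(\{1,\omega\},\le)$. Since both sides of the biconditional are closed under arbitrary valuations $\theta$ of the free multiplicity variables, it suffices to check the statement whenever $\mu$, $M$, $M'$ have been collapsed to concrete values in $\{1,\omega\}$, using the fact that $\cdot$ computes the least upper bound on this two-point chain (as already observed in Section~\ref{sec:syntax}).

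For the forward direction I would split on the disjunct that holds. If $\mu\le M$, then since $M\le M\cdot M'$ (the product being an upper bound of $M$), transitivity gives $\mu\le M\cdot M'$. If instead $\omega\le M'$, then under the valuation $M'=\omega$, whence $M\cdot M'=\omega$ and $\mu\le\omega$ trivially. For the backward direction I would split on the value of $\mu$: if $\mu=1$, then $\mu\le M$ holds for free, and if $\mu=\omega$, then $\omega\le M\cdot M'$ forces $M\cdot M'=\omega$, so at least one of $M,M'$ equals $\omega$ by the definition of lub, yielding $\mu\le M$ or $\omega\le M'$ respectively.

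No step is expected to be difficult; the lemma is really a base fact about the two-element semilattice, and the only subtlety is remembering that $\equiv$ is quantified over all valuations so that case analysis on the finite domain is legitimate. A stylistic alternative, closer to the Horn-SAT translation of Section~\ref{sec:entailment-checking}, is to encode $\mu\le M$ as the Horn clause $b(\mu)\From \bigwedge_i b(\nu_i)$ (with $M=\prod_i\nu_i$, $b(1)=\top$, $b(\omega)=\bot$) and $\omega\le M'$ as the goal clause $\bot\From\bigwedge_j b(\rho_j)$ (with $M'=\prod_j\rho_j$); their disjunction is propositionally equivalent to the single Horn clause $b(\mu)\From\bigwedge_i b(\nu_i)\wedge\bigwedge_j b(\rho_j)$, which is exactly the encoding of $\mu\le M\cdot M'$. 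Either presentation fits in a few lines.
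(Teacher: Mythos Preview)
Your proposal is correct. The paper gives no explicit proof of this lemma at all: it is stated with an immediate \qed, the only justification being the preceding sentence that this is ``the well-known fact that a disjunction of a Horn clause and a definite clause can also be represented as a Horn clause,'' read through the encoding of \cref{sec:entailment-checking}. Your second, Horn-SAT presentation is exactly that justification spelled out (and arguably more accurate, since $\omega \le M'$ encodes as a goal clause rather than a definite one). Your first, semantic case analysis on $\{1,\omega\}$ is a slightly more elementary route that the paper does not take; it has the minor advantage of not presupposing the clause encoding and of making the use of $\cdot$ as least upper bound explicit. Either argument is adequate and short, so there is no substantive divergence to flag.
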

Here, we extend constraints to include $\vee$ and 
write $\equiv$ for the logical equivalence; that is, $Q \equiv Q'$ if and only if $Q \models Q'$ and $Q' \models Q$.

As a corollary, we obtain the following result:
\begin{corollary}\rm 
There effectively exists a quantifier-free constraint $Q'$, denoted by $\Elim{\pi}{Q}$, such that $Q'$ is logically equivalent to $\exists \pi.Q$. 
\end{corollary}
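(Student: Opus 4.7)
The plan is to construct $\Elim{\pi}{Q}$ explicitly by a small resolution-style procedure, using \cref{lemma:or} as the key algebraic identity. First I would bring $Q$ into the normal form described in \cref{sec:entailment-checking}, so that every conjunct has the shape $\mu \le M$ with $\mu$ a single multiplicity and $M$ a product of multiplicities. I would then simplify any occurrence of $\pi \cdot \pi$ to $\pi$ (using idempotency on $\{1,\omega\}$) and drop any trivially valid predicate in which $\pi$ appears on both sides, such as $\pi \le \pi \cdot M$. After this preprocessing, every conjunct involving $\pi$ has exactly one of two shapes: an upper bound $\pi \le M_i$ (with $\pi \notin M_i$), or a lower bound $\mu_j \le \pi \cdot N_j$ (with $\pi \notin N_j$, and $\mu_j \ne \pi$).

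Write $Q = Q_0 \wedge \bigwedge_{i \in I}(\pi \le M_i) \wedge \bigwedge_{j \in J}(\mu_j \le \pi \cdot N_j)$ where $Q_0$ does not mention $\pi$. Since $\pi$ ranges over $\{1,\omega\}$, the key observation is
\[
\exists \pi.\, Q \;\equiv\; Q_0 \wedge \bigl(Q[\pi{:=}1] \vee Q[\pi{:=}\omega]\bigr).
\]
Setting $\pi := 1$ makes every upper bound $\pi \le M_i$ trivially true (all predicates have been normalized to remove $1 \le \dots$) and reduces each lower bound to $\mu_j \le N_j$. Setting $\pi := \omega$ makes every lower bound trivial (since $\mu_j \le \omega$) and reduces each upper bound to $\omega \le M_i$. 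Thus the inner disjunction is equivalent to $\bigl(\bigwedge_j \mu_j \le N_j\bigr) \vee \bigl(\bigwedge_i \omega \le M_i\bigr)$. Distributing over $\wedge$ gives $\bigwedge_{i,j} \bigl((\mu_j \le N_j) \vee (\omega \le M_i)\bigr)$, and a single application of \cref{lemma:or} to each conjunct collapses it back into a predicate of the same form as $Q$:
\[
\Elim{\pi}{Q} \;\defeq\; Q_0 \wedge \bigwedge_{i \in I,\, j \in J}\bigl(\mu_j \le M_i \cdot N_j\bigr).
\]
The degenerate cases where $I$ or $J$ is empty are handled uniformly, as the empty conjunction is $\top$, which matches the intuition that an unconstrained $\pi$ may be instantiated to $1$ (if there are no lower bounds) or to $\omega$ (if there are no upper bounds).

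The construction is effective: each step is a syntactic rewrite of the normalized form, and the number of new conjuncts is at most $|I|\cdot|J|$. The main thing to verify is the chain of equivalences in the middle paragraph; the rest is routine. I do not expect any real obstacle, since the only non-propositional fact used is \cref{lemma:or}, which has already been established, and the two-point nature of $\{1,\omega\}$ makes the semantic $\exists \pi$ reduce cleanly to the disjunction $Q[\pi{:=}1] \vee Q[\pi{:=}\omega]$.
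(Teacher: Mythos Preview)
Your proposal is correct and follows essentially the same approach as the paper: normalize, use the two-point domain to reduce $\exists\pi.Q$ to $Q[\pi{:=}1]\vee Q[\pi{:=}\omega]$, factor out the $\pi$-free part, distribute $\vee$ over $\wedge$, and collapse each disjunct with \cref{lemma:or}. The paper's $\Phi_1$, $\Phi_\omega$, and $Q_{\mathrm{rest}}$ are exactly your lower-bound, upper-bound, and $Q_0$ components, and the resulting formula is identical up to naming.
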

\begin{proof}
Note that $\exists \pi. Q$ means $Q[\Subst{\pi}{1}] \vee Q[\Subst{\pi}{\omega}]$
because $\Gp$ ranges over $\set{1, \omega}$. 
We safely assume that $Q$ is normalized (\cref{sec:entailment-checking}) and that $Q$ does not contain a predicate $\pi \le M$ where $\pi$ appears also in $M$, because such a predicate 
trivially holds. 

We define $\Phi_1$, $\Phi_\omega$, and $Q_\mr{rest}$ as 
\(\Phi_1 = \set{ \mu \le M \mid (\mu \le \pi \cdot M) \in Q, \mu \ne \pi }\), 
\(\Phi_\omega = \set{ \omega \le M \mid (\Gp \le M) \in Q, \pi \not\in \fuv(M) }\), and 
\(Q_\mr{rest} = \bigwedge \set{ \phi \mid \phi \in Q, \pi \not\in\fuv(\phi) }\).
Here, we abused the notation to write $\phi \in Q$ to mean that $Q = \bigwedge_i \phi_i$ and $\phi = \phi_i$ for some $i$.
In the construction of $\Phi_1$, we assumed the monoid laws of $(\cdot)$;
the definition says that we remove $\pi$ from the right-hand sides and $M$ becomes $1$ if the right-hand side is $\pi$. 
By construction, $Q[\Subst{p}{1}]$ and $Q[\Subst{p}{\omega}]$ 
are equivalent to $(\bigwedge \Phi_1) \wedge Q_\mr{rest}$ and 
$(\bigwedge \Phi_\omega) \wedge Q_\mr{rest}$, respectively.
\iffullversion
Thus, by \cref{lemma:or} and by the distributivity of $\vee$ over $\wedge$,
it suffices to define $Q'$ as:
\begin{qedmath}
Q' = 
 \left( \bigwedge \set{ \mu \le M \cdot M' \mid \mu \le M \in \Phi_1, \omega \le M' \in \Phi_\omega} \right)
 \wedge Q_\mr{rest}
\end{qedmath}
\else
Thus, by \cref{lemma:or} and by the distributivity of $\vee$ over $\wedge$
it suffices to define $Q'$ as
\(
Q' = 
 \left( \bigwedge \set{ \mu \le M \cdot M' \mid \mu \le M \in \Phi_1, \omega \le M' \in \Phi_\omega} \right)
 \wedge Q_\mr{rest}
\).\qed
\fi
\end{proof}

\begin{example}\label{ex:qe}
Consider 
\(Q = (\pi'_f \le \pi_f \wedge \pi' \le \pi'_x \wedge \pi'_x \le \pi_x)\); this is the constraint obtained from $\Gl f. \Gl x. \var{app} \A f \A x$ (\cref{sec:ambiguity}).
Since $\pi'_f$ and $\pi'_x$ do not appear in the inferred type $(\Ga' \pto{\pi'} \Gb') \pto{\pi_f} \Ga' \pto{\pi_x} \Gb$, 
we want to eliminate them by the above step. 
There is a freedom to choose which variable is eliminated first. Here, we shall choose $\pi'_f$ first.

First, we have $\Elim{\pi'_f}{Q} = \pi' \le \pi'_x \wedge \pi'_x \le \pi_x$ because 
for this case we have $\Phi_1 = \emptyset$, $\Phi_\omega = \set{\omega \le \pi_f}$, and $Q_\mr{rest} = \pi' \le \pi'_x \wedge \pi'_x \le \pi_x$. 
We then have $\Elim{\pi'_x}{\pi' \le \pi'_x \wedge \pi'_x \le \pi_x} = \pi' \le \pi_x$ because
for this case we have $\Phi_1 = \set{\pi' \le 1}$, $\Phi_2 = \set{\omega \le \pi_x}$, and $Q_\mr{rest} = \top$. \qed 
\end{example}

In the worst case, the size of $\Elim{\pi}{Q}$ can be quadratic to that of $Q$. Thus, repeating elimination can make the constraints 
exponentially bigger. We believe that such blow-up rarely happens because it is usual that $\pi$ occurs only in a few predicates in $Q$. 
Also, recall that non-singleton right-hand sides are caused only by multiplicity-parameterized constructors. 
When each right-hand side of $\le$ is a singleton in $Q$, the same holds in $\Elim{\pi}{Q}$.
For such a case, the exponential blow-up cannot happen because the size of constraints in the form is at most quadratic in the number of multiplicity variables. 

\subsection{Modified Typing Rules}
As mentioned at the begging of this section, 
we perform quantifier elimination as the last step of simplification. 
To do so, we define $\SimplE{Q}{C}{Q''}{\Gth}{\Gt}$ as follows:
\[
\infer
{
\SimplE{Q}{C}{Q''}{\Gth}{\Gt}
}
{
\Simpl{Q}{C}{Q'}{\Gth} 
&
\set{\V{\pi}} = \fuv(Q') \setminus \fuv(\Gt\Gth)
&
Q'' = \Elim{\V{\pi}}{Q'}
}
\]
Here, $\Gt$ is used to determine which unification variables will be ambiguous after generalization. 
We simply identify variables ($\V{\pi}$ above) that are not in $\Gt$ as ambiguous~\cite{QualifiedTypes} for simplicity.
This check is indeed conservative in a more general definition of ambiguity~\cite{StuckeyS05}, in which
$\forall p\,r\,a.\: (p \le r, r \le p) \To a \pto{p} a$ for example is not judged as ambiguous because $r$ is determined by $p$.

Then, we replace the original simplification with the above-defined version. 
\begin{gather*}
\infer
{
  \ITyP{\GG}{f = e; \mi{prog}}
}
{
  \bbc
  \ITy{\GG}{\GD}{e}{\Gt}{C}
  \quad
  \mathhighlight{\SimplE{\top}{C}{Q}{\Gth}{\Gt}}
  \quad
  \set{\V{\pi}\V{\Ga}} = \fuv(Q, \Gt\Gth)
  \\
  \V{p},\V{a}: \text{fresh}
  \quad
  \ITyP{\GG, f : \forall \V{p}\V{a}. (Q \To \Gt\Gth)[\V{\Subst{\Ga}{a},\Subst{\pi}{p}}]}{\mi{prog}}
  \ee
}
\\
\infer
{
  \ITyP{\GG}{f : (\forall\V{p}\V{a}. Q \To \Gt) = e; \mi{prog}} 
}
{
  \bbc
  \ITy{\GG}{\GD}{e}{\Gs}{C}
  \quad
  \mathhighlight{\SimplE{Q}{C \wedge \Gt \eqty \Gs}{\top}{\Gth}{\Gs}}
  \quad
  \ITyP{\GG, f : \forall\V{p}\V{a}. Q \To \Gt}{\mi{prog}}
  \ee
}
\end{gather*}
Here, the changed parts are $\mathhighlight{\text{highlighted}}$ for readability.

\begin{example}
Consider $(Q,\Gth)$ in \cref{sec:ambiguity} such that \(Q = (\pi'_f \le \pi_f \wedge \pi' \le \pi'_x \wedge \pi'_x \le \pi_x)\) and 
\(
\Gth = [\Subst{\Ga_f}{(\Ga' \pto{\pi'} \Gb')},\Subst{\pi'_1}{\pi'_f},\Subst{\Gb}{(\Ga_f \pto{\pi'_x} \Gb')},\Subst{\pi_2}{\pi'_x}, \Subst{\Gg}{\Gb'}])
\), which is obtained after simplification of the gathered constraint. 
Following \cref{ex:qe}, eliminating variables that are not in $\Gt\Gth = (\Ga' \pto{\pi'} \Gb') \pto{\pi_f} \Ga' \pto{\pi_x} \Gb$
yields the constraint $\pi' \le \pi_x$. 
As a result, by generalization, we obtain the polytype 
\[
 \forall q\,p_f\,p_x\,a\,b.\: (q \le p_x) \To (a \pto{q} b) \pto{p_f} a \pto{p_x} b
\]
for $\var{app}'$, which is equivalent to the inferred type of $\var{app}$. \qed
\end{example}

Note that $(Q',\Gth)$ of $\SimplE{Q}{C}{Q'}{\Gth}{\Gt}$ is no longer a solution of $(Q,C)$ 
because $C$ can have eliminated variables. 
However, it is safe to use this version when generalization takes place, because, 
for variables $\V{q}$ that do not occur in $\Gt$, 
$\forall \V{p}\V{q}\V{a}.\:Q \To \Gt$ and $\forall \V{p}\V{a}.\: Q' \To \Gt$ have the 
same set of monomorphic instances, if $\exists \V{q}.Q$ is logically equivalent to $Q'$. 
Note that in this type system simplification happens only before (implicit) generalization takes place.

 \section{Extension to Local Assumptions}
\label{sec:extensions}

In this section, following \textsc{OutsideIn(X)}~\cite{VytiniotisJSS11}, 
we extend our system with local assumptions, which enable us to have $\LET$s and GADTs. 
We focus on the treatment of $\LET$s in this section
because type inference for $\LET$s involves a linearity-specific concern: the multiplicity of a $\LET$-bound variable.

\subsection{``Let Should Not Be Generalized'' for Our Case}
We first discuss that even for our case ``$\LET$ should not be generalized''~\cite{VytiniotisJSS11}. 
That is, generalization of $\LET$ sometimes results in counter-intuitive typing and conflicts with the discussions so far. 

Consider the following program:
\[
\var{h} = \Gl f. \Gl k. \LET \A y = f \A (\Gl x.k \A x)~\IN~0 
\]
Suppose for simplicity that $f$ and $x$ have types 
$(a \pto{\Gp_1} b) \pto{\Gp_2} c$ and $a \pto{\Gp_3} b$, respectively (here we only focus on the treatment of multiplicity).
Then, $f \A (\Gl x.k \A x)$ has type $c$ with the constraint $\Gp_3 \le \Gp_1$. 
Thus, after generalization, $y$ has type $\Gp_3 \le \Gp_1 \To c$, where $\Gp_3$ and $\Gp_1$ are neither generalized nor 
eliminated because they escape from the definition of $y$. 
As a result, $h$ has type $\forall p_1\,p_2\,p_3\,a\,b\,c.\:((a \pto{p_1} b) \pto{p_2} c) \pto{\omega} (a \pto{p_3} b) \pto{\omega} \con{Int}$; 
there is no constraint $p_3 \le p_1$ because the definition of $y$ does not yield a constraint. 
This nonexistence of the constraint would be counter-intuitive because users wrote $f \A (\Gl x. k \A x)$ while the constraint for the expression is not imposed. 
In particular, it does not cause an error even when $f : (a \pto{1} b) \pto{1} c$ and $k : a \pto{\omega} b$, while $f \A (\Gl x. k \A x)$ becomes illegal 
for this case.
Also, if we change $0$ to $y$, the error happens at the use site instead of the definition site.
Moreover, the type is fragile as it depends on whether $y$ occurs or not;
for example, if we change $0$ to $\var{const} \A 0 \A y$ where $\var{const} = \Gl a.\Gl b.a$, 
the type of $h$ changes to $\forall p_1\,p_2\,p_3\,a\,b\,c.\:p_1 \le p_3 \To ((a \pto{p_1} b) \pto{p_2} c) \pto{\omega} (a \pto{p_3} b) \pto{\omega} \con{Int}$.
In this discussion, we do not consider type-equality constraints, but there are no legitimate reasons why type-equality constraints are solved on the fly in typing $y$.

As demonstrated in the above example, ``$\LET$ should not be generalized''~\cite{VytiniotisJS10,VytiniotisJSS11} in our case.
Thus, we adopt the same principle in \textsc{OutsideIn(X)} that $\LET$ will be generalized only if users write a type annotation for it~\cite{VytiniotisJSS11}.
This principle is also adopted in GHC (as of 6.12.1 when the language option \texttt{MonoLocalBinds} is turned on) with a slight relaxation to generalize closed bindings. 

\subsection{Multiplicity of Let-Bound Variables}

Another issue with $\LET$-generalization, which is specific to linear typing, 
is that a generalization result depends on the multiplicity of the $\LET$-bound variable.
Let us consider the following program, where we want to generalize the type of $y$ (even without a type annotation):
\[
\var{g} = \Gl x. \LET \A y = \Gl f. f \A x~\IN~ y \A \var{not}
\]
Suppose for simplicity that $\con{not}$ has type $\con{Bool} \pto{1} \con{Bool}$ and $x$ has type $\con{Bool}$ already in typing $\LET$. 
Then, $y$'s body $\Gl f.f \A x$ has a monotype $(\con{Bool} \pto{\Gp} r) \pto{\Gp'} r$ with no constraints (on multiplicity).
There are two generalization results depending on the multiplicity $\pi_y$ of $y$ because the use of $x$ also escapes in the type system. 
\begin{itemize}
 \item If $\pi_y = 1$, the type is generalized into $\forall q\,r.\:(\con{Bool} \pto{\Gp} r) \pto{q} r$, where $\Gp$ is not generalized because
       the use of $x$ in $y$'s body is $\Gp$.
 \item If $\pi_y = \omega$, the type is generalized into $\forall p\,q\,r.\:(\con{Bool} \pto{p} r) \pto{q} r$, where $\Gp$ is generalized (to $p$)
       because the use of $x$ in $y$'s body is $\omega$. 
\end{itemize}
A difficulty here is that $\pi_y$ needs to be determined at the definition of $y$, while the constraint on $\pi_y$ is only obtained from the use of $y$.

Our design choice is the latter; the multiplicity of a generalizable $\LET$-bound variable is $\omega$ in the system. 
One justification for this choice is that a motivation of polymorphic typing is to enhance reusability, while reuse is not possible 
for variables with multiplicity $1$. 
Another justification is compatibility with recursive definitions, where recursively-defined variables must have multiplicity $\omega$; 
it might be confusing, for example, if the multiplicity of a list-manipulation function changes after 
we change its definition from an explicit recursion to $\var{foldr}$.

\subsection{Inference Rule for Lets}

In summary, the following are our criteria about $\LET$ generalization: 
\begin{itemize}
 \item Only $\LET$s with polymorphic type annotations are generalized.
 \item Variables introduced by $\LET$ to be generalized have multiplicity $\omega$. 
\end{itemize}
This idea can be represented by the following typing rule:
\[
\ninfer{LetA}
{
\ITy{\GG}{\omega \GD_1 + \GD_2}{\LET~x : (\forall \V{p}\V{a}.Q \To \Gt) = e_1~\IN~e_2}{\Gt_2}{C'_1 \wedge C_2}
}
{\bbc
\ITy{\GG}{\GD_1}{e_1}{\Gt_1}{C_1} \quad \set{ \V{\pi}\V{\Ga} } = \fuv(\Gt_1, C_1) \setminus \fuv(\GG) \\
C'_1 = \exists \V{\pi}\V{\Ga}.(Q \models^{\Gt_1} C_1 \wedge \Gt \eqty \Gt_1)
\\
\ITy{\GG\Gth_1,x:(\forall \V{p}\V{a}.Q \To \Gt)}{\GD_2,\Used{x}{M}}{e_2}{\Gt_2}{C_2}
\ee
}
\]
(We do not discuss non-generalizable $\LET$ because they are typed as $(\Gl x.e_2) \A e_1$.)
Constraints like $\exists \V{\pi}\V{\Ga}.(Q \models^{\Gt_1} C_1 \wedge \Gt \eqty \Gt_1)$ above are called \emph{implication constraints}~\cite{VytiniotisJSS11}, 
which states that the entailment must hold only by instantiating unification variables in $\V{\pi}\V{\Ga}$. 
There are two roles of implication constraints. One is to delay the checking because $\Gt_1$ and $C_1$ contain some unification variables
that will be made concrete after this point by solving $C_2$. 
The other is to guard constraints; 
in the above example, since the constraints $C_1 \wedge \Gt \eqty \Gt_1$ hold by assuming $Q$, it is not safe to substitute 
variables outside $\V{\pi}\V{\Ga}$ in solving the constraints because the equivalence might be a consequence of $Q$; recall that $Q$ affects type equality. 
We note that there is a slight deviation from the original approach~\cite{VytiniotisJSS11}; 
an implication constraint in our system is annotated by $\Gt_1$ to identify for which subset of $\{\V{\pi}\V{\Ga}\}$
the existence of a unique solution is not required and thus quantifier elimination is possible, similarly to \cref{sec:defaulting}.

\subsection{Solving Constraints}

Now, the set of constraints is extended to include implication constraints.
\[
 C \DEq \bigwedge_i \psi_i \qquad \psi_i \DEq \dots \mid \exists \V{\pi}\V{\Ga}.(Q \models^{\Gt} C)
\]
As we mentioned above, an implication constraint $\exists \V{\pi}\V{\Ga}.(Q \models^{\Gt} C)$ means that $Q \models C$ must hold by substituting $\V{\pi}$ and $\V{\Ga}$ with appropriate values, where we do not require uniqueness of solutions for unification variables that do not appear in $\Gt$.
That is, $\SimplE{Q}{C}{\top}{\Gth}{\Gt}$ must hold with $\dom(\Gth) \subseteq \set{\V{\Gp}\V{\Ga}}$.

\newcommand{\SolveQ}[6]{#1.\: #2 \vdashI^{#6}_\mathrm{solv} #3 \leadsto #4 ; #5}
\newcommand{\SimplQ}[6]{#1.\: #2 \vdashI^{#6}_\mathrm{simpl} #3 \leadsto #4 ; #5}

Then, following \textsc{OutsideIn(X)}~\cite{VytiniotisJSS11}, we define the \emph{solving judgment} $\SolveQ{\V{\Gp}\V{\Ga}}{Q}{C}{Q'}{\Gth}{\Gt}$, 
which states that we solve $(Q,C)$ as $(Q',\Gth)$ where $\Gth$ only touches variables in $\V{\Gp}\V{\Ga}$, where $\Gt$ is used for disambiguation (\cref{sec:defaulting}).
Let us write $\ms{impl}(C)$ for all the implication constraints in $C$, and $\ms{simpl}(C)$ for the rest. 
Then, we can define the inference rules for the judgment simply by recursive simplification, similarly to the original~\cite{VytiniotisJSS11}.
\begin{gather*}
\infer
{
\SolveQ{\V{\Gp}\V{\Ga}}{Q}{C}{Q_\mathrm{r}}{\Gth}{\Gt}
}
{
\bbc
\SimplQ{\V{\Gp}\V{\Ga}}{Q}{\ms{simpl}(C)}{Q_\mathrm{r}}{\Gth}{\Gt}
\\
\{ 
\SolveQ{\V{\pi_i}\V{\Ga_i}}{Q \wedge Q_i \wedge Q_\mathrm{r}}{C_i}{\top}{\Gth_i}{\Gt_i}
\}_{ (\exists \V{\pi_i}\V{\Ga_i}.(Q_i \models^{\Gt_i} C_i)) \in \ms{impl}(C\Gth) }
\ee
}
\end{gather*} 
Here, $\SimplQ{\V{\Gp}\V{\Ga}}{Q}{C}{Q_\mathrm{r}}{\Gth}{\Gt}$ is a simplification relation defined 
similarly to $\SimplE{Q}{C}{Q_\mathrm{r}}{\Gth}{\Gt}$ except that we are allowed to touch only variables in $\V{\Gp}\V{\Ga}$.
We omit the concrete rules for this version of simplification relation because they are straightforward except
that unification caused by \rUnify and \rEntailEq and quantifier elimination (\cref{sec:defaulting}) are allowed 
only for variables in $\{\V{\Gp}\V{\Ga}\}$.

Accordingly, we also change the typing rules for bindings to use the solving relation instead of the simplification relation.  
{\small\begin{gather*}
\infer
{
  \ITyP{\GG}{f = e; \mi{prog}}
}
{
  \bbc
  \ITy{\GG}{\GD}{e}{\Gt}{C}
  \quad
  \mathhighlight{\SolveQ{\fuv(C,\Gt)}{\top}{C}{Q}{\Gth}{\Gt}}
  \quad
  \set{\V{\pi}\V{\Ga}} = \fuv(Q, \Gt\Gth)
  \\
  \V{p},\V{a}: \text{fresh}
  \quad
  \ITyP{\GG, f : \forall \V{p}\V{a}. (Q \To \Gt\Gth)[\V{\Subst{\Ga}{a},\Subst{\pi}{p}}]}{\mi{prog}}
  \ee
}
\\
\infer
{
  \ITyP{\GG}{f : (\forall\V{p}\V{a}. Q \To \Gt) = e; \mi{prog}} 
}
{
  \bbc
  \ITy{\GG}{\GD}{e}{\Gs}{C}
  \quad
  \mathhighlight{\SolveQ{\fuv(C,\Gs)}{Q}{C \wedge \Gt \eqty \Gs}{\top}{\Gth}{\Gs}}
  \quad
  \ITyP{\GG, f : \forall\V{p}\V{a}. Q \To \Gt}{\mi{prog}}
  \ee
}
\end{gather*}}Above, there are no unification variables other than $\fuv(C,\Gt)$ or $\fuv(C,\Gs)$.

The definition of the solving judgment and the updated inference rules for programs are the same as those in the original \textsc{OutsideIn(X)}~\cite{VytiniotisJSS11} except $\Gt$ for disambiguation. 
This is one of the advantages of being based on \textsc{OutsideIn(X)}.

 \section{Implementation and Evaluation}
\label{sec:evaluation}

In this section, we evaluate the proposed inference method using 
our prototype implementation. 
We first report what types are inferred for functions from \texttt{Prelude} to see 
whether or not inferred types are reasonably simple. 
We then report the performance evaluation that measures efficiency of type inference 
and the overhead due to entailment checking and quantifier elimination.

\subsection{Implementation} 
The implementation follows the present paper except for a few points. 
Following the implementation of \textsc{OutsideIn(X)} in GHC,
our type checker keeps a natural number, which we call an implication level, corresponding to the depth of implication constraints, 
and a unification variable also accordingly keeps the implication level at which the variable is introduced. 
As usual, we represent unification variables by mutable references. 
We perform unification on the fly by destructive assignment, while 
unification of variables that have smaller implication levels than the current level is recorded for later checking of implication constraints;
such a variable cannot be in $\V{\pi}\V{\Ga}$ of $\exists \V{\pi}\V{\Ga}.Q \models^\Gt C$.
The implementation supports GADTs because they can be implemented rather easily by extending constraints $Q$ to include type equalities, 
but does not support type classes because the handling of them requires another \textsc{X} of \textsc{OutsideIn(X)}.

Although we can use a linear-time Horn SAT solving algorithm~\cite{DowlingG84} for checking $Q \models \phi$, 
the implementation uses a general SAT solver based on DPLL~\cite{DavisP60,DavisLL62} because 
the unit propagation in DPLL works efficiently for Horn formulas. 
We do not use external solvers, such as Z3, 
as we conjecture that the sizes of formulas are usually small, and overhead to use external solvers would be high.

\subsection{Functions from Prelude}

We show how our type inference system works for some polymorphic functions from Haskell's \texttt{Prelude}.
Since we have not implemented type classes and I/O in our prototype implementation
and since we can define copying or discarding functions for concrete first-order datatypes, 
we focus on the unqualified polymorphic functions.
Also, we do not consider the functions that are 
obviously unrestricted, such as $\var{head}$ and $\var{scanl}$, in this examination. 
In the implementation of the examined functions, we use natural definitions as possible. For example, 
a linear-time accumulative definition is used for $\var{reverse}$.
Some functions can be defined by 
both explicit recursions and $\var{foldr}$/$\var{foldl}$; among the examined functions, 
$\var{map}$, $\var{filter}$, $\var{concat}$, and $\var{concatMap}$ can be defined by $\var{foldr}$, 
and $\var{reverse}$ can be defined by $\var{foldl}$. 
For such cases, both versions are tested.

\begin{figure}[t]
\[\begin{tarray}{r@{$\,{}:{}\,$}l}
  (\circ)    & (q \le s \wedge q \le t \wedge p \le t) \To (b \pto{q} c) \pto{r} (a \pto{p} b) \pto{s} a \pto{t} c \\
\var{curry}  & (p \le r \wedge p \le s) \To ((a \ox b) \pto{p} c) \pto{q} a \pto{r} b \pto{s} c \\
\var{uncurry}& (p \le s \wedge q \le s) \To (a \pto{p} b \pto{q} c) \pto{r} (a \ox b) \pto{s} c \\
\var{either} & (p \le r \wedge q \le r) \To (a \pto{p} c) \pto{\omega} (b \pto{q} c) \pto{\omega} \con{Either} \A a \A b \pto{r} c \\
\var{foldr}  & (q \le r \wedge p \le s \wedge q \le s) \To (a \pto{p} b \pto{q} b) \pto{\omega} b \pto{r} \con{List} \A a \pto{s} b\\
\var{foldl}  & (p \le r \wedge r \le s \wedge q \le s) \To (b \pto{p} a \pto{q} b) \pto{\omega} b \pto{r} \con{List} \A a \pto{s} b\\
\var{map}    & (p \le q) \To (a \pto{p} b) \pto{\omega} \con{List} \A a \pto{q} \con{List} \A b \\
\var{filter} & (a \pto{p} \con{Bool}) \pto{\omega} \con{List} \A a \pto{\omega} \con{List} \A a \\
\var{append} & \con{List} \A a \pto{p} \con{List} \A a \pto{q} \con{List} \A a \\
\var{reverse} & \con{List} \A a \pto{p} \con{List} \A a\\
\var{concat} & \con{List} \A (\con{List} \A a) \pto{p} \con{List} \A a\\
\var{concatMap} & (p \le q) \To (a \pto{p} \con{List} \A b) \pto{\omega} \con{List} \A a \pto{q} \con{List} \A b
\end{tarray}\]
\caption{Inferred types for selected functions from \texttt{Prelude} (quantifications are omitted)}
\label{fig:inference-results}
\end{figure}

\cref{fig:inference-results} shows the inferred types for the examined functions. 
Since the inferred types coincide for the two variations (by explicit recursions or by folds) of $\var{map}$, $\var{filter}$, $\var{append}$, 
$\var{reverse}$, $\var{concat}$, and $\var{concatMap}$, the results do not refer to these variations. 
Most of the inferred types look unsurprising, considering the fact that the constraint $p \le q$ 
is yielded usually when an input that corresponds to $q$ is used in an argument that corresponds to $p$.
For example, consider $\var{foldr} \A f \A e \A \var{xs}$. 
The constraint $q \le r$ comes from the fact that $e$ (corresponding to $r$) is passed as the second argument of $f$ (corresponding to $q$) via a recursive call.
The constraint $p \le s$ comes from the fact that the head of $\var{xs}$ (corresponding to $s$) is used as the first argument of $f$ (corresponding to $p$). 
The constraint $q \le s$ comes from the fact that the tail of $\var{xs}$ is used in the second argument of $\var{f}$.
A little explanation is needed for 
the constraint $r \le s$ in the type of $\var{foldl}$, 
where both $r$ and $s$ are associated with types with the same polarity. 
Such constraints usually come from recursive definitions. Consider the definition of $\var{foldl}$:
\[
\var{foldl} = \Gl f. \Gl e. \Gl x. \CASE~x~\OF~\{ \con{Nil} \to e; \con{Cons} \A a \A y \to \var{foldl} \A f \A (f \A e \A a) \A y \}
\]
Here, we find that $a$, a component of $x$ (corresponding to $s$), appears in the second argument of $\var{fold}$ (corresponding to $r$), which yields the constraint $r \le s$. 
Note that the inference results do not contain $\pto{1}$; recall that there is no problem in using unrestricted inputs linearly, and 
thus the multiplicity of a linear input can be arbitrary. 
The results also show that the inference algorithm successfully detected that $\var{append}$, $\var{reverse}$, and $\var{concat}$ are linear functions. 

It is true that these inferred types indeed leak some internal details into their constraints, but those constraints can be understood only from their
extensional behaviors, at least for the examined functions. 
Thus, we believe that the inferred types are reasonably simple.

\begin{wraptable}[8]{r}{7.2cm}\newcommand{\pz}{\phantom{0}}
 \small\vspace{-3\baselineskip}\caption{Experimental results}\label{table:experiments}
 \centering\begin{tabular}{l|r|r|r@{~(}r@{)\,}|r@{~(}r@{)}}
                    & & \multicolumn{1}{c|}{Total}           & \multicolumn{2}{c|}{SAT} & \multicolumn{2}{c}{QE} \\[-2pt]
   Program          & LOC & Elapsed           & Elapsed & \# & Elapsed & \# \\\hline
   \texttt{funcs} & 40 & 4.3\pz             & 0.70\pz   & 42                 & 0.086 & 15\\ 
   \texttt{gv}      & 53 & 3.9\pz           & 0.091  & 9                  & 0.14\pz & 17 \\
   \texttt{app1}    & 4 & 0.34              & 0.047  & 4                  & 0.012 & 2\\
   \texttt{app10}   & 4 & 0.84              & 0.049  & 4                  & 0.038 & 21
 \end{tabular}
  (times are measured in ms)
\end{wraptable}

\subsection{Performance Evaluation}

We measured the elapsed time for type checking and the overhead of implication checking and quantifier elimination. 
The following programs were examined in the experiments:
\texttt{funcs}: the functions in \cref{fig:inference-results}, 
\texttt{gv}: an implementation of a simple communication in a session-type system GV~\cite{LindleyM15}
   taken from \cite[Section 4]{LindleyM16} with some modifications,\footnote{
We changed the type of $\var{fork} : \con{Dual} \A s \A s' \pto{\omega} (\con{Ch} \A s \pto{1} \con{Ch} \A \con{End}) \pto{1} (\con{Ch} \A s' \pto{1} \con{Un} \A r) \pto{1} r$, 
as their type $\con{Dual} \A s \A s' \To (\con{Ch} \A s \pto{1} \con{Ch} \A \con{End}) \pto{1} \con{Ch} \A s'$ is incorrect for
the multiplicity erasing semantics. A minor difference is that we used a GADT to witness duality because our prototype implementation does not support type classes. 
}
\texttt{app1}: a pair of the definitions of $\var{app}$ and $\var{app'}$, and 
\texttt{app10}: a pair of the definitions of $\var{app}$ and $\var{app10} = \Gl f. \Gl x. \underbrace{\var{app} \A \dots \A \var{app}}_{10} \A f \A x$.
The former two programs are intended to be miniatures of typical programs.
The latter two programs are intended to measure the overhead of quantifier elimination. 
Although the examined programs are very small, they all involve the ambiguity issues. 
For example, consider the following fragment of the program $\texttt{gv}$:
\[\begin{minipage}{0.95\textwidth}\small\begin{alltt}
answer : Int = fork prf calculator \DOLLAR \textbackslash{}c -> left c & \textbackslash{}c ->
               send (MkUn 3) c & \textbackslash{}c -> send (MkUn 4) c & \textbackslash{}c ->
               recv c & \textbackslash(MkUn z, c) -> wait c & \textbackslash() -> MkUn z
\end{alltt}\end{minipage}\]
(Here, we used our paper's syntax instead of that of the actual examined code.)
Here, both \texttt{\DOLLAR} and \texttt{\&} are operator versions of $\var{app}$, where the arguments are flipped in \texttt{\&}. 
As well as treatment of multiplicities, the disambiguation is crucial for this expression to have type $\con{Int}$.

The experiments were conducted on a MacBook Pro (13-inch, 2017) with Mac OS 10.14.6,  
3.5 GHz Intel Core i7 CPU, and 16 GB memory. 
GHC 8.6.5 with \texttt{-O2} was used for compiling our prototype system.

\Cref{table:experiments} lists the experimental results. Each elapsed time is the average of 1,000 executions for the first 
two programs, and 10,000 executions for the last two. All columns are self-explanatory except for the \# column, which counts 
the number of executions of corresponding procedures. 
We note that the current implementation restricts $Q_\mr{w}$ in {\rEntail} to be $\top$ and removes redundant constraints 
afterward. 
This is why the number of SAT solving in \texttt{app1} is four instead of two. 
For the artificial programs (\texttt{app1} and \texttt{app10}), the overhead is not significant; typing cost grows faster than SAT/QE costs. 
In contrast, the results for the latter two show that SAT becomes heavy for
higher-order programs (\texttt{funcs}), and quantifier elimination becomes heavy for combinator-heavy programs (\texttt{gv}), 
although we believe that the overhead would still be acceptable. 
We believe that, since we are currently using naive algorithms for both procedures, there is much room to reduce the overhead. 
For example, if users annotate most general types, the simplification invokes trivial checks $\bigwedge_i \phi_i \models \phi_i$ often. 
Special treatment for such cases would reduce the overhead.

 \section{Related Work}
\label{sec:related}

Borrowing the terminology from Bernardy \etal~\cite{BeBNJS18},
there are two approaches to linear typing: linearity via arrows and linearity via kinds.
The former approaches manage how many times an assumption (\ie, a variable) can be used; for example, in Wadler~\cite{Wadler93}'s 
linear $\Gl$ calculus, there are two sort of variables: linear and unrestricted, where the latter variables can only be obtained by 
decomposing $\LET~!x = e_1~\IN~e_2$. 
Since primitive sources of assumptions are arrow types, it is natural to annotate them with arguments' multiplicities~\cite{BeBNJS18,McBride16,GhicaS14}.
For multiplicities, we focused on $1$ and $\omega$ following Linear Haskell~\cite{BeBNJS18,LinearMiniCore,LinearHaskellProposal}.
Although $\set{1,\omega}$ would already be useful for some domains including 
reversible computation~\cite{Janus,YoAG11} and quantum~computation\cite{AltenkirchG05,SelingerV06},
handling more general multiplicities, such as $\set{0,1,\omega}$ and 
arbitrary semirings~\cite{GhicaS14}, is an interesting future direction. 
Our discussions in \cref{sec:lang,sec:inference}, similarly to Linear Haskell~\cite{BeBNJS18}, could be extended to more general domains
with small modifications. 
In contrast, we rely on the particular domains $\set{1,\omega}$ of multiplicities
for the crucial points of our inference, \ie, entailment checking and quantifier elimination. 
Igarashi and Kobayashi~\cite{IgarashiK00}'s linearity analysis for $\pi$ calculus, 
which assigns input/output usage (multiplicities) to channels, has similarity to linearity via arrows.
Multiplicity $0$ is important in their analysis to identify input/output only channels. 
They solve constraints on multiplicities separately in polynomial time, leveraging monotonicity of multiplicity operators with respect to ordering $0 \le 1 \le \omega$. 
Here, $0 \le 1$ comes from the fact that $1$ in their system means ``at-most once'' instead of ``exactly once''.

The ``linearity via kinds'' approaches distinguish types of which values are treated linearly and types of which values are not~\cite{MaZZ10,TovP11,Morris16}, 
where the distinction usually is represented by kinds~\cite{MaZZ10,TovP11}.
Interestingly, they also have two function types---function types that belong to the linear kind and those that belong to the unrestricted kind---because 
the kind of a function type cannot be determined solely by the argument and return types. 
Mazurak~\etal~\cite{MaZZ10} use subkinding to avoid explicit conversions from unrestricted values to linear ones.
However, due to the variations of the function types, a function can have multiple incompatible types; \eg, the function $\var{const}$ can have four incompatible types~\cite{Morris16} in the system. 
Universal types accompanied by kind abstraction~\cite{TovP11} address the issue to some extent; it works well for $\var{const}$,
but still gives two incomparable types to the function composition $(\circ)$~\cite{Morris16}.
Morris~\cite{Morris16} addresses this issue of principality with qualified typing~\cite{QualifiedTypes}.
Two forms of predicates are considered in the system: $\con{Un} \A \Gt$ states that $\Gt$ belongs to the unrestricted kind,
and $\Gs \le \Gt$ states that $\con{Un} \A \Gs$ implies $\con{Un} \A \Gt$. 
This system is considerably simple compared with the previous systems. Turner~\etal~\cite{TurnerWM95}'s type-based usage analysis has a similarity to the linearity via kinds; in the system, each type is annotated by usage (a multiplicity) as $(\con{List} \A \con{Int}^{\omega})^{\omega}$.
Wansbrough and Peyton Jones~\cite{WansbroughJ99} extends the system to include polymorphic types and subtyping with respect to multiplicities, 
and have discussions on multiplicity polymorphism. 
Mogensen~\cite{Mogensen97} is a similar line of work, which reduces constraint solving on multiplicities to Horn SAT. 
His system concerns multiplicities $\{0,1,\omega\}$ with ordering $0 \le 1 \le \omega$, 
and his constraints can involve more operations including additions and multiplications but only in the left-hand side of $\le$.

Morris~\cite{Morris16} uses improving substitutions~\cite{Jones95} in generalization, which sometimes are
effective for removing ambiguity, though without showing concrete algorithms to find them.
In our system, as well as {\rEntailEq}, 
$\Elim{\pi}{Q}$ can be viewed as a systematic way to find improving substitutions. 
That is, $\Elim{\pi}{Q}$ improves $Q$ by 
substituting $\pi$ with $\min \{ M_i \mid \omega \le M_i \in \Phi_\omega \}$, \ie, the largest possible candidate of $\pi$. 
Though the largest solution is usually undesirable, especially when the right-hand sides of $\le$ are all singletons, 
we can also view that $\Elim{\pi}{Q}$ substitutes $\pi$ by $\prod_{ \mu_i \le 1 \in \Phi_1 } \mu_i$, \ie, the smallest possible candidate.

 \section{Conclusion}
\label{sec:conclusion}

We designed a type inference system for a rank 1 fragment of $\Gl^q_\to$~\cite{BeBNJS18} that can infer principal types
based on the qualified typing system \textsc{OutsideIn(X)}~\cite{VytiniotisJSS11}. 
We observed that naive qualified typing infers ambiguous types often 
and addressed the issue based on quantifier elimination. 
The experiments suggested that the proposed inference system infers principal types effectively, and the overhead 
compared with unrestricted typing is acceptable, though not negligible. 

Since we based our work on the inference algorithm used in GHC, 
the natural expectation is to implement the system into GHC. 
A technical challenge to achieve this is combining the disambiguation techniques with 
other sorts of constraints, especially type classes, and arbitrarily ranked polymorphism.

\subsection*{Acknowledgments}

We thank Meng Wang, Atsushi Igarashi, and the anonymous reviewers of ESOP 2020 for their helpful comments on the preliminary versions of this paper. 
This work was partially supported by JSPS KAKENHI Grant Numbers 15H02681 and 19K11892, 
JSPS Bilateral Program, Grant Number JPJSBP120199913, 
the Kayamori Foundation of Informational Science Advancement, 
and EPSRC Grant \emph{EXHIBIT: Expressive High-Level Languages for Bidirectional Transformations} (EP/T008911/1).

\bibliographystyle{splncs04}

\iffullversion
\appendix

\clearpage
\section{Limitation of Multiplicity Handling in Linear Haskell}
\label{sec:limitation-LinearHaskell}

For very simple cases, there is no problem with type checking of multiplicity-polymorphic functions in Linear Haskell.
\begin{alltt}
Prelude> :\{
Prelude| id1 :: a -->.(p) a
Prelude| id1 x = x
Prelude| :\} 
Prelude> -- No type errors 
\end{alltt}
Here, \verb|a -->.(p) b| corresponds to $a \pto{p} b$ in our notation.
If we ask its type, we will see an unrestricted one due to defaulting. 
\begin{alltt}
Prelude> :type id1
id1 :: a -> a
\end{alltt}
Notice that \verb|a -> b| corresponds to our $a \pto{\omega} b$.  
However, the following example ensures that the \verb|id1| is indeed polymorphic about its argument's multiplicity:
\begin{alltt}
Prelude> :\{ 
Prelude| idL :: a ->. a
Prelude| idL = id1 
Prelude| :\}
Prelude> -- No type errors
\end{alltt}
Here, \verb|a ->. b| corresponds to $a \pto{1} b$.

Now, let us consider the function composition. 
If we define it in Linear Haskell, we obtain the version that can work only 
for unrestricted functions by defaulting. 
\begin{alltt}
Prelude> let comp f g x = f (g x)
Prelude> :type comp
comp :: (t1 -> t2) -> (t3 -> t1) -> t3 -> t2
\end{alltt}
As its type suggests, we are not able to compose linear functions by \verb|comp|. 
It is true that by giving its type explicitly we can define the linear function composition. 
\begin{alltt}
Prelude> :\{
Prelude| comp :: (b ->. c) -> (a ->. b) -> (a ->. c)
Prelude| comp f g x = f (g x)
Prelude| :\}
\end{alltt}
Then, we are tempted to parameterize the multiplicity to obtain a slightly general (but not most general) definition, 
but this attempt fails. 
\begin{alltt}
Prelude> :\{
Prelude| comp :: (b -->.(r) c) -> (a -->.(r) b) -> (a -->.(r) c)
Prelude| comp f g x = f (g x)
Prelude| :\}

<interactive>:30:10: error:
    • Nontrivial multiplicity equalities are currently not supported
    ...

<interactive>:30:10: error:
    • Couldn't match type ‘r’ with ‘'Omega’
    ...
\end{alltt}
Notice that the principal type for $\var{comp}$ in $\Gl^q_\to$ is $(b \pto{p_1} c) \pto{p_2} (a \pto{p_3} b) \pto{p_1 \cdot q} a \pto{p_1 \cdot p_3 \cdot r} c$.\footnote{Technically, we need a small correction to the typing rule: the typing rule for variables must be changed as \cite{LinearMiniCore} for $\var{comp}$ to have the type.}
However, as far as we have investigated, it seems that Linear Haskell does not provide any ways for users to write multiplication of multiplicities. 

Actually, in their implementation note, it is explicitly said that they compromise on solving constraints\footnote{\url{https://gitlab.haskell.org/ghc/ghc/wikis/linear-types/implementation}}.

\section{Selected Proofs}

\subsection{Proof of \cref{thm:correspondence-with-original}}
\label{proof:correspondence-with-original}

We first give a rank 1 fragment of the original type system of $\Gl^q_\to$~\cite{BeBNJS18} (with the modification in \cite{LinearMiniCore} for the variable case) 
in \cref{fig:original-typing}. 
In addition to the treatment of variable expressions, this version has two notable presentational differences from the original.
First, to make our discussions easy, we use two environments instead of one in \cref{fig:original-typing}, 
as well as in \cref{fig:typing}.
Second, since we consider rank 1 fragment, 
the multiplicity application rule (and the type application rule, which is omitted in the original presentation~\cite{BeBNJS18}) is 
merged into {\rVar}. 
As we have mentioned in \cref{sec:meta}, we use the condition $\Used{x}{1} \le \GD$ 
instead of $\exists \GD'.\: \GD = \Used{x}{1} + \omega \GD'$
in the premise of {\rVar}. 

One would find that the rules for $\NTy{\GG}{\GD}{e}{\Gt}{Q}$ with $Q = \top$ (shown in \cref{fig:typing})
and the rules for $\OTy{\GG}{\GD}{e}{\Gt}$ (shown in \cref{fig:original-typing}) look 
almost the same. 
One difference is that, 
since the original system allows that multiplications of multiplications directly appear in types, some $\mu$s in \cref{fig:typing}
are replaced with $M$s in \cref{fig:original-typing}. This difference is rather subtle, 
as we do not use this extra flexibility in the proof. 
Another difference is in {\rCase}: 
while in the former system each branch $e_i$ can be typechecked under a different multiplicity environment $\GD_i$, 
in the latter system each branch is checked under the same multiplicity environment.
To absorb the difference, together with the fact $\models \GD_i \le \bigsqcup_j \GD_j$, 
we use a stronger variant of \cref{lemma:weakening-multiplicity} 
that states in addition that the derivations for $\NTy{\GG}{\GD}{e}{\Gt}{Q}$ and $\NTy{\GG}{\GD'}{e}{\Gt}{Q}$
have the same height. 

Now, we are ready to state \cref{thm:correspondence-with-original} more formally.
\begin{theorem*}
Suppose that $\GG(x)$ is monotype for all $x$. Then, 
$\NTy{\GG}{\GD}{e}{\Gt}{\top}$ implies $\OTy{\GG}{\GD}{e}{\Gt}$.
\end{theorem*}

\begin{proof}
Straightforward induction on the height of the derivation of $\NTy{\GG}{\GD}{e}{\Gt}{\top}$, 
appealing to the stronger version of \cref{lemma:weakening-multiplicity} above for {\rCase}. \qed 
\end{proof}

\begin{figure}[t]\small
\setlength{\jot}{1.4ex}
\setlength{\abovedisplayskip}{0pt}
\setlength{\belowdisplayskip}{0pt}
\begin{gather*}
\ninfer{\rVar}
{
\OTy{\GG}{\GD}{x}{\Gt[\V{\Subst{p}{M}}, \V{\Subst{a}{\Gt}}]}
}
{
\bbc
\GG(x) = \forall \V{p} \V{a}. \Gt \quad
\Used{x}{1} \le \GD 
\ee
}
\\
\ninfer{\rAbs}
{
\OTy{\GG}{\GD}{\Gl x. e}{\Gs \pto{M} \Gt}
}
{
\OTy{\GG,x:\Gs}{\GD, \Used{x}{M}}{e}{\Gt}
}
\qquad
\ninfer{\rApp}
{
\OTy{\GG}{\GD_1 + M\GD_2}{e_1 \A e_2}{\Gt}
}
{
\OTy{\GG}{\GD_1}{e_1}{\Gs \pto{M} \Gt}
&
\OTy{\GG}{\GD_2}{e_2}{\Gs}
}
\\
\ninfer{\rCon}
{
\OTy{\GG}{\omega\GD_0 + \sum_i \nu_i[\V{\Subst{p}{M}}] \GD_i}{\con{C} \A \V{e}}{\con{D} \A \V{M} \A \V{\Gs}}
}
{
 \con{C} : \forall \V{p} \V{a}.\: \V{\Gt} \pto{\V{\nu}} \con{D} \A \V{p} \A \V{a} &
\{ \OTy{\GG}{\GD_i}{e_i}{\Gt_i[ \V{\Subst{p}{M}}, \V{\Subst{a}{\Gs}}] } \}_i
}
\\
\ninfer{\rCase}
{
  \OTy{\GG}{\mu_0 \GD_0 + \GD}{\CASE~e_0~\OF~\{ \con{C}_i \A \V{x_i} \to e_i \}_i}{\Gt'}
}
{
  \bb
  \OTy{\GG}{\GD_0}{e_0}{\con{D} \A \V{M} \A \V{\Gs}} \\\left\{
  \bb
   \con{C}_i : \forall \V{p}\V{a}. \: \V{\Gt_i} \pto{\V{\nu_i}} \con{D} \A \V{p} \A \V{a} \\
  \OTy{\GG,\V{x_i:\Gt_i[\V{\Subst{p}{M}},\V{\Subst{a}{\Gs}}]}}{\GD, \V{ \Used{x_i}{\mu_0 \nu_i[\V{\Subst{p}{M}}]} }}{e_i}{\Gt'}
\ee
  \right\}_i
\ee
}
\end{gather*}
\caption{The original typing rules adapted for the expressions in \cref{sec:lang}}
\label{fig:original-typing}
\end{figure}

\subsection{Proof of \cref{lemma:simplification-sound}.}
\label{proof:simplification-sound}

The proof is done by the induction on the derivation. 
Let us write $\mc{E}_\Gth$ for $\bigwedge_{\pi \in \dom(\Gth)} (\pi = \Gth(\pi)) \wedge \bigwedge_{\Ga \in \dom(\Gth)} (\Ga \eqty \Gth(\Ga))$.
Then, our goal is to prove that, if $\Simpl{Q}{C}{Q'}{\Gth}$, then 
$Q \wedge Q' \models C\Gth$, $\Disjoint{\dom(\Gth)}{\fuv(Q)}$, 
$\Disjoint{\dom(\Gth)}{\fuv(Q')}$, $Q \wedge C \models Q' \wedge \mc{E}_\Gth$ hold.
Let us write $C \equiv C'$ if both $C \models C'$ and $C' \models C$ hold. 

\Case{\rname{S-Fun}}
This case is straightforward because we have 
$(\Gs \pto{\mu} \Gt) \eqty (\Gs' \pto{\mu'} \Gt') \equiv {\Gs \eqty \Gs' \wedge \mu \le \mu' \wedge \mu' \le \mu \wedge \Gt \eqty \Gt'}$.

\Case{\rname{S-Data}}
Similarly, this case is also straightforward because we have
$(\con{D} \A \V{\mu} \A \V{\Gs}) \eqty (\con{D} \A \V{\mu'} \A \V{\Gs'}) \equiv \V{\mu \le \mu'} \wedge \V{\mu' \le \mu} \wedge \V{\Gs \eqty \Gs'}$.

\Case{\rname{S-Unify}}
In this case, we have $C = \Ga \eqty \Gt \wedge C$, $\Gth = \Gth' \circ [\Subst{\Ga}{\Gt}$, $\Ga \not\in \fuv(\Gt)$, and
$\Simpl{Q}{C[\Subst{\Ga}{\Gt}]}{Q'}{\Gth'}$. 
By the induction hypothesis, we have that $(Q',\Gth')$ is a guess-free solution for $(Q,C[\Subst{\Ga}{\Gt}])$.
Then, we have $Q \models C[\Subst{\Ga}{\Gt}]\Gth'$, and thus $Q \models C\Gth$ by definition of $\Gth$. 
Notice that we have $\dom(\Gth) = \dom(\Gth') \cup \{\Ga\}$.
Since $\Ga$ is a unification type variable, it cannot appear in both $Q$ and $Q'$.
Thus, we have $\Disjoint{\dom(\Gth)}{\fuv(Q)}$ and
$\Disjoint{\dom(\Gth)}{\fuv(Q')}$.
Then, we will show the guess-freeness. From the induction hypothesis, we have $Q \wedge C[\Subst{\Ga}{\Gt}] \models Q' \wedge \mc{E}_{\Gth'}$.
Notice that we have $\mc{E}_{\Gth} \equiv \mc{E}_{\Gth'} \wedge \Ga \eqty \Gt$, and
$C \wedge \Ga \eqty \Gt \models C[\Subst{\Ga}{\Gt}]$.
Thus, we have $Q \wedge \Ga \eqty \Gt \wedge C \models Q' \wedge \mc{E}_{\Gth}$. 

\Case{\rname{S-Triv}}
Similarly, this case is also straightforward because we have
${\Gt \eqty \Gt} \equiv \top$.

\Case{\rEntail}
In this case, we have $C = \phi \wedge Q_\mr{w} \wedge C$, $Q \wedge Q_\mr{w} \models \phi$ and 
$\Simpl{Q}{Q_\mr{w} \wedge C}{Q'}{\Gth}$.
We first show that $(Q',\Gth)$ is a solution for $(Q,C)$.
By the induction hypothesis, we have 
$Q \wedge Q' \models (Q_\mr{w} \wedge C)\Gth$, 
$\Disjoint{\dom(\Gth)}{\fuv(Q)}$, and 
$\Disjoint{\dom(\Gth)}{\fuv(Q')}$. 
By $Q \wedge Q_\mr{w} \models \phi$, we have $(Q \wedge Q_\mr{w}) \Gth\models \phi\Gth$, and thus $Q \wedge Q_\mr{w}\Gth \models \phi\Gth$. 
Thus, we have $Q \wedge Q' \models Q_\mr{w} \Gth \wedge \phi \Gth \wedge \Gth$. 
Then, we prove the guess-freeness of the solution. 
By the induction hypothesis, we have 
$Q \wedge Q_\mr{w} \wedge C \models Q' \wedge \mc{E}_{\Gth}$. 
Then, it trivially holds that $Q \wedge \phi \wedge Q_\mr{w} \wedge C \models Q' \wedge \mc{E}_{\Gth}$.

\Case{\rRem}
Trivial.
\qed 

\subsection{Proof of \cref{lemma:simplification-complete}.}
\label{proof:simplification-complete}

We prove the contraposition of the statement. 
Suppose that we do not have $(Q'',\Gth')$ such that 
$\Simpl{Q}{C}{Q''}{\Gth'}$. Then, it must be the case that $C$ contains $\Gt \eqty \Gt'$ (modulo commutativity of $\eqty$) such that 
either of the following holds. 
\begin{itemize}
 \item $\Gt = \Gt_1 \pto{\mu_1} \Gt_2$ and $\Gt' = \con{D}' \A \V{\mu'} \A \V{\Gs'}$, 
 \item $\Gt = \con{D} \A \V{\mu} \A \V{\Gs}$ and $\Gt' = \con{D}' \A \V{\mu'} \A \V{\Gs'}$ and $\con{D} \ne \con{D}'$, and 
 \item $\Gt = \Ga \ne \Gt'$ and $\Ga \in \fuv(\Gt')$. 
\end{itemize}
In either case, there is no solution for $(Q,C)$ when $Q$ is satisfiable. 
\qed

\subsection{Proof of \cref{theorem:inference-sound}.}
\label{proof:inference-sound}

We prove the statement by using the induction on the structure of $e$. 
Let us write $\Gz$ for unification variables that is either $\Gp$ or $\Ga$.

\Case{$e = x$} 
In this case, we have $\GG(x) = \forall \V{p}\V{a}. Q_x \To \Gt_x$, $\Gt = \Gt_x[\V{\Subst{p}{\pi}},\V{\Subst{a}{\Ga}}]$, 
$\GD = \set{\Used{x}{1}}$, and $C = Q_x[\V{\Subst{p}{\pi}}]$.
By the assumption that $(Q,\Gth)$ is a solution for $C$ under $\top$, 
we have $Q \models Q_x[\V{\Subst{p}{\pi}}]\Gth$.
Here, we have $(\GG\Gth)(x) = \forall \V{p}\V{a}. Q_x\Gth \To \Gt_x\Gth$, 
where we can assume that $\V{p},\V{a} \not\in \ftv(\Gth(\Gz))$ for any $\Gz \in \dom(\Gth)$, and thus 
$\Gt\Gth$ can be written as $\Gt_x\Gth[\VSubst{p}{\pi\Gth},\VSubst{a}{\Ga\Gth}]$. Thus, by using \rVar, we have $\NTy{\GG\Gth}{\GD\Gth}{e}{\Gt\Gth}{Q}$.

\Case{$e = \Gl x.e_1$} 
In this case, we have $\ITy{\GG, x:\Ga}{\GD, \Used{x}{M}}{e_1}{\Gt_1}{C_1}$, $\Gt = \Ga \pto{\pi} \Gt_1$
and $C = C_1 \wedge M \le \pi$.
Suppose that there is a solution $(Q,\Gth)$ for $(\top, C)$. By definition, $(Q,\Gth)$ is 
also a solution for $(\top,C_1)$.
Thus, by the induction hypothesis, we have $\NTy{(\GG, x:\Ga)\Gth}{\GD\Gth, \Used{x}{M\Gth}}{e_1}{\Gt_1\Gth}{Q}$. 
Since we have $Q \models M\Gth \le \pi\Gth$ by $(Q,\Gth)$ is a solution for $(\top,C)$, 
by \cref{lemma:weakening-multiplicity}, we have $\NTy{(\GG, x:\Ga)\Gth}{\GD\Gth, \Used{x}{\pi\Gth}}{e_1}{\Gt_1\Gth}{Q}$. 
Thus, by using \rAbs, we have $\NTy{\GG\Gth}{\GD\Gth}{\Gl x.e_1}{(\Ga \pto{\pi} \Gt_1)\Gth}{Q}$.

\Case{$e = e_1 \A e_2$} In this case, we have
$\ITy{\GG}{\GD_1}{e_1}{\Gt_1}{C_1}$, $\ITy{\GG}{\GD_2}{e_2}{\Gt_2}{C_2}$, 
$\GD = \GD_1 + \pi \GD_2$, 
$\Gt = \Gb$, and 
$C = C_1 \wedge C_2 \wedge \Gt_1 \eqty (\Gt_2 \pto{\pi} \Gb)$.
Let $(Q,\Gth)$ be a solution for $(\top,C)$.
By definition $(Q,\Gth)$ is a solution for $(\top,C_i)$ for each $i = 1,2$. 
Thus, by the induction hypothesis, we have $\NTy{\GG\Gth}{\GD_i\Gth}{e_i}{\Gt_i}{Q}$ for each $i$.
Since we have $Q \models \Gt_1\Gth \eqty (\Gt_2 \pto{\pi} \Gb)\Gth$ by $(Q,\Gth)$ is a solution for $(\top,C)$, 
by using {\rApp} and \rname{Eq}, we have $\NTy{\GG\Gth}{\GD\Gth}{e}{\Gt\Gth}{Q}$.

\Case{$e = \con{C} \A \V{e}$}
We omit the discussion for the case because it is similar to the above cases.

\Case{$e = \CASE~e_0~\{ \con{C}_i \A \V{x_i} \to e_i \}_i$}
In this case, we have 
$\ITy{\GG}{\GD_0}{e_0}{\Gt_0}{C_0}$, and for each $i$, 
\(\con{C}_i : \forall \V{p}\V{a}. \: \V{\Gt_i} \pto{\V{\nu_i}} \con{D} \A \V{p} \A \V{a}\), 
$\ITy{\GG,\V{x_i:\Gt_i[\VSubst{p}{\pi_i},\VSubst{a}{\Ga_i}]}}{\GD_i, \V{\Used{x_i}{M_i}}}{e_i}{\Gt'_i}{C_i}$, 
$\Gt = \Gb$, 
$\GD = \pi_0 \GD_0 + \bigsqcup_i \GD_i$,
\( 
C = C_0 \wedge \bigwedge_i \bigl( C_i \wedge \Gb \eqty \Gt_i \wedge (\Gt_0 \eqty \con{D} \A \V{\pi_i} \A \V{\Ga_i}) \wedge \bigwedge_j M_{ij} \le \pi_0 \nu_{ij}[\VSubst{p}{\pi_i}] \bigr)
\).
Let $(Q,\Gth)$ be a solution for $(\top,C)$, which then is also a solution for $(\top,C_i)$.
Thus, by the induction hypothesis, we have 
$\NTy{\GG\Gth}{\GD_0\Gth}{e_0}{\Gt_0\Gth}{Q}$ and 
$\NTy{\GG\Gth,\V{x_i:\Gt_i[\VSubst{p}{\pi_i},\VSubst{a}{\Ga_i}]\Gth}}{\GD_i\Gth, \V{\Used{x_i}{M_i\Gth}}}{e_i}{\Gt'_i\Gth}{Q}$.
Since we have $Q \models (M_{ij} \le \pi_0 \nu_{ij}[\VSubst{p}{\pi_i}])\Gth$, by \cref{lemma:weakening-multiplicity}, 
we have $\NTy{\GG\Gth,\V{x_i:\Gt_i\Gth[\VSubst{p}{\pi_i\Gth},\VSubst{a}{\Ga_i\Gth}]}}{\GD_i\Gth, \V{\Used{x_i}{\nu_{ij}[\VSubst{p}{\pi_i\Gth}]}}}{e_i}{\Gt'_i\Gth}{Q}$;
notice here that $\V{\Gt_i\Gth[\VSubst{p}{\pi_i\Gth},\VSubst{a}{\Ga_i\Gth}] = \Gt_i[\VSubst{p}{\pi_i},\VSubst{a}{\Ga_i}]\Gth}$
because of the freshness of $\V{p}$ and $\V{a}$.
Thus, by using {\rCase} and \rname{Eq}, 
we have $\NTy{\GG\Gth}{\GD\Gth}{e}{\Gt\Gth}{Q}$.
\qed

\subsection{Proof of \cref{thm:inference-complete}.}
\label{proof:inference-complete}
We prove the statement by using induction on the structure of $e$ (which corresponds to the derivation of  
$\ITy{\GG}{\GD}{e}{\Gt}{C}$ because the rules are syntax-directed). 
We will use the property that $\fuv(C,\GD) \subseteq X \cup \fuv(\GG)$.

\Case{$e = x$}
In this case, we have $\GG(x) = \forall \V{p}\V{a}. Q_x \To \Gt_x$, $\Gt = \Gt_x[\V{\Subst{p}{\pi}},\V{\Subst{a}{\Ga}}]$, 
$\GD = \set{\Used{x}{1}}$, and $C = Q_x[\V{\Subst{p}{\pi}}]$.
By assumption, we have $Q \models \Gt' \eqty \Gt_x\Gth'[\V{\Subst{p}{\mu}},\V{\Subst{a}{\Gs}}]$, $Q' \models Q_x\Gth'[\V{\Subst{p}{\mu}}]$
and $Q' \models \set{ \Used{x}{1}} \le \GD'$.
Take $\Gth = \Gth' \uplus \set{\V{\Subst{\Gp}{\mu}},\V{\Subst{\Ga}{\Gs}}}$. 
Then, $\Gt \Gth = \Gt_x\Gth'[\V{\Subst{p}{\mu}},\V{\Subst{a}{\Gs}}]$ and thus 
$Q' \models \Gt \Gth \eqty \Gt'$. 
Then, we have $C \Gth = Q_x\Gth'[\V{\Subst{p}{\mu}}]$, which implies $(Q', \Gth)$ is a solution for $(\top,C)$. 

\Case{$e = \Gl x.e_1$}
In this case, we have $\Gt = \Ga \pto{\Gp} \Gt_1$, 
$\ITy{\GG,x:\Ga}{\GD, \Used{x}{M} }{e_1}{\Gt_1}{C_1}$ and
$C = C_1 \wedge M \le \Gp$. 
Let $X_1$ be the unification variables introduced in the subderivation. 
Notice that $\Gp, \Ga \not\in X_1$ and $X = \set{\Gp,\Ga} \cup X_1$.

By assumption, we have $\NTy{\GG\Gth'}{\GD'}{\Gl x. e_1}{\Gt'}{Q'}$, 
which implies $Q' \models \Gt' \eqty (\Gs' \pto{\mu'} \Gt_1')$, 
$\NTy{\GG\Gth', x:\Gs'}{\GD'_1}{e_1}{\Gt_1'}{Q'}$, 
and $Q' \models \GD'_1 = (\GD', \Used{x}{\mu'})$. 
Let $\Gth'_1$ be $\Gth' \uplus \set{ \Ga \mapsto \Gs' }$. Notice that $\dom(\Gth'_1) \subseteq \fuv(\GG,x:\Ga)$ holds.

By the induction hypothesis, there exists $\Gth_1$ such that 
$\dom(\Gth_1) \setminus \dom(\Gth_1') \subseteq X_1$, 
$(Q', \Gth_1)$ is a solution for 
$(\top,C_1)$, $Q' \models \Gth_1|_{\dom(\Gth_1')} = \Gth_1'$, $Q' \models \Gt_1 \Gth_1 \eqty \Gt_1'$
and $Q' \models (\GD, \Used{x}{M})\Gth_1 \le (\GD', \Used{x}{\mu'})$. 
Let $\Gth$ be $\Gth_1 \uplus \{ \pi \mapsto \mu' \}$. 
Here, $\dom(\Gth) \setminus \dom(\Gth') = (\dom(\Gth_1) \cup \set{\pi}) \setminus (\dom(\Gth'_1) \setminus \set{\Ga}) \subseteq (\dom(\Gth_1) \setminus \dom(\Gth'_1)) \cup \set{\Gp, \Ga} \subseteq X_1 \cup \set{\Gp,\Ga} = X$ holds. 
Notice that $Q' \models \Gth(\Ga) = \Gth'(\Ga)$ 
as $\Ga \in \dom(\Gth'_1)$. Thus, we have $Q' \models \Gth|_{\dom(\Gth')} = \Gth'$. 
Since $(Q',\Gth_1)$ is a solution for $C_1$, we have $Q' \models C_1 \Gth_1$ and thus $Q' \models C_1 \Gth$.
By $Q' \models (\GD, \Used{x}{M})\Gth_1 \le (\GD', \Used{x}{\mu'})$, 
we have $Q' \models \GD\Gth_1 \le \GD'$ and $Q' \models M\Gth_1 \le \mu$. Since $\GD$ and $M$ cannot contain $\pi$, 
we have $Q' \models \GD\Gth \le \GD'$ and $Q' \models (M \le \pi) \Gth$. 
Hence, by $Q' \models C_1 \Gth$ and $Q' \models (M \le \pi) \Gth$, we have $Q' \models C\Gth$, which implies that $(Q',\Gth)$ is 
a solution for $(\top,C)$.

\Case{$e = e_1 \A e_2$}
In this case, we have 
$\ITy{\GG}{\GD_1}{e_1}{\Gt_1}{C_1}$, 
$\ITy{\GG}{\GD_2}{e_2}{\Gt_2}{C_2}$, 
$\Gt = \Gb$, 
$\GD = \GD_1 + \pi \GD_2$, and 
$C = C_1 \wedge C_2 \wedge \Gt_1 \eqty (\Gt_2 \pto{\pi} \Gb)$
where $\pi$ and $\Gb$ are fresh unification variables. 
Let $X_1$ and $X_2$ be the sets of unification variables introduced in the subderivations for $e_1$ and $e_2$, respectively.
Notice that $X = X_1 \uplus X_2 \uplus \set{\pi, \Gb}$.

By assumption, we have $\NTy{\GG\Gth'}{\GD'}{e_1 \A e_2}{\Gt'}{Q'}$. 
This means that we have 
$\NTy{\GG\Gth'}{\GD'_1}{e_1}{\Gt'_1}{Q'}$, 
$\NTy{\GG\Gth'}{\GD'_2}{e_2}{\Gt'_2}{Q'}$, 
$Q' \models \Gt'_1 \eqty (\Gt'_2 \pto{\mu'} \Gt')$, and 
$Q' \models \GD' = \GD'_1 + \mu'\GD'_2$.

By the induction hypothesis, for each $i = 1,2$, 
we have $\Gth_i$ such that 
$\dom(\Gth_i) \setminus \dom(\Gth') \subseteq X_i$, 
$(Q',\Gth_i)$ is a solution for $(\top,C_i)$, 
$Q' \models \Gth_i|_{\dom(\Gth')} = \Gth'$, 
$Q' \models \Gt_i \Gth_i \eqty \Gt'_i$, and 
$Q' \models \GD_i \Gth_i \le \GD'_i$.
Let us define $\Gth$ as $\Gth_1|_{X_1} \uplus \Gth_2|_{X_2} \uplus \Gth' \uplus \set{ \Subst{\pi}{\mu}, \Subst{\Gb}{\Gt'} }$.
Clearly, we have $\Gth|_{\dom(\Gth')} = \Gth'$ and $\Gt \Gth = \Gb \Gth = \Gt'$. 
By construction, we have $\fuv(C_i, \GD_i) \subseteq X_i \cup \fuv(\GG)$.
Thus, we have $C_i \Gth = C_i \Gth_i$, and thus $\GD_i \Gth = \GD_i \Gth_i$.
By $Q' \models \GD_i \Gth_i \le \GD'_i$, 
we have $Q' \models (\GD_1 + \pi \GD_2) \le \GD'_1 + \mu'\GD'_2$.
By $Q' \models \Gt_i \Gth_i \eqty \Gt'_i$ and $\Gt'_1 \eqty (\Gt'_2 \pto{\mu'} \Gt')$, 
we have $Q' \models (\Gt_1 \eqty (\Gt_2 \pto{\pi} \Gb))\Gth$.
Thus, we have $Q' \models C\Gth$, meaning that $Q'$ is a solution for $(\top,C)$.

\Case{$e = \con{C} \A \V{e}$}
We omit the discussion for the case because it is similar to the above cases.

\Case{$e = \CASE~e_0~\OF~\{ \con{C} \A \V{x_i} \to e_i \}_i$}
In this case, we have 
$\ITy{\GG}{\GD_0}{e_0}{\Gt_0}{C_0}$,
and for each $i$, 
$\con{C}_i : \forall \V{p}\V{a}. \: \V{\Gt_i} \pto{\V{\nu_i}} \con{D} \A \V{p} \A \V{a}$,
$\ITy{\GG,\V{x_i:\Gt_i[\Subst{p}{\pi_i},\Subst{a}{\Ga_i}]}}{\GD_i, \V{\Used{x_i}{M_i}}}{e_i}{\Gu_i}{C_i}$, 
$\Gt = \Gb$, 
$\GD = \pi_0 \GD_0 + \bigsqcup_i \GD_i$, and
\(C = C_0 \wedge \bigwedge_i ( C_i \wedge \Gb \eqty \Gu_i \wedge (\Gt_0 \eqty \con{D} \A \V{\pi_i} \A \V{\Ga_i}) \wedge \bigwedge_j M_{ij} \le \pi_0 \nu_{ij}[\Subst{p}{\pi_i}] )\).
Let $X_i$ be the set of unification variables introduced in the subderivation for each $e_i$, including $i = 0$. 
Then, we have $X = X_0 \uplus \biguplus_i X_i \uplus \set{\V{\Ga},\V{\Gp},\Gb}$.

By assumption, we have $\NTy{\GG\Gth'}{\GD'}{e}{\Gt'}{Q'}$.
This means we have $\NTy{\GG\Gth'}{\GD'_0}{e}{\Gt_0'}{Q'}$, and, for each $i$, 
\(\NTy{\GG\Gth',\V{x_i:\Gt_i[\V{\Subst{p}{\mu}},\V{\Subst{a}{\Gs}}]}}{\GD_i''}{e_i}{\Gu_i'}{Q'}\), 
$Q' \models \Gt_0 \eqty \con{D} \A \V{\mu} \A \V{\Gs}$, 
$Q' \models \GD_i'' = \GD'_i, \V{\Used{x_i}{\mu_0 \nu_i[\V{\Subst{p}{\mu}}]}}$, 
$Q' \models \GD' = \mu_0 \GD_0' + \bigsqcup_i \GD_i'$, and 
$Q' \models \Gt' \eqty \Gu_i'$.

By the induction hypothesis, for each $i$, there exists $\Gth_i$ such that 
$\dom(\Gth_i) \setminus \dom(\Gth') \subseteq X_i$, 
$(Q',\Gth_i)$ is a solution for $(\top,C_i)$,
$Q' \models \Gth_i |_{\dom(\Gth')} = \Gth'$, 
$Q' \models \Gu_i \Gth_i \eqty \Gu_i'$, and 
\(
Q' \models (\GD_i,\V{\Used{x_i}{M_i}}) \Gth_i \le (\GD_i', \V{\Used{x_i}{\mu_0 \nu_i[\V{\Subst{p}{\mu}}]}})
\).
Also, by the induction hypothesis, there exists $\Gth_0$ such that 
$\dom(\Gth_0) \setminus \dom(\Gth') \subseteq X_0$, 
$(Q',\Gth_0)$ is a solution for $(\top,C_0)$,
$Q' \models \Gth_0 |_{\dom(\Gth')} = \Gth'$, 
$Q' \models \Gt_0 \Gth_0 \eqty \Gt_0'$, and 
$Q' \models \GD_0 \Gth_0 \le \GD_0'$.
Let us define $\Gth$ as follows. 
\[
\Gth = \Gth' \uplus \Gth_0|_{X_0} \uplus \biguplus_i \Gth_i|_{X_i} \uplus \{ \VSubst{\pi}{\mu}, \VSubst{\Ga}{\Gs}, \Subst{\Gb}{\Gt'} \}
\]
Then, the rest of the discussion is similar to the other cases. \qed

 \else

\vfill

{\small\medskip\noindent{\bf Open Access} This chapter is licensed under the terms of the Creative Commons\break Attribution 4.0 International License (\url{http://creativecommons.org/licenses/by/4.0/}), which permits use, sharing, adaptation, distribution and reproduction in any medium or format, as long as you give appropriate credit to the original author(s) and the source, provide a link to the Creative Commons license and indicate if changes were made.}

{\small \spaceskip .28em plus .1em minus .1em The images or other third party material in this chapter are included in the chapter's Creative Commons license, unless indicated otherwise in a credit line to the material.~If material is not included in the chapter's Creative Commons license and your intended\break use is not permitted by statutory regulation or exceeds the permitted use, you will need to obtain permission directly from the copyright holder.}

\medskip\noindent\includegraphics{cc_by_4-0.eps}
\fi

\end{document}